\newcolumntype{L}[1]{>{\centering\arraybackslash}m{#1}}
\newtheorem{problem}[theorem]{Problem}
\newclass{\EXPTIME}{EXPTIME}
\newcommand{\Q}{\mathbb{Q}}
\newcommand{\Z}{\mathbb{Z}}
\newcommand{\N}{\mathbb{N}}
\renewcommand{\PCP}{\rm PCP\xspace}
\newcommand{\FG}[1][\Gamma]{{\rm FG}(#1)}
\newcommand{\ba}{\mathbf a}
\newcommand{\bb}{\mathbf b}
\newcommand{\bv}{\mathbf v}
\newcommand{\by}{\mathbf y}
\newcommand{\bu}{\mathbf u}
\newcommand{\overbar}[1]{\mkern 1.5mu\overline{\mkern-1.5mu#1\mkern-1.5mu}\mkern 1.5mu}
\newcommand{\gltwoz}{{\rm GL}\ensuremath{(2,\mathbb{Z})}\xspace}
\newcommand{\sltwoz}{{\rm SL}\ensuremath{(2,\mathbb{Z})}\xspace}
\newcommand{\slfourz}{{\rm SL}\ensuremath{(4,\mathbb{Z})}\xspace}
\newcommand{\slthreez}{{\rm SL}\ensuremath{(3,\mathbb{Z})}\xspace}
\newcommand{\slthreek}{{\rm SL}\ensuremath{(3,\mathbb{K})}\xspace}
\newcommand{\slthreeq}{{\rm SL}\ensuremath{(3,\mathbb{Q})}\xspace}
\newcommand{\slthreec}{{\rm SL}\ensuremath{(3,\mathbb{C})}\xspace}
\newcommand{\heis}{{\rm H}\ensuremath{(3,\mathbb{Z})}\xspace}
\newcommand{\heisr}{{\rm H}\ensuremath{(3,\mathbb{R})}\xspace}
\newcommand{\heisq}{{\rm H}\ensuremath{(3,\mathbb{Q})}\xspace}
\newcommand{\heisnq}{{\rm H}\ensuremath{(n,\mathbb{Q})}\xspace}
\newcommand{\heisc}{{\rm H}\ensuremath{(3,\mathbb{C})}\xspace}
\newcommand{\HQS}{$\mathbb{H}(\mathbb{Q})^{2\times 2}$\xspace}
\newcommand{\heisk}{{\rm H}\ensuremath{(3,\mathbb{K})}\xspace}
\newcommand{\psltwoz}{{\rm PSL}\ensuremath{(2,\mathbb{Z})}\xspace}
\DeclareMathOperator{\tr}{tr}
\title{On the Identity Problem for the Special Linear Group and the Heisenberg Group}
\author{Sang-Ki Ko}{Korea Electronics Technology Institute, South Korea}{sangkiko@keti.re.kr}{}{}
\author{Reino Niskanen}{Department of Computer Science, University of Liverpool, UK}{r.niskanen@liverpool.ac.uk}{}{}
\author{Igor Potapov}{Department of Computer Science, University of Liverpool, UK}{potapov@liverpool.ac.uk}{}{}
\authorrunning{S-K. Ko, R. Niskanen and I.Potapov}%mandatory. First: Use abbreviated first/middle names. Second (only in severe cases): Use first author plus 'et. al.'
\subjclass{\ccsdesc[100]{Theory of computation~Models of computation}, \ccsdesc[100]{Computing methodologies~Symbolic and algebraic manipulation~Symbolic and algebraic algorithms}, \ccsdesc[100]{Theory of computation~Semantics and reasoning~Program reasoning~Program verification}}% mandatory: Please choose ACM 2012 classifications from https://www.acm.org/publications/class-2012 or https://dl.acm.org/ccs/ccs_flat.cfm . E.g., cite as "General and reference $\rightarrow$ General literature" or \ccsdesc[100]{General and reference~General literature}. 
\keywords{matrix semigroup, identity problem, special linear group, Heisenberg group, decidability}%mandatory
\begin{document}

\maketitle

\begin{abstract}
We study the identity problem for matrices, i.e., whether the identity matrix is in a semigroup  generated by a given set of generators.
In particular we consider the identity problem for the \emph{special linear group} following recent
$\NP$-completeness result for \sltwoz and the undecidability for \slfourz generated by $48$ matrices. First we show that there is no embedding from pairs of words into $3\times3$ integer matrices with determinant one, i.e., into {\slthreez} extending previously known result that 
there is no embedding into $\mathbb{C}^{2\times 2}$. Apart from theoretical importance of the result it can be seen as a strong evidence that the computational problems in \slthreez are decidable. The result excludes the most natural possibility of encoding 
the Post correspondence problem into {\slthreez}, where the matrix products extended by the right multiplication correspond to the Turing machine simulation.
Then we show that the \emph{identity problem} is decidable in polynomial time for
an important subgroup of ${\rm SL}(3,\mathbb{Z})$, the Heisenberg group ${\rm H}(3,\mathbb{Z})$.
Furthermore, we extend the decidability result for ${\rm H}(n,\mathbb{Q})$ in any dimension $n$.
Finally we are tightening the gap on decidability question for this long standing open problem
by improving the undecidability result for the identity problem in \slfourz
substantially reducing the bound on the size of the generator set from $48$ to $8$ 
by developing a novel reduction technique.
\end{abstract}

\section{Introduction}
The dynamics of many systems can be represented by matrices and matrix products. The analysis of such systems
lead to solving reachability questions in matrix semigroups which is essential part in verification procedures, control theory questions, biological systems predictability, security etc. \cite{BN16,BM04,COW13,COW16,ding2015,GOW15,KLZ16,MT17,OPW15,OW14,OW14a}.
Many nontrivial algorithms for decision problems on matrix semigroups have been developed for matrices under different constraints on the dimension, the size of a generating set or for specific subclasses of matrices: e.g. commutative matrices \cite{BBC+96}, row-monomial matrices \cite{LP04} or $2 \times 2$ matrix semigroups generated by non-singular integer matrices~\cite{PS17}, upper-triangular integer matrices \cite{Honkala14}, matrices from the special linear group \cite{BHP17,CK05}, etc.

Despite visible interest in this research domain, we still see a significant lack of algorithms and complexity results for answering decision problems in matrix semigroups. Many computational problems for matrix (semi)groups are computationally hard starting from dimension two and very often become undecidable from dimensions three or four even in the case of integer matrices. The central decision problem in matrix semigroups is the membership problem, which was originally considered by A. Markov in 1947 \cite{Markov47}. Let $S = \langle G \rangle$ be a matrix semigroup finitely generated by a generating set of square matrices~$G$. The {\em membership problem} is to decide whether or not a given matrix $M$ belongs to the matrix semigroup $S$. By restricting~$M$ to be the identity matrix we call the problem the {\em identity problem}. 

\begin{problem}[Identity problem]
Let $S = \langle G \rangle$, where $G$ is a finite set of $n$-dimensional matrices over $\mathbb{K}=\Z,\mathbb{Q},\mathbb{R},\mathbb{C},\ldots$. Is the identity matrix in the semigroup, i.e., does $\bm{I}\in S$ hold?
\end{problem}

The identity problem is computationally equivalent to another fundamental problem -- the \emph{subgroup problem} (i.e., to decide whether a semigroup contains a subgroup)
as any subset of matrices, which can form a product leading to the identity also generate a group \cite{CK05}
\footnote{The product of matrices which is equal to the identity is still the identity element after a cyclic shift, so
every element from this product has the inverse.}.

The decidability status of the identity problem was unknown for a long time for matrix semigroups of any dimension, see Problem 10.3 in ``Unsolved Problems in Mathematical Systems and Control Theory'' \cite{BM04}, but it was shown in \cite{BP10} to be undecidable for $48$ matrices from $\mathbb{Z}^{4 \times 4}$ by proving that the identity correspondence problem (a variant of the Post correspondence problem over a group alphabet) is undecidable, and embedding pairs of words over free group alphabet into \slfourz
as two blocks on the main diagonal and by a morphism $f$ as follows
$f(a)=\begin{psmallmatrix}1&2\\0&1\end{psmallmatrix}$,
$f(a^{-1})=\begin{psmallmatrix}1&-2\\0&1\end{psmallmatrix}$, 
$f(b)=\begin{psmallmatrix}1&0\\2&1\end{psmallmatrix}$ and
$f(b^{-1})=\begin{psmallmatrix}1&0\\-2&1\end{psmallmatrix}$.
In the seminal paper of Paterson in 1970, see \cite{Paterson70}, an injective morphism from pairs of words in alphabet $\Sigma=\{a,b\}$ into $3\times3$ integral matrices,
$g(u,v)=\begin{psmallmatrix}n^{|u|}&0&0\\0&n^{|v|}&0\\\sigma(u)&\sigma(v)&1\end{psmallmatrix}$ (where $\sigma$ represents each word as an $n$-adic number),
was used to prove undecidability of mortality and which later led to many undecidability results of matrix problems in dimension three, e.g., \cite{CHK99,HHH07}. Finding new injective morphisms is hard, but having them gives an opportunity to prove new undecidability results.

In 1999, Cassaigne, Harju and Karhum\"aki significantly boosted the research on finding algorithmic solutions for $2 \times 2$ matrix semigroups by showing that there is no injective semigroup morphism from pairs of words over any finite alphabet (with at least two elements) into complex $2 \times 2$ matrices \cite{CHK99}. This result led to substantial interest in finding algorithmic solutions for such problems as the identity problem, mortality, membership, vector reachability, freeness etc. for $2 \times 2$ matrices.

For example, in 2007 Gurevich and Schupp~\cite{GS07} showed that the membership problem is decidable in polynomial time for the finitely generated subgroups of the modular group and later in 2017 Bell, Hirvensalo and Potapov proved that the identity problem for a semigroup generated by matrices from \sltwoz is $\NP$-complete by developing a new effective technique to operate with compressed word representations of matrices and closing the gap on complexity improving the original $\EXPSPACE$ solution proposed in 2005 \cite{CK05}. The first algorithm for the membership problem which covers the cases beyond \sltwoz and \gltwoz has been proposed in \cite{PS17} and provides the solution for a semigroup generated by non-singular $2 \times 2$ integer matrices. Later, these techniques have been applied to build another algorithm to solve the membership problem in \gltwoz extended by singular matrices~\cite{PS17a}. The current limit of decidability is standing for $2 \times 2$ matrices which are defined over hypercomplex numbers (quaternions) for which most of the problems have been shown to be undecidable in \cite{BP08} and correspond to reachability problems for 3-sphere rotation.

In our paper, we show that there is no embedding from pairs of words into $3\times3$ integer matrices
with determinant one (i.e., into \slthreez), % which is a smaller fragment than the domain of Paterson's injective morphism $g$),
which is a strong evidence that computational problems in
\slthreez are decidable as all known undecidability techniques for low-dimensional matrices
are based on encoding of Turing machine computations via the Post correspondence problem (\PCP)
which cannot be applied in \slthreez following our result.
In case of the \PCP encoding the matrix products extended by the right multiplication correspond to the Turing
machine simulation and the only known proof alternatives are recursively enumerable sets and
Hilbert's tenth problem that provide undecidability for matrix equations, but of very high dimensions \cite{BHH+08,CH14,Honkala15}. % see ref below:
%Juha Honkala: Products of matrices and recursively enumerable sets. J. Comput. Syst. Sci. 81(2): 468-472 (2015)
%Emilie Charlier, Juha Honkala: The freeness problem over matrix semigroups and bounded languages. Inf. Comput. 237: 243-256 (2014)
% Paul Bell, Vesa Halava, Tero Harju, Juhani Karhumäki, Igor Potapov: Matrix Equations and Hilbert's Tenth Problem. IJAC 18(8): 1231-1241 (2008)

So in analogy to 1999 result from \cite{CHK99} on non-existence of embedding into $2 \times 2$ matrix semigroups over complex numbers, we expand a horizon of decidability area for matrix semigroups and show that there is no embedding from a set of pairs of words over a semigroup alphabet to any matrix semigroup in \slthreez. It follows almost immediately that there is no embedding from a set of pairs of group words into $\mathbb{Z}^{3 \times 3}$.\footnote{The idea that such result may hold was motivated by analogy from combinatorial topology, where the identity problem is decidable for the braid group $B_3$ which is the universal central extension of the modular group \psltwoz~\cite{Potapov13}, an embedding for a set of pairs of words into the braid group $B_5$ exists, see \cite{BD99}, and non-existence of embeddings were proved for $B_4$ in \cite{Akimenkov91}. So \slthreez was somewhere in the goldilocks zone between $B_3$ and $B_5$.} The matrix semigroup in \slthreez has attracted a lot of attention recently as it can be represented by a set of generators and relations~\cite{CRW92,Conder16} similar to \sltwoz where it was possible to convert numerical problems into symbolic problems and solve them with novel computational techniques; see \cite{BHP17,CK05,PS17,PS17a}. Comparing to the relatively simple representation of $\sltwoz=\langle S,T \mid S^4= \bm{I}_2, (ST)^6=\bm{I}_2 \rangle$,
where $S=\begin{psmallmatrix}0 & -1 \\ 1 & 0\end{psmallmatrix}$ and $T=\begin{psmallmatrix}1 & 1 \\ 0 & 1\end{psmallmatrix}$
the case of $\slthreez =\langle X,Y,Z \mid X^3=Y^3=Z^2=(XZ)^3=(YZ)^3=(X^{-1}ZXY)^2=(Y^{-1}ZYX)^2=(XY)^6=\bm{I}_3 \rangle$,
where 
\begin{align*}
X=\begin{psmallmatrix}0 & 1 & 0 \\ 0 & 0 & 1 \\ 1 & 0 & 0\end{psmallmatrix},\ Y=\begin{psmallmatrix}1 & 0 & 1 \\ 0 & -1 & -1 \\ 0 & 1 & 0\end{psmallmatrix} \text{ and }Z=
\begin{psmallmatrix}0 & 1 & 0 \\ 1 & 0 & 0 \\ -1 & -1 & -1\end{psmallmatrix},
\end{align*}
looks more challenging containing both non-commutative and partially commutative elements.

As the decidability status of the \emph{identity problem} in dimension three is still a long standing open problem, we look for an important subgroup of ${\rm SL}(3,\mathbb{Z})$, the Heisenberg group ${\rm H}(3,\mathbb{Z})$, for which the \emph{identity problem} could be decidable following our result on non-existence of embedding.
The Heisenberg group is an important subgroup of \slthreez which is useful in the description of one-dimensional quantum mechanical systems~\cite{Brylinski93,GU14,Kostant70}.
We show that the \emph{identity problem} for a matrix semigroup generated by matrices from ${\rm H}(3,\mathbb{Z})$ and even ${\rm H}(3,\mathbb{Q})$ is decidable in polynomial time. Furthermore, we extend the decidability result for ${\rm H}(n,\mathbb{Q})$ in any dimension $n$.

Moreover we tighten the gap between (un)decidability results
for the \emph{identity problem} substantially reducing the bound on the size of
the generator set from $48$ (see \cite{BP10}) to $8$ in \slfourz by developing a novel reduction technique.

\section{Preliminaries}

%\paragraph{Words}
A {\em semigroup} is a set equipped with an associative binary operation. Let $S$ be a semigroup and $G$ be a subset of $S$. We say that a semigroup $S$ is {\em generated} by a subset~$G$ of~$S$ if each element of $S$ can be expressed as a composition of elements of $G$. In this case, we call $G$ the {\em generating set} of $S$. Given an \emph{alphabet} $\Sigma = \{a_1,a_2, \ldots, a_m\}$, a finite \emph{word} $u$ is an element of semigroup $\Sigma^*$. The \emph{empty word} is denoted by $\varepsilon$. The length of a finite word~$u$ is denoted by $|u|$ and $|\varepsilon|=0$. 

Let $\Gamma=\{a_1,a_2,\ldots,a_\ell,a^{-1}_1,a^{-1}_2,\ldots,a^{-1}_\ell\}$ be a generating set of a free group $\FG$. The elements of $\FG$ are all \emph{reduced} words over $\Gamma$, i.e., words not containing $a_ia_i^{-1}$ or $a_i^{-1}a_i$ as a subword. In this context, we call $\Gamma$ a finite \emph{group alphabet}, i.e., an alphabet with an involution. The multiplication of two elements (reduced words) $u,v\in \FG$ corresponds to the unique reduced word of the concatenation $uv$. This multiplication is called \emph{concatenation} throughout the paper. Later in the encoding of words over a group alphabet we denote $a^{-1}$ by $\overline{a}$ and the alphabet of inverse letters is denoted as $\Sigma^{-1}=\{a^{-1}\mid a\in \Sigma\}$. 

In the next lemma, we present an encoding from an arbitrary group alphabet to a binary group alphabet used in Section~\ref{sec:IPfour}. The result is crucial as it allows us to present the results of the above section over the smallest domain.

\begin{lemma}[Birget, Margolis \cite{BM08}]\label{lem:groupEnc} Let $\Gamma = \{z_1,\ldots, z_\ell,\overbar{z_1},\ldots,\overbar{z_\ell}\}$ be a group alphabet and $\Gamma_2=\{c,d,\overbar{c},\overbar{d}\}$ be a binary 
group alphabet. Define the mapping $\alpha:\Gamma \to \FG[\Gamma_2]$ by:
\begin{align*}
\alpha (z_i) = c^i d\overbar{c}^{i}, \qquad \alpha (\overbar{z_i}) = c^i\overbar{d}\overbar{c}^{i},
\end{align*}
where $1 \leq i \leq \ell$. Then $\alpha$ is a monomorphism, that is, an injective morphism. Note that $\alpha$ can be extended to domain $\FG$ in the usual way.
\end{lemma}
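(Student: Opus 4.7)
My plan is to first check that $\alpha$ extends to a well-defined group homomorphism $\FG[\Gamma]\to\FG[\Gamma_2]$, and then to prove injectivity by showing that the reduced form of $\alpha(w)$ is non-empty whenever $w\in\FG[\Gamma]$ is a non-empty reduced word. Well-definedness is immediate once I verify that $\alpha$ respects the involution, namely $\alpha(z_i)^{-1}=(c^i d\overbar{c}^i)^{-1}=c^i\overbar{d}\,\overbar{c}^i=\alpha(\overbar{z_i})$, so that $y_1\cdots y_n\mapsto\alpha(y_1)\cdots\alpha(y_n)$ followed by free reduction is a homomorphism; injectivity is then equivalent to triviality of the kernel.

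Fix a non-empty reduced word $w=z_{i_1}^{\epsilon_1}z_{i_2}^{\epsilon_2}\cdots z_{i_n}^{\epsilon_n}$, with each $\epsilon_k\in\{+1,-1\}$, and consider the unreduced concatenation
\[
W:=(c^{i_1}d^{\epsilon_1}\overbar{c}^{i_1})(c^{i_2}d^{\epsilon_2}\overbar{c}^{i_2})\cdots(c^{i_n}d^{\epsilon_n}\overbar{c}^{i_n}).
\]
The core claim I would prove is that no occurrence of $d^{\pm 1}$ in $W$ can be removed by free reduction. Since no letter in $\{c,\overbar{c}\}$ is the inverse of any letter in $\{d,\overbar{d}\}$, a $d^{\pm1}$ can only be cancelled against another $d^{\pm1}$. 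Fix any sequence of elementary reductions applied to $W$ and look at the very first step that removes a $d$-letter, say $d^{\epsilon_k}$ cancelling against $d^{\epsilon_{k'}}$ with $k<k'$. Up to that moment every $d^{\epsilon_j}$ with $k<j<k'$ is still present and still lies strictly between the two candidates, so adjacency forces $k'=k+1$. Adjacency of $d^{\epsilon_k}$ and $d^{\epsilon_{k+1}}$ in turn forces the intervening block-boundary factor $\overbar{c}^{i_k}c^{i_{k+1}}$ to reduce completely, hence $i_k=i_{k+1}$; and the cancellation itself requires $\epsilon_k=-\epsilon_{k+1}$. But together these conditions say $z_{i_k}^{\epsilon_k}z_{i_{k+1}}^{\epsilon_{k+1}}=\varepsilon$, contradicting the reducedness of $w$.

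It follows that every $d^{\pm1}$ letter survives, so by uniqueness of reduced forms in $\FG[\Gamma_2]$ the reduced word representing $\alpha(w)$ has length at least $n\geq 1$ and is in particular non-empty; hence $\ker\alpha$ is trivial and $\alpha$ is injective. The delicate point I anticipate is the bookkeeping around the \emph{first} $d$-cancellation: one must be careful that the non-deterministic order of free reduction cannot allow several cancellations to conspire so as to remove a $d$-letter without first isolating an adjacent $d$-pair coming from consecutive blocks, and this is precisely what the above ``first event'' argument rules out.
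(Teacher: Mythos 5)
The paper states this as a cited result (Birget--Margolis~\cite{BM08}) and supplies no proof, so there is nothing internal to compare against; what you have written is a self-contained argument, and it is correct. Your two observations are exactly the ones needed: the involution check $\alpha(z_i)^{-1}=c^i\overbar{d}\,\overbar{c}^i=\alpha(\overbar{z_i})$ gives well-definedness of the induced homomorphism on $\FG$, and the ``first $d$-cancellation'' argument gives triviality of the kernel. The bookkeeping you flag as delicate is handled correctly: elementary free reductions never reorder surviving letters, so if no $d$-letter has yet been removed, the $d$-letters $d^{\epsilon_{k+1}},\dots,d^{\epsilon_{k'-1}}$ remain interposed between $d^{\epsilon_k}$ and $d^{\epsilon_{k'}}$ and block adjacency unless $k'=k+1$; and since $d^{\epsilon_k}$, $d^{\epsilon_{k+1}}$ are both still present, the segment $\overbar{c}^{i_k}c^{i_{k+1}}$ between them is sealed off and can only cancel internally, forcing $i_k=i_{k+1}$ for it to vanish, which together with $\epsilon_{k+1}=-\epsilon_k$ contradicts reducedness of $w$. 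One could tighten the final sentence slightly: rather than ``length at least $n$,'' it is cleanest to say that since the chosen maximal reduction sequence terminates in the unique reduced form and removes no $d$-letter, that reduced form still contains all $n\ge 1$ occurrences of $d^{\pm1}$ and is therefore non-empty. As written the proof is sound.
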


The \emph{Post correspondence problem} (\PCP) is a famous undecidable problem. % introduced by Emil Post in 1946~\cite{Post46}. The \PCP is used in variety of settings due to its simple formulation. 
In Section~\ref{sec:IPfour}, we use the \PCP to reduce the number of generators needed to prove the undecidability of the identity problem for \slfourz.
An \emph{instance} of the \PCP consists of two morphisms $g,h: \Sigma^*\rightarrow B^*$, where $\Sigma$ and $B$ are alphabets. A nonempty word $u\in \Sigma^*$ is a \emph{solution} of an instance $(g,h)$ if it satisfies $g(u)=h(u)$. The problem is undecidable for all domain alphabets~$\Sigma$ with $|\Sigma|\geq 5$ \cite{Neary15}.% The cardinality of the domain alphabet $\Sigma$ is said to be the \emph{size} of the instance and is denoted by $n_p$.

%\paragraph{Matrices}
The special linear group is ${\rm SL}(n,\mathbb{K})=\{M\in \mathbb{K}^{n\times n}\mid \det(M)=1\}$, where $\mathbb{K}=\Z,\mathbb{Q},\mathbb{R},\mathbb{C},\ldots$. The \emph{identity matrix} is denoted by $\bm{I}_n$ and the \emph{zero matrix} is denoted by $\bm{0}_n$. The \emph{Heisenberg group} \heisk is formed by the $3 \times 3$ matrices of the form 
%{\small\begin{align*}
$M  = 
\begin{psmallmatrix}
1 & a & c\\
0 & 1 & b\\
0 & 0 & 1
\end{psmallmatrix}$, 
%\end{align*}}
where $a,b,c \in \mathbb{K}$.
It is easy to see that the Heisenberg group is a non-commutative subgroup of ${\rm SL}(3,\mathbb{K})$. 
We can consider the Heisenberg group as a set of all triples with the following 
group law: 
\begin{align*}
(a_1, b_1, c_1)\otimes (a_2, b_2, c_2) = (a_1 + a_2, b_1 + b_2, c_1 + c_2 + a_1 b_2).
\end{align*}
By $\psi(M)$  we denote the triple~$(a,b,c) \in \mathbb{K}^3$ which corresponds to the upper-triangular coordinates of $M$.
Let $M$ be a matrix in \heisk such that $\psi(M) = (a,b,c)$. We define 
the {\em superdiagonal vector} of $M$ to be $\vec{v}(M) = (a,b)$. 
Given two vectors~$\bu = (u_1,u_2)$ and $\bv = (v_1,v_2)$, the {\em cross 
product} of $\bu$ and $\bv$ is defined as $\bu \times \bv = u_1v_2 - u_2v_1$. 
Two vectors are {\em parallel} if the cross product is zero. 

The Heisenberg group can be also defined in higher dimensions. The Heisenberg group 
of dimension $n$ over $\mathbb{K}$ is denoted by ${\rm H}(n,\mathbb{K})$ and is the group of square matrices in 
$\mathbb{K}^{n \times n}$ of the following form:
%{\small\begin{align*} 
$\begin{psmallmatrix}
1 & \ba^\mathsf{T} & c\\
0 & {\bm{I}}_{n-2} & \bb\\
0 & 0 & 1
\end{psmallmatrix}$, %\end{align*}}
where $\ba,\bb \in \mathbb{K}^{n-2}, c \in \mathbb{K}$.

Similar to the Heisenberg group in dimension three, we can also consider the Heisenberg group 
in dimension $n$ for any integer $n \ge 3$ as a set of all triples with the following 
group law:
$(\ba_1, \bb_1, c_1)\otimes (\ba_2, \bb_2, c_2) = (\ba_1 + \ba_2, \bb_1 + \bb_2, c_1 + c_2 + \ba_1 \cdot \bb_2)$,
where $\ba_1, \ba_2, \bb_1, \bb_2 \in \mathbb{K}^{n-2}$ and $\ba_1\cdot\bb_2$ is the dot product of vectors $\ba_1$ and $\bb_2$.

We extend the function $\psi$ to $n$-dimensional Heisenberg group: For a matrix $M$, $\psi(M)$ is the triple~$(\ba,\bb,c) \in (\mathbb{K}^{n-2})^2 \times \mathbb{K}$ which corresponds to the upper-triangular coordinates of $M$.

Next, we prove a simple necessary and sufficient condition for commutation of two matrices from the Heisenberg group.

\begin{lemma}\label{lem:commute}
Let $M_1$ and $M_2$ be two matrices from the Heisenberg group~${\rm H}(n,\mathbb{K})$ and $\psi(M_i)=(\ba_i,\bb_i,c_i)$ for $i=1,2$. Then $M_1M_2 = M_2M_1$ holds if and only if
$\ba_1\cdot\bb_2=\ba_2\cdot\bb_1$.\footnote{Note that, in dimension three, the condition can be stated as superdiagonal vectors of $M_1$ and $M_2$ being parallel.}
\end{lemma}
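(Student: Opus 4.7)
The plan is to apply the group law for $n$-dimensional Heisenberg matrices directly, which reduces the commutation condition to comparing the "$c$-coordinates" of the two products, since the $\ba$- and $\bb$-coordinates of $M_1M_2$ and $M_2M_1$ automatically agree.

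First I would compute $\psi(M_1M_2)$ using the stated group law:
\begin{align*}
\psi(M_1M_2) = (\ba_1+\ba_2,\ \bb_1+\bb_2,\ c_1+c_2+\ba_1\cdot\bb_2).
\end{align*}
Symmetrically,
\begin{align*}
\psi(M_2M_1) = (\ba_2+\ba_1,\ \bb_2+\bb_1,\ c_2+c_1+\ba_2\cdot\bb_1).
\end{align*}
Since vector addition in $\mathbb{K}^{n-2}$ and scalar addition in $\mathbb{K}$ are commutative, the first two components of $\psi(M_1M_2)$ and $\psi(M_2M_1)$ coincide, and the two $c$-entries differ only in the term $\ba_i\cdot\bb_j$. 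Hence $M_1M_2=M_2M_1$ if and only if $c_1+c_2+\ba_1\cdot\bb_2=c_1+c_2+\ba_2\cdot\bb_1$, i.e., $\ba_1\cdot\bb_2=\ba_2\cdot\bb_1$, as required.

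There is essentially no obstacle here; the only thing to be slightly careful about is that the correspondence $M\mapsto\psi(M)$ is a bijection between ${\rm H}(n,\mathbb{K})$ and $(\mathbb{K}^{n-2})^2\times\mathbb{K}$, so equality of matrices is equivalent to equality of the associated triples, justifying the reduction to the single scalar identity $\ba_1\cdot\bb_2=\ba_2\cdot\bb_1$. For the $n=3$ remark in the footnote, observe that in that case $\ba_i,\bb_i\in\mathbb{K}$ are scalars, so $\ba_1\bb_2=\ba_2\bb_1$ is precisely $\vec v(M_1)\times\vec v(M_2)=0$, i.e., the superdiagonal vectors being parallel.
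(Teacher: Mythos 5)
Your proof is correct and matches the paper's argument essentially verbatim: both compute the upper-right entry of $M_1M_2$ and $M_2M_1$ via the Heisenberg group law and observe that the remaining coordinates agree automatically, reducing commutation to $\ba_1\cdot\bb_2=\ba_2\cdot\bb_1$.
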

\begin{proof}
The product $M_1M_2$
has $c_1+ c_2 + \ba_1\cdot\bb_2$ in the upper-right corner
whereas $M_2M_1$ has $c_1+ c_2 + \ba_2\cdot\bb_1$. The other
coordinates are identical as we add numbers
in the same coordinate. It is easy to see that the two products are
equivalent if and only if $\ba_1\cdot\bb_2 = \ba_2\cdot\bb_1$ holds.
\end{proof}

The main results of the paper, Theorem~\ref{thm:ptime} and Theorem~\ref{thm:heisnq}, reduce to solving systems of linear homogeneous Diophantine equations. In the next lemma, we show that solving a system of linear homogeneous Diophantine equations is in polynomial time. Note that the polynomial time complexity is not important for Theorem~\ref{thm:heisnq} as there is a part of the decision procedure that requires exponential time.

\begin{lemma}\label{lem:DiophantineP}
Deciding whether a system of linear homogeneous Diophantine equations 
\begin{align}\label{eq:slhDe}
\begin{psmallmatrix}a_{11} & \cdots & a_{1n} \\ \vdots & \ddots & \vdots \\ a_{m1} & \cdots & a_{mn} \end{psmallmatrix}\begin{psmallmatrix}
x_1\\ \vdots \\ x_n
\end{psmallmatrix} &= \begin{psmallmatrix}
0 \\ \vdots \\ 0
\end{psmallmatrix},
\end{align}
where $a_{ij}\in\Q$, has a positive integer solution is in $\P$.
\end{lemma}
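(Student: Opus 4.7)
The plan is to reduce the problem to linear programming feasibility over the rationals and then invoke Khachiyan's polynomial-time algorithm for LP. The key observation is that the existence of a strictly positive integer solution is equivalent, up to scaling, to the existence of a strictly positive rational solution, which is a standard LP feasibility question.

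First, I would preprocess the input by clearing denominators in each row of the coefficient matrix (multiplying by the least common multiple of the denominators of its entries). This takes polynomial time and reduces to the case $a_{ij} \in \mathbb{Z}$ without changing the solution set of \eqref{eq:slhDe}. Next, I would argue that \eqref{eq:slhDe} admits a solution $\mathbf{x} \in \mathbb{Z}_{>0}^n$ if and only if it admits a solution $\mathbf{x} \in \mathbb{Q}^n$ with $x_i \geq 1$ for every $i$. The ``only if'' direction is immediate. For the ``if'' direction, a rational solution $\mathbf{x} \geq \mathbf{1}$ can be turned into a positive integer solution by multiplying through by the common denominator of its coordinates: homogeneity of the system preserves the equalities, and the result still has every coordinate $\geq 1$.

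The condition ``there exists $\mathbf{x}\in\mathbb{Q}^n$ with $A\mathbf{x}=\mathbf{0}$ and $\mathbf{x}\geq\mathbf{1}$'' is precisely a linear programming feasibility instance with rational input. By Khachiyan's ellipsoid method (or any polynomial-time LP algorithm), feasibility of a rational LP of this size can be decided in time polynomial in the bit-length of the entries $a_{ij}$. This gives the desired polynomial-time decision procedure.

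There is no serious obstacle here: the only point that needs a brief justification is the equivalence between the existence of a positive \emph{integer} solution and the existence of a positive \emph{rational} solution (settled by the scaling argument above), and then the result follows from the polynomial-time solvability of linear programming over $\mathbb{Q}$. In fact, if one prefers to avoid LP, one can alternatively compute a basis of $\ker A$ in polynomial time via Gaussian elimination and then decide whether some rational combination of the basis vectors lies in the open positive orthant, which is again an LP feasibility problem of the same shape.
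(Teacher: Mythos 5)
Your proposal is correct and follows essentially the same strategy as the paper: reduce to linear programming feasibility over $\Q$, invoke Khachiyan's polynomial-time algorithm, and then clear denominators to pass from a rational solution to an integer one. The one place where you diverge is in the LP formulation. You impose $x_i \geq 1$ for every coordinate $i$, which directly enforces a \emph{strictly positive} solution in every entry and matches the literal reading of ``positive integer solution.'' The paper instead imposes (implicitly, via the standard-form convention $y\geq 0$) non-negativity together with the single aggregate constraint $-\sum_i y_i \leq -1$, yielding a \emph{non-trivial non-negative} solution rather than a coordinatewise-positive one. These are genuinely different conditions -- a system can have a non-trivial non-negative solution but no strictly positive one -- and the paper's later applications (Theorems~\ref{thm:ptime} and~\ref{thm:heisnq}) actually want the weaker non-negative-non-trivial version, possibly with a prescribed subset of coordinates forced to be non-zero. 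So your proof is a clean and correct argument for the lemma as literally stated, and it even avoids a small ambiguity in the paper's proof (the unsourced assertion that the LP solution has $\xi_i' \geq 0$, which relies on the implicit non-negativity convention). Just be aware that if the lemma is to be used as the paper uses it, the LP constraints would need to be relaxed to non-negativity plus a suitable non-triviality condition rather than $x_i\geq 1$ for all $i$.
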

\begin{proof}
We prove the claim by converting the system of linear homogeneous Diophantine equations into an instance of linear programming problem which is known to be solvable in polynomial time \cite{Khachiyan80}. Indeed, let us convert \eqref{eq:slhDe} into an instance of the linear programming problem
\begin{align}\label{eq:lp}
\begin{psmallmatrix}a_{11} & \cdots & a_{1n} \\ \vdots & \ddots & \vdots \\ a_{m1} & \cdots & a_{mn} \\ -a_{11} & \cdots & -a_{1n} \\ \vdots & \ddots & \vdots \\ -a_{m1} & \cdots & -a_{mn} \\ -1 & \cdots & -1 \end{psmallmatrix}\begin{psmallmatrix}
y_1\\ \vdots \\ y_n
\end{psmallmatrix} &\leq \begin{psmallmatrix}
0 \\ \vdots \\ 0 \\ 0 \\ \vdots \\ 0 \\ -1
\end{psmallmatrix}.
\end{align}
The idea is that equations 
\begin{align*}
(a_{i1},\ldots,a_{in})\cdot(y_1,\ldots,y_n)^\mathsf{T} &\leq 0 \\
(-a_{i1},\ldots,-a_{in})\cdot(y_1,\ldots,y_n)^\mathsf{T} &\leq 0 
\end{align*} 
ensure that if $(y_1,\ldots,y_n)$ satisfies both equations, it in fact satisfies 
\begin{align*}
(a_{i1},\ldots,a_{in})\cdot(y_1,\ldots,y_n)^\mathsf{T} = 0.
\end{align*}
The final equation guarantees that a solution is not a zero vector.
Let $(\xi_1,\ldots,\xi_n)\in\Q^n$ be a solution of \eqref{eq:lp}. We write $\xi_i=\frac{p_i}{q_i}$ as an irreducible fraction, where $p_i,q_i\in\N$. Now $(\xi'_1,\ldots,\xi'_n)$, where $\xi'_i=\prod_{j=1}^n q_j \xi_i$, is a solution to the system of linear homogeneous Diophantine equations. First, observe that $(\xi'_1,\ldots,\xi'_n)$ is in $\Z^n$ and satisfies the matrix equation. Indeed,
\begin{align*}
\begin{psmallmatrix}a_{11} & \cdots & a_{1n} \\ \vdots & \ddots & \vdots \\ a_{m1} & \cdots & a_{mn} \end{psmallmatrix}\begin{psmallmatrix}
\xi'_1\\ \vdots \\ \xi'_n
\end{psmallmatrix} =
\prod_{j=1}^n q_j \begin{psmallmatrix}a_{11} & \cdots & a_{1n} \\ \vdots & \ddots & \vdots \\ a_{m1} & \cdots & a_{mn} \end{psmallmatrix}\begin{psmallmatrix}
\xi_1\\ \vdots \\ \xi_n
\end{psmallmatrix} = \prod_{j=1}^n q_j \begin{psmallmatrix}
0 \\ \vdots \\ 0
\end{psmallmatrix} 
=\begin{psmallmatrix}
0 \\ \vdots \\ 0
\end{psmallmatrix}.
\end{align*}
Also $(\xi'_1,\ldots,\xi'_n)$ is not a trivial solution (i.e., $(0,\ldots,0)$). Indeed, assume that $\xi'_i=0$ for all $i$. Now, since $\xi'_i=\prod_{j=1}^n q_j \xi_i$ and all $q_j$ are non-zero, $\xi_i=0$. That is, $(\xi_1,\ldots,\xi_n)=(0,\ldots,0)$ which does not satisfy the last equation, i.e., $-1\cdot0-\ldots-1\cdot0=0\leq -1$ does not hold.
Finally, note that $\xi'_i\geq0$ for all $i$. That is, $(\xi'_1,\ldots,\xi'_n)$ is a non-trivial integer solution to the system of linear homogeneous Diophantine equations \eqref{eq:slhDe}. 
\end{proof}

\section{On embedding from pairs of words into \slthreek}
Let $\Sigma=\{0,1\}$. The monoid~$\Sigma^* \times \Sigma^*$ has a generating set $S=\{(0, \varepsilon), (1, \varepsilon), (\varepsilon, 0), (\varepsilon, 1)\}$, where $\varepsilon$ is the empty word. We simplify the notation by setting $a = (0, \varepsilon)$, $b = 
(1, \varepsilon)$, $c = (\varepsilon, 0)$ and $d  = (\varepsilon, 1)$. 
It is easy to see that we have the following relations:
\begin{align}\label{relation}
ac&=ca,           &  bc &=cb,              &  ad &= da, & bd &= db.
\end{align}
In other words, $a$ and $b$ commute with $c$ and $d$. Furthermore, these are the only relations. That is, $a$ and $b$ do not commute with each other, and neither do $c$ and $d$. The monoid $\Sigma^*\times\Sigma^*$ is a partially commutative monoid or a trace monoid. A necessary and sufficient conditions for existence of an embedding of trace monoids into $\N^{2\times2}$ was given in \cite{Choffrut90} but, to the authors' best knowledge, there are no similar results even for $\N^{3\times3}$. Let $\varphi:\Sigma^*\times\Sigma^*\to \slthreek$ be an injective morphism and denote $A = \varphi(a)$, $B=\varphi(b)$, $C = \varphi(c)$ and $D = \varphi(d)$. Our goal is to show that $\varphi$ does not exist for $\mathbb{K}=\Z$. Additionally, we also show that an embedding does exist for $\mathbb{K}=\mathbb{Q}$.
Unfortunately, the technique  developed in \cite{CHK99}, where the contradiction was derived from simple relations, resulting from matrix multiplication, cannot be used for a case of~\slthreek
as it creates a large number of equations which do not directly limit the existence of $\varphi$. 
In contrast to \cite{CHK99}, we found new techniques to show non-existence of $\varphi$ by analysis of eigenvalues and the Jordan normal forms.

We consider Jordan normal forms of matrices and showing that some normal form result in additional relations beside relations \eqref{relation}.

Let $\varphi$ be an injective morphism from $S$ into \slthreec.	Because of obvious symmetries, it suffices to prove the claim for $A=\varphi((0,\varepsilon))$. Now, the only relations in \slthreec are $AC=CA$, $AD=DA$, $BC=CB$ and $BD=DB$.

Since the conjugation by an invertible matrix does not influence the injectivity, we can conjugate the four matrices by some $X\in\mathbb{C}^{3\times3}$ such that $A$ is in the Jordan normal form. For a $3 \times 3$ matrix, there are six different types of matrices in the Jordan normal form. If $A$ has three different eigenvalues, then
\begin{align}\label{eq:threeCigen}
A  = 
\begin{pmatrix}
\lambda & 0 & 0\\
0 & \mu & 0\\
0 & 0 & \nu
\end{pmatrix}.
\end{align}
If $A$ has two eigenvalues, then
\begin{align}\label{eq:twoCigen}
A  = 
\begin{pmatrix}
\lambda & 0 & 0\\
0 & \mu & 0\\
0 & 0 & \mu
\end{pmatrix}
\mbox{  or  }
A  = 
\begin{pmatrix}
\lambda & 0 & 0\\
0 & \mu & 1\\
0 & 0 & \mu
\end{pmatrix}.
\end{align}
Finally, if $A$ has only one eigenvalue, then 
\begin{align}\label{eq:oneCigen}
A  = 
\begin{pmatrix}
\lambda & 0 & 0\\
0 & \lambda & 0\\
0 & 0 & \lambda
\end{pmatrix}
\mbox{  or  }
A  = 
\begin{pmatrix}
\lambda & 1 & 0\\
0 & \lambda & 0\\
0 & 0 & \lambda
\end{pmatrix}
\mbox{  or  }
A  = 
\begin{pmatrix}
\lambda & 1 & 0\\
0 & \lambda & 1\\
0 & 0 & \lambda
\end{pmatrix}.
\end{align}

\begin{lemma}\label{lem:3eval}
Let $\Sigma=\{0,1\}$. If there is an injective morphism $\varphi: \Sigma^* \times \Sigma^* \to {\rm SL}(3,\mathbb{C})$ and the matrices~$A,B,C$ and $D$ correspond to 
$\varphi((0,\varepsilon))$, $\varphi((1,\varepsilon))$, $\varphi((\varepsilon,0))$ and $\varphi((\varepsilon,1))$ respectively, then the Jordan normal form of matrices $A,B,C$ and $D$ is not
$
\begin{psmallmatrix}
\lambda & 0 & 0\\
0 & \mu & 0\\
0 & 0 & \nu
\end{psmallmatrix}$.
\end{lemma}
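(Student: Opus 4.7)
The plan is to show that if $A$ has three distinct eigenvalues, the centralizer structure forces $C$ and $D$ to commute, contradicting injectivity.

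First I would conjugate all four matrices by some $X\in\C^{3\times 3}$ so that $A$ is in Jordan normal form; this preserves injectivity of $\varphi$ and all the commutation relations. Assume for contradiction that $A=\begin{psmallmatrix}\lambda&0&0\\0&\mu&0\\0&0&\nu\end{psmallmatrix}$ with $\lambda,\mu,\nu$ pairwise distinct. The key linear-algebra fact I would invoke is that the centralizer of a diagonal matrix with pairwise distinct eigenvalues in $\C^{3\times 3}$ consists exactly of diagonal matrices: if $M=(m_{ij})$ satisfies $MA=AM$, then comparing entries yields $m_{ij}(\lambda_j-\lambda_i)=0$, forcing $m_{ij}=0$ for all $i\neq j$.

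Applying this to the relations $AC=CA$ and $AD=DA$ from \eqref{relation} shows that both $C$ and $D$ are diagonal. But then $C$ and $D$ are two diagonal matrices, and diagonal matrices commute, so $CD=DC$. On the other hand, in $\Sigma^*\times\Sigma^*$ we have $cd=(\varepsilon,01)$ and $dc=(\varepsilon,10)$, which are distinct elements, so by injectivity $\varphi(cd)=CD$ must differ from $\varphi(dc)=DC$. This is the desired contradiction, so $A$ cannot have Jordan form of type \eqref{eq:threeCigen}.

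By the symmetries of the relations \eqref{relation} (swapping the roles of $\{a,b\}$ with $\{c,d\}$, or swapping $a\leftrightarrow b$, or $c\leftrightarrow d$), the same argument, starting instead from $B$, $C$, or $D$ in place of $A$, rules out this Jordan form for each of the four generator images. The only routine part is verifying the centralizer computation above; the conceptual content is simply that distinct eigenvalues give a minimal (diagonal) centralizer, which is too small to accommodate the required non-commuting pair. There is no real obstacle in this case; it is the simplest of the Jordan-form subcases and sets the template for the harder cases \eqref{eq:twoCigen} and \eqref{eq:oneCigen}, where the centralizer is larger and more delicate invariants (traces, eigenvalue relations) will be needed.
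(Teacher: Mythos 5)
Your proof is correct and follows essentially the same approach as the paper: observe that the centralizer of a diagonal matrix with three distinct eigenvalues consists only of diagonal matrices, so the relations $AC=CA$ and $AD=DA$ force $C$ and $D$ to be diagonal and hence to commute, contradicting $cd\neq dc$. The paper states this more tersely (and relegates the symmetry argument extending the claim from $A$ to $B,C,D$ to the surrounding text), but the reasoning is identical.
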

\begin{proof}
This form can be easily ruled out since $A=\begin{psmallmatrix}
\lambda & 0 & 0\\
0 & \mu & 0\\
0 & 0 & \nu
\end{psmallmatrix}$ only commutes with diagonal matrices. Then $C$ and $D$ should be commuting with $A$ by the suggested relations and as a result, $C$ and $D$ commute with each other.
\end{proof}

\begin{lemma}\label{lem:2evalNotDiag}
Let $\Sigma=\{0,1\}$. If there is an injective morphism $\varphi: \Sigma^* \times \Sigma^* \to {\rm SL}(3,\mathbb{C})$ and the matrices~$A,B,C$ and $D$ correspond to 
$\varphi((0,\varepsilon))$, $\varphi((1,\varepsilon))$, $\varphi((\varepsilon,0))$ and $\varphi((\varepsilon,1))$ respectively, then the Jordan normal form of matrices $A,B,C$ and $D$ is not
$
\begin{psmallmatrix}
\lambda & 0 & 0\\
0 & \mu & 1\\
0 & 0 & \mu
\end{psmallmatrix}$.
\end{lemma}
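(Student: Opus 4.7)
The plan is to determine the centralizer of $A$ in $\mathbb{C}^{3\times3}$ and to show that this centralizer is abelian. Once this is done, the relations $AC=CA$ and $AD=DA$ from \eqref{relation} force both $C$ and $D$ to lie in this centralizer, so $CD=DC$. This contradicts the injectivity of $\varphi$, because in $\Sigma^*\times\Sigma^*$ the elements $c=(\varepsilon,0)$ and $d=(\varepsilon,1)$ do not commute: $cd=(\varepsilon,01)\neq(\varepsilon,10)=dc$.

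First, I would invoke the symmetry argument already noted before Lemma~\ref{lem:3eval} to reduce to the case of $A$, after a global conjugation by some $X\in\mathbb{C}^{3\times3}$ that puts $A$ into the prescribed Jordan form with $\lambda\neq\mu$; such a conjugation preserves both injectivity and the commutation relations. Then I would expand $AM=MA$ for a generic $M=(m_{ij})$ and inspect the nine entry-equations: positions $(1,2)$, $(1,3)$, $(3,1)$ yield equations of the form $(\lambda-\mu)m_{ij}=0$, which force $m_{12}=m_{13}=m_{31}=0$ since $\lambda\neq\mu$; the single nonzero superdiagonal entry of $A$ then propagates through positions $(2,1)$, $(2,3)$, $(3,2)$, $(3,3)$ to give $m_{21}=m_{32}=0$ and $m_{22}=m_{33}$. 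The resulting three-parameter centralizer consists of matrices
\[
M=\begin{pmatrix}\alpha & 0 & 0\\ 0 & \beta & \gamma\\ 0 & 0 & \beta\end{pmatrix}.
\]

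The final step is to verify that any two such matrices commute; this is immediate because the lower-right $2\times2$ block has the shape $\beta I_2+\gamma N$ with $N^2=0$, which generates an abelian subalgebra of $\mathbb{C}^{2\times2}$, while the upper-left $1\times1$ block is scalar. Hence $CD=DC$, and we are done. There is no genuine obstacle in this argument: the centralizer calculation reduces to a routine linear system that is made tractable by the assumption $\lambda\neq\mu$, and the overall template mirrors the one used for Lemma~\ref{lem:3eval}.
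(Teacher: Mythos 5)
Your proof is correct and follows essentially the same route as the paper: reduce to $A$ in the given Jordan form, solve $AM=MA$ entrywise using $\lambda\neq\mu$ to pin down the centralizer as matrices of the shape $\begin{psmallmatrix}\alpha&0&0\\0&\beta&\gamma\\0&0&\beta\end{psmallmatrix}$, and observe that this centralizer is abelian, forcing $CD=DC$ and contradicting injectivity. The only cosmetic difference is that you phrase the final commutativity check conceptually (scalar block plus $\beta I_2+\gamma N$ with $N^2=0$), whereas the paper multiplies two such matrices directly.
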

\begin{proof}
Let $A=\begin{psmallmatrix}\lambda&0&0\\0&\mu&1\\0&0&\mu\end{psmallmatrix}$
and let $C=\begin{psmallmatrix}a&b&c \\ d&e&f \\ g&h&\ell\end{psmallmatrix}$. Now
\begin{align*}
AC&=\begin{pmatrix}\lambda&0&0\\0&\mu&1\\0&0&\mu\end{pmatrix}
\begin{pmatrix}a&b&c\\d&e&f\\g&h&\ell\end{pmatrix}
=
\begin{pmatrix}\lambda a&\lambda b&\lambda c\\g + \mu d&h +\mu e& \ell +\mu f\\ \mu g&\mu h & \mu \ell\end{pmatrix} \text{ and } \\
CA&=\begin{pmatrix}a&b&c\\d&e&f\\g&h&\ell\end{pmatrix}\begin{pmatrix}\lambda&0&0\\0&\mu&1\\0&0&\mu\end{pmatrix}=\begin{pmatrix}\lambda a&\mu b&b+\mu c\\\lambda d&\mu e&e+\mu f\\\lambda g&\mu h& h+\mu \ell\end{pmatrix}.
\end{align*}
Since these matrices are equal, and since $\lambda\neq\mu$, we have that $b=c=d=g=h=0$ and $e=\ell$. Similar calculation gives us $D=\begin{psmallmatrix}a'&0&0\\0&e'&f'\\0&0&e'\end{psmallmatrix}$. Now,  matrices $C$ and $D$ commute as follows:
\begin{align*}
\begin{pmatrix}a&0&0\\0&e&f\\0&0&e\end{pmatrix}\begin{pmatrix}a'&0&0\\0&e'&f'\\0&0&e'\end{pmatrix}=\begin{pmatrix}aa'&0&0\\0&ee'&ef'+fe'\\0&0&ee'\end{pmatrix} = \begin{pmatrix}a'&0&0\\0&e'&f'\\0&0&e'\end{pmatrix}\begin{pmatrix}a&0&0\\0&e&f\\0&0&e\end{pmatrix},
\end{align*}
which is not one of the allowed relations. 
\end{proof}

\begin{lemma}\label{lem:1eval}
Let $\Sigma=\{0,1\}$. If there is an injective morphism $\varphi: \Sigma^* \times \Sigma^* \to {\rm SL}(3,\mathbb{C})$ and the matrices~$A,B,C$ and $D$ correspond to 
$\varphi((0,\varepsilon))$, $\varphi((1,\varepsilon))$, $\varphi((\varepsilon,0))$ and $\varphi((\varepsilon,1))$ respectively, then the Jordan normal form of matrices $A,B,C$ and $D$ is not
$
\begin{psmallmatrix}
\lambda & 0 & 0\\
0 & \lambda & 0\\
0 & 0 & \lambda
\end{psmallmatrix}$ nor $
\begin{psmallmatrix}
\lambda & 1 & 0\\
0 & \lambda & 1\\
0 & 0 & \lambda
\end{psmallmatrix}$.
\end{lemma}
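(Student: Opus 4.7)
I would split the proof into the two Jordan normal forms separately.

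\textbf{Scalar case.} Suppose $A = \lambda\bm{I}_3$. Then $A$ commutes with every matrix in $\mathbb{C}^{3\times 3}$; in particular $AB = BA$. But then $\varphi((0,\varepsilon)(1,\varepsilon)) = AB = BA = \varphi((1,\varepsilon)(0,\varepsilon))$, while $(01,\varepsilon) \neq (10,\varepsilon)$ in $\Sigma^*\times\Sigma^*$, contradicting injectivity. (Alternatively, from $\det A = 1$ one has $\lambda^3 = 1$, so $A^3 = \bm{I}_3 = \varphi((\varepsilon,\varepsilon))$, again contradicting injectivity.) So this form is ruled out immediately without using any relation at all.

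\textbf{Full Jordan block case.} Suppose $A = \lambda\bm{I}_3 + N$ where $N = \begin{psmallmatrix}0&1&0\\0&0&1\\0&0&0\end{psmallmatrix}$. The strategy mirrors the one used in Lemma~\ref{lem:2evalNotDiag}: first compute the centraliser of $A$, then observe that the centraliser is itself commutative, so the two matrices $C$ and $D$ (both of which must commute with $A$ by relations~\eqref{relation}) must commute with each other, yielding the forbidden relation $cd = dc$.

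To carry out the first step, I would write $C = (c_{ij})$ and equate $AC$ with $CA$ entry-by-entry. The computation is routine and shows that $C$ must be an upper-triangular Toeplitz matrix of the form $\alpha\bm{I}_3 + \beta N + \gamma N^2$ for scalars $\alpha,\beta,\gamma \in \mathbb{C}$; in other words, the centraliser of $A$ is precisely the polynomial algebra $\mathbb{C}[N]$. The same argument applied to $D$ gives $D = \alpha'\bm{I}_3 + \beta' N + \gamma' N^2$. Since $\mathbb{C}[N]$ is a commutative subalgebra, $CD = DC$. Translating back, $\varphi((\varepsilon, 01)) = CD = DC = \varphi((\varepsilon, 10))$, and since $(\varepsilon,01) \neq (\varepsilon,10)$, this contradicts the injectivity of $\varphi$.

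The only genuinely computational step is verifying the centraliser of a regular $3 \times 3$ Jordan block, but this is a standard linear-algebra calculation and follows the same template already used in the preceding lemmas, so I do not expect any real obstacle.
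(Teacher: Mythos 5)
Your proposal is correct and follows essentially the same route as the paper: the scalar case is dismissed because $A$ then commutes with everything (the paper uses exactly this, though your alternative via $\lambda^3=1$ is also fine), and the full-Jordan-block case is handled by computing the centraliser of $A$ and observing it is commutative. The only cosmetic difference is in the final step of the second case: you note directly that the centraliser is the commutative algebra $\mathbb{C}[N]$, whereas the paper rescales $C$ and $D$ to Heisenberg matrices and invokes Lemma~\ref{lem:commute} to get $CD=DC$; your version is slightly more direct but carries the same content.
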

\begin{proof}
In the first case, the matrix~$A$ is diagonal and it is easy to see that then $A$ 
commutes with all matrices, including $B$. 

Let us then consider the second case, where the matrix~$A$ is in the following form~$\begin{psmallmatrix}\lambda &1&0 \\ 0&\lambda&1 \\ 0&0&\lambda\end{psmallmatrix}$ and let $C=\begin{psmallmatrix}a&b&c \\ d&e&f \\ g&h&\ell\end{psmallmatrix}$. Now
\begin{align*}
AC&=\begin{pmatrix}\lambda&1&0\\0&\lambda&1\\0&0&\lambda\end{pmatrix}\begin{pmatrix}a&b&c\\d&e&f\\g&h&\ell\end{pmatrix}=\begin{pmatrix}d+a\lambda&e+b\lambda&f+c\lambda\\g + d \lambda&h + e \lambda& \ell + f \lambda\\g \lambda&h \lambda&\ell \lambda\end{pmatrix} \text{ and}\\
CA&=\begin{pmatrix}a&b&c\\d&e&f\\g&h&\ell\end{pmatrix}\begin{pmatrix}\lambda&1&0\\0&\lambda&1\\0&0&\lambda\end{pmatrix}=\begin{pmatrix}a \lambda&a+b \lambda&b+c \lambda\\d \lambda&d+e \lambda&e+f \lambda\\g \lambda&g+h \lambda&h+\ell \lambda\end{pmatrix}.
\end{align*}
Since these matrices are equal, we have that $d=g=h=0$, $a=e=\ell$ and $b=f$. Let $D=\begin{psmallmatrix}a'&b'&c'\\d'&e'&f'\\g'&h'&\ell'\end{psmallmatrix}$. Solving $D$ from equation $AD=DA$, gives us $D=\begin{psmallmatrix}a'&b'&c'\\0&a'&b'\\0&0&a'\end{psmallmatrix}$ and now matrices $C$ and $D$ commute by Lemma~\ref{lem:commute}. Indeed, matrix $C$ can be expressed as $C=a\begin{psmallmatrix}1&\frac{b}{a}&\frac{c}{a}\\0&1&\frac{b}{a}\\0&0&1\end{psmallmatrix}\in \heisc$ and matrix $D$ has an analogous expression. Then it is clear that $\frac{b}{a}\frac{b'}{a'}=\frac{b'}{a'}\frac{b}{a}$ and thus matrices $C$ and $D$ commute.
\end{proof}

In the above lemmas, we ruled out four out of six possible Jordan normal forms. In the next theorem, we give an embedding from $\Sigma\times\Sigma$ into \slthreeq.
\begin{theorem}
Let $\Sigma=\{0,1\}$. The morphism $\varphi: \Sigma^* \times \Sigma^* \to {\rm SL}(3,\mathbb{Q})$, defined by $\varphi((0,\varepsilon))=\begin{psmallmatrix}
4 & 0 & 0\\
0 & \frac{1}{2} & 0\\
0 & 0 & \frac{1}{2}
\end{psmallmatrix}$, $\varphi((1,\varepsilon))=\begin{psmallmatrix}
9 & \frac{1}{3} & 0\\
0 & \frac{1}{3} & 0\\
0 & 0 & \frac{1}{3}
\end{psmallmatrix}$, $\varphi((\varepsilon,0))=\begin{psmallmatrix}
\frac{1}{2} & 0 & 0\\
0 & \frac{1}{2} & 0\\
0 & 0 & 4
\end{psmallmatrix}$ and $\varphi((\varepsilon,1))=\begin{psmallmatrix}
\frac{1}{3} & 0 & 0\\
0 & \frac{1}{3} & 0\\
0 & \frac{1}{3} & 9
\end{psmallmatrix}$ is an embedding.
\end{theorem}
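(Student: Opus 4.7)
The plan is to first establish well-definedness of $\varphi$, then reduce injectivity to two decoupled subproblems using the block structure of the generators, and finally solve them by a greedy numerical extraction. For well-definedness, I would verify that each of the four matrices has determinant one (placing them in $\slthreeq$) and then check the four commutations in \eqref{relation} at the matrix level. Each of these is a short computation because $\varphi((0,\varepsilon))$ is diagonal with equal lower-right entries $\tfrac12$ and $\varphi((\varepsilon,0))$ is diagonal with equal upper-left entries $\tfrac12$; together these are exactly the defining relations of the trace monoid $\Sigma^*\times\Sigma^*$, so $\varphi$ extends uniquely to a morphism.

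For injectivity, I would factor $(u,v)=(u,\varepsilon)(\varepsilon,v)$ and study $M(u):=\varphi((u,\varepsilon))$ and $N(v):=\varphi((\varepsilon,v))$ separately. A short induction on length, relying only on the zero patterns of the generators, shows $M(u)=\begin{psmallmatrix}f(u)&g(u)&0\\0&h(u)&0\\0&0&h(u)\end{psmallmatrix}$ and $N(v)=\begin{psmallmatrix}\beta(v)&0&0\\0&\beta(v)&0\\0&k(v)&\gamma(v)\end{psmallmatrix}$, with $f(u)=4^{|u|_0}9^{|u|_1}$, $h(u)=2^{-|u|_0}3^{-|u|_1}$, $\beta(v)=2^{-|v|_0}3^{-|v|_1}$, and $\gamma(v)=4^{|v|_0}9^{|v|_1}$. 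Multiplying yields $M(u)N(v)=\begin{psmallmatrix}f\beta&g\beta&0\\0&h\beta&0\\0&hk&h\gamma\end{psmallmatrix}$. Equating this to the analogous product for $(u_2,v_2)$, the three diagonal entries become monomial identities in $2$ and $3$, so unique factorization in $\Z$ forces all four letter counts $|u_i|_0,|u_i|_1,|v_i|_0,|v_i|_1$ to coincide for $i=1,2$; the $(1,2)$ and $(3,2)$ entries then give $g(u_1)=g(u_2)$ and $k(v_1)=k(v_2)$. Hence it suffices to reconstruct $u$ from $(|u|_0,|u|_1,g(u))$ and, by the same reasoning, $v$ from $(|v|_0,|v|_1,k(v))$.

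For the $u$-task I would set $G(u):=g(u)/h(u)$; the recursions $G(ua)=G(u)$ and $G(ub)=G(u)+8^{|u|_0}27^{|u|_1}$ unfold to the closed form $G(u)=\sum_{m=1}^{n}8^{j_m}\,27^{m-1}$, where $n:=|u|_1$ and $j_m$ counts the $0$'s preceding the $m$-th $1$ in $u$, so $0\le j_1\le\cdots\le j_n\le|u|_0$. The injectivity then reduces to the purely numerical uniqueness of this representation, which I would prove by greedily extracting $j_n$ first: the bound $j_m\le j_n$ gives $8^{j_n}\,27^{n-1}\le G(u)<\tfrac{27}{26}\,8^{j_n}\,27^{n-1}$, and since $8\cdot 26=208>27$, the half-open intervals $[8^{j}27^{n-1},\tfrac{27}{26}8^{j}27^{n-1})$ for consecutive integers $j$ are pairwise disjoint, so $j_n$ is uniquely pinned down by $G(u)$. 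Subtracting $8^{j_n}27^{n-1}$ leaves a sum of the same shape with $n-1$ terms, and induction then recovers the whole sequence $(j_m)$, hence $u$. The symmetric argument based on $K(v):=k(v)/\beta(v)$ recovers $v$ identically. The main obstacle is this final interval estimate: the greedy extraction depends entirely on the inequality $8>27/26$, which is exactly why the exponents $4,9,\tfrac12,\tfrac13$ in the generators are chosen precisely as they are.
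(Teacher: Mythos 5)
Your proof is correct, and it takes a genuinely different route from the paper's. The paper verifies the trace-monoid relations and then reduces the pairwise-freeness of $\{A,B\}$ and $\{C,D\}$ to the freeness of their top-left (resp.~bottom-right) $2\times 2$ blocks, invoking Lemma~3 and Proposition~3 of \cite{CHK99} (the criterion $\tfrac{1}{|a|}+\tfrac{1}{|b|}\le 1$ after a suitable scalar normalization); it then concludes the embedding from the relations plus pairwise freeness, leaving the ``decoupling'' of $M(u)N(v)$ into its $u$- and $v$-parts implicit. Your argument makes that decoupling explicit: you derive the closed forms of $M(u)$ and $N(v)$, use unique factorization of the diagonal entries of $M(u)N(v)$ to force $|u_1|_0=|u_2|_0$, $|u_1|_1=|u_2|_1$, $|v_1|_0=|v_2|_0$, $|v_1|_1=|v_2|_1$ and then $g(u_1)=g(u_2)$, $k(v_1)=k(v_2)$, and finally recover $u$ (resp.~$v$) from $G(u)=\sum_{m}8^{j_m}27^{m-1}$ by the greedy interval argument, which hinges on $8>\tfrac{27}{26}$. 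This is fully self-contained, bypassing the cited freeness criteria entirely, and it exposes exactly why the choice of $4,9,\tfrac12,\tfrac13$ makes the construction work, whereas the paper's argument is shorter but delegates the core injectivity to the external $2\times 2$ freeness results and is a bit terse about how freeness of the two subsemigroups yields injectivity of the combined map. Both are correct; yours trades brevity for transparency and independence from \cite{CHK99}.
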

\begin{proof}
Let $A=\varphi((0,\varepsilon))=\begin{psmallmatrix}
4 & 0 & 0\\
0 & \frac{1}{2} & 0\\
0 & 0 & \frac{1}{2}
\end{psmallmatrix}$, $B=\varphi((1,\varepsilon))=\begin{psmallmatrix}
9 & \frac{1}{3} & 0\\
0 & \frac{1}{3} & 0\\
0 & 0 & \frac{1}{3}
\end{psmallmatrix}$, $C=\varphi((\varepsilon,0))=\begin{psmallmatrix}
\frac{1}{2} & 0 & 0\\
0 & \frac{1}{2} & 0\\
0 & 0 & 4
\end{psmallmatrix}$ and $D=\varphi((\varepsilon,1))=\begin{psmallmatrix}
\frac{1}{3} & 0 & 0\\
0 & \frac{1}{3} & 0\\
0 & \frac{1}{3} & 9
\end{psmallmatrix}$. It is easy to see that the relations of \eqref{relation} hold. For example, the relation $AB\neq BA$ holds since
\begin{align*}
AB=\begin{pmatrix}
4 & 0 & 0\\
0 & \frac{1}{2} & 0\\
0 & 0 & \frac{1}{2}
\end{pmatrix}\begin{pmatrix}
9 & \frac{1}{3} & 0\\
0 & \frac{1}{3} & 0\\
0 & 0 & \frac{1}{3}
\end{pmatrix} &= \begin{pmatrix}
36&\frac{4}{3}&0\\0&\frac{1}{6}&0\\0&0&\frac{1}{6}
\end{pmatrix} \\
BA=\begin{pmatrix}
9 & \frac{1}{3} & 0\\
0 & \frac{1}{3} & 0\\
0 & 0 & \frac{1}{3}
\end{pmatrix}\begin{pmatrix}
4 & 0 & 0\\
0 & \frac{1}{2} & 0\\
0 & 0 & \frac{1}{2}
\end{pmatrix} &= \begin{pmatrix}
36 & \frac{1}{6} & 0 \\ 0 & \frac{1}{6} & 0 \\ 0 & 0 & \frac{1}{6}
\end{pmatrix}.
\end{align*}
On the other hand, 
\begin{align*}
AD=\begin{pmatrix}
4 & 0 & 0\\
0 & \frac{1}{2} & 0\\
0 & 0 & \frac{1}{2}
\end{pmatrix}\begin{pmatrix}
\frac{1}{3} & 0 & 0\\
0 & \frac{1}{3} & 0\\
0 & \frac{1}{3} & 9
\end{pmatrix} &= \begin{pmatrix}
\frac{4}{3}&0&0\\0&\frac{1}{6}&0\\0&\frac{1}{6}&\frac{9}{2}
\end{pmatrix} = 
\begin{pmatrix}
\frac{1}{3} &0& 0\\
0 & \frac{1}{3} & 0\\
0 &  \frac{1}{3} & 9
\end{pmatrix}\begin{pmatrix}
4 & 0 & 0\\
0 & \frac{1}{2} & 0\\
0 & 0 & \frac{1}{2}
\end{pmatrix}=DA.
\end{align*}

Next, we show that the pairs $\{A,B\}$ and $\{C,D\}$ generate free semigroups. Denote by $A'=\begin{psmallmatrix}4&0\\0&\frac{1}{2}\end{psmallmatrix}$ and $B'=\begin{psmallmatrix}9&\frac{1}{3}\\0&\frac{1}{3}\end{psmallmatrix}$ the top left 2-by-2 blocks of $A$ and $B$ respectively. By Lemma 3 of \cite{CHK99}, $\{A',B'\}^+$ is free if and only if $\{\lambda A',\mu B'\}$ for some $\lambda,\mu\in \Q\setminus \{0\}$. Let $\lambda=2$ and $\mu=3$ and denote $A''=\lambda A'=\begin{psmallmatrix}8&0\\0&1\end{psmallmatrix}$ and $B''=\mu B'=\begin{psmallmatrix}27&1\\0&1\end{psmallmatrix}$. Further, by Proposition 3 of \cite{CHK99}, if $\frac{1}{|a|}+\frac{1}{|b|}\leq1$, then $\{A'',B''\}^+$ is free, where $a$ and $b$ are the elements in top left corner of $A''$ and $B''$ respectively. In our case, the condition holds as $\frac{1}{8}+\frac{1}{27}=\frac{35}{216}<1$ and hence $\{A'',B''\}^+$ and $\{A',B'\}^+$ are free groups. It is easy to see that also $\{A,B\}^+$ is a free group. Proving that $C$ and $D$ generate a free semigroup is done analogously. 

As the matrices $A$, $B$, $C$ and $D$ satisfy the relations of \eqref{relation} and pairwise generate free groups, we conclude that $\varphi$ is an embedding of $\Sigma^*\times\Sigma^*$.
\end{proof}

Note that in the previous lemma, all matrices have a Jordan normal form of $\begin{psmallmatrix}\lambda&0&0\\0&\mu&0\\0&0&\mu\end{psmallmatrix}$, where $\lambda\neq\mu$.
Next, we consider existence of an embedding into \slthreez. Lemmas~\ref{lem:3eval}, \ref{lem:2evalNotDiag} and \ref{lem:1eval} can be applied to matrices of \slthreez with a caveat that the matrices are no longer in \slthreez after $A$ is transformed into a Jordan normal form. In these cases, it does not lead to problems as the contradictions derived in \slthreec do not rely on properties of complex numbers, so they are applicable to integral matrices as well. In \slthreez, we can prove that an embedding, where the Jordan normal form of matrices is $\begin{psmallmatrix}\lambda&0&0\\0&\mu&0\\0&0&\mu\end{psmallmatrix}$, does not exist.

\begin{lemma}\label{lem:another2eval}
Let $\Sigma=\{0,1\}$. If there is an injective morphism $\varphi: \Sigma^* \times \Sigma^* \to {\rm SL}(3,\mathbb{Z})$ and the matrices~$A,B,C$ and $D$ correspond to 
$\varphi((0,\varepsilon))$, $\varphi((1,\varepsilon))$, $\varphi((\varepsilon,0))$ and $\varphi((\varepsilon,1))$ respectively, then the Jordan normal form of matrices $A,B,C$ and $D$ is not
$
\begin{psmallmatrix}
\lambda & 0 & 0\\
0 & \mu & 0\\
0 & 0 & \mu
\end{psmallmatrix}$.
\end{lemma}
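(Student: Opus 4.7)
The plan is to show that if $A=\varphi((0,\varepsilon))$ had Jordan normal form $\operatorname{diag}(\lambda,\mu,\mu)$ with $\lambda\neq\mu$, then integrality of $A$ would force $A^{2}=\bm{I}_{3}$, contradicting injectivity of $\varphi$. By the symmetry already noted at the start of the section, the same conclusion then rules out this Jordan form for $B$, $C$, and $D$ as well. Note that this obstruction is genuinely an $\slthreez$ phenomenon: the explicit rational embedding constructed earlier in the section shows that the present Jordan form is realisable over $\mathbb{Q}$, so the proof must use the fact that the eigenvalues are algebraic integers.

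First I would exploit that $A\in\slthreez$ to conclude $\chi_{A}(x)=(x-\lambda)(x-\mu)^{2}\in\mathbb{Z}[x]$ and $\lambda\mu^{2}=\det A=1$; in particular $\lambda,\mu$ are algebraic integers. The heart of the argument is then to show $\mu\in\mathbb{Q}$ (equivalently $\mu\in\mathbb{Z}$). Let $m_{\mu}(x)\in\mathbb{Q}[x]$ be the minimal polynomial of $\mu$; it divides $\chi_{A}(x)$ in $\mathbb{Q}[x]$. If $\deg m_{\mu}=3$, then $\chi_{A}=m_{\mu}$ would be irreducible, hence separable, contradicting the repeated root $\mu$. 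If $\deg m_{\mu}=2$, then $\chi_{A}(x)=m_{\mu}(x)(x-r)$ for some $r\in\mathbb{Q}$; the two distinct Galois conjugates of $\mu$ (the roots of $m_{\mu}$) together with $r$ must form the multiset $\{\lambda,\mu,\mu\}$, but the only way to account for the multiplicity two of $\mu$ is $r=\mu$, which forces $\mu\in\mathbb{Q}$ and contradicts $\deg m_{\mu}=2$. Therefore $\deg m_{\mu}=1$, i.e.\ $\mu\in\mathbb{Q}$, and since rational algebraic integers are rational integers, $\mu\in\mathbb{Z}$; then $\lambda=\mu^{-2}\in\mathbb{Z}$ as well.

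From $\lambda\mu^{2}=1$ with $\lambda,\mu\in\mathbb{Z}$ we obtain $\mu\in\{\pm1\}$ and $\lambda=1$; together with $\lambda\neq\mu$ this forces $(\lambda,\mu)=(1,-1)$. Because the assumed Jordan form is diagonal, $A$ is similar to $\operatorname{diag}(1,-1,-1)$, so $A^{2}$ is similar to $\bm{I}_{3}$ and hence equal to $\bm{I}_{3}$. Consequently $\varphi((000,\varepsilon))=A^{3}=A=\varphi((0,\varepsilon))$, violating injectivity of $\varphi$. The main obstacle is the Galois/minimal-polynomial step that eliminates the case $\mu\notin\mathbb{Q}$; once $\lambda$ and $\mu$ are known to be rational integers, the conclusion is forced by the determinant condition and the diagonalisability of the assumed Jordan form.
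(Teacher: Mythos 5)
Your proposal is correct in spirit and takes a genuinely different route from the paper, which is worth highlighting because your argument is in fact tighter than the one printed. The paper works very concretely: it invokes the explicit cubic formula to show $\lambda,\mu\in\mathbb{Q}$, then an ad hoc denominator argument to force $\lambda\in\mathbb{Z}$ and $2\mu\in\mathbb{Z}$, arrives at $\lambda k^2=4$, and finally kills $\lambda=4$, $\mu=\pm\tfrac{1}{2}$ via $\tr(A^2)\notin\mathbb{Z}$. You instead deduce $\mu\in\mathbb{Q}$ (hence $\mu\in\mathbb{Z}$) from the minimal polynomial dividing $\chi_A$, using separability of irreducible polynomials in characteristic zero; this is shorter, avoids the cubic formula entirely, and simultaneously disposes of the complex-eigenvalue case. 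More importantly, your proof handles a case the paper silently drops: the paper asserts ``clearly $\lambda\neq 1$ as then $\mu$ is also $1$,'' but $\lambda=1$, $\mu=-1$ is consistent with $\lambda\mu^2=1$, $\lambda\neq\mu$ and $\tr(A^2)=3\in\mathbb{Z}$, so the trace argument does not exclude it. Your observation that diagonalisability together with $(\lambda,\mu)=(1,-1)$ forces $A^2=\bm{I}_3$, whence $\varphi((00,\varepsilon))=\varphi((\varepsilon,\varepsilon))$, is exactly the missing ingredient.

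One small step in your write-up needs tightening: from $\mu\in\mathbb{Z}$ you assert ``then $\lambda=\mu^{-2}\in\mathbb{Z}$ as well,'' but $1/\mu^2\in\mathbb{Z}$ only once one already knows $|\mu|=1$, which is what you are trying to establish. The fix is immediate and stays within your framework: $\lambda$ is also a root of the monic integer polynomial $\chi_A$, hence an algebraic integer, and $\lambda=\tr(A)-2\mu\in\mathbb{Q}$ once $\mu\in\mathbb{Q}$; a rational algebraic integer lies in $\mathbb{Z}$, so $\lambda\in\mathbb{Z}$. With $\lambda,\mu\in\mathbb{Z}$ and $\lambda\mu^2=1$ the rest of your argument goes through unchanged: $\mu^2\mid 1$ forces $\mu=\pm1$, $\lambda=1$, and $\lambda\neq\mu$ gives $(\lambda,\mu)=(1,-1)$.
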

\begin{proof}
As in the previous proofs, we assume that $A=\begin{psmallmatrix}
\lambda & 0 &0 \\ 0&\mu&0 \\ 0&0&\mu
\end{psmallmatrix}$. Note that $A,B,C,D\in\slthreec$. Observe that $\det(A)=\lambda\mu^2=1$ and $\tr(\varphi(0,\varepsilon))=\tr(A)=\lambda+2\mu\in\Z$. We claim that $\lambda\in\Z$ and $\mu\in\Q$. Let $A_1=\varphi(0,\varepsilon)$. First, we can rule out that the eigenvalues are complex. Indeed, $\lambda$ and $\mu$ are the solutions to the characteristic polynomial 
\begin{align*}
x^3+\tr(A_1)x^2-\frac{\tr(A_1)^2-\tr(A_1^2)}{2}x+\frac{\tr(A_1)^3+2\tr(A_1^3)-3\tr(A_1)\tr(A_1^2)}{6}
\end{align*}
with rational coefficients. It is well-known that a cubic equation with real coefficients has a complex roots only if there are three distinct roots. In our case, $\mu$ is a double root, and hence both $\lambda$ and $\mu$ are real. Next we show that $\lambda$ and $\mu$ are not irrational. From the general solution for a cubic equation, it follows that \begin{align*}
\lambda &=\frac{2\tr(A_1)(\tr(A_1)^2-\tr(A_1^2))-9\frac{\tr(A_1)^3+2\tr(A_1^3)-3\tr(A_1)\tr(A_1^2)}{6}-\tr(A_1)^3}{-\tr(A_1)^2+3\frac{\tr(A_1)^2-\tr(A_1^2)}{2}} \text{ and} \\
\mu &= \frac{-9\frac{\tr(A_1)^3+2\tr(A_1^3)-3\tr(A_1)\tr(A_1^2)}{6}+\tr(A_1)\frac{\tr(A_1)^2-\tr(A_1^2)}{2}}{2\tr(A_1)^2-3(\tr(A_1)^2-\tr(A_1^2))}.
\end{align*} 
It is clear that both eigenvalues are rational. Finally, we prove that $\lambda$ is in fact integer. Assume to the contrary that $\lambda$ is a rational number represented by irreducible fraction $\frac{n}{m}$. Since $\det(A)=\lambda\mu^2=1$, it follows that $\mu=\frac{\sqrt{m}}{\sqrt{n}}$. Now $\lambda+2\mu=\frac{n}{m}+2\frac{\sqrt{m}}{\sqrt{n}}\in\Z$ only if the denominators are equal, that is, $\sqrt{n}=m$. Then $m^2=n$ and $\lambda=\frac{n}{m}=\frac{m^2}{m}=m\in\Z$ which contradicts our assumption that $\lambda$ is a rational number. Hence, $\lambda$ is an integer. From $\lambda+2\mu\in\Z$ it follows that $2\mu\in \Z$. Denote $\mu=\frac{k}{2}$ for some $k\in\Z$. Now, $\lambda \frac{k^2}{4}=1$ and thus $\lambda k^2=4$. The only integer solutions for $\lambda$ are 1 or 4. Clearly $\lambda\neq1$ as then $\mu$ is also 1 which contradicts our assumption that $\lambda$ and $\mu$ are distinct.

That is, we have concluded that $\lambda=4$ and $\mu=\frac{1}{2}$. Consider then the trace of $A^2$ which is also an integer. Indeed, $\tr(A^2)=\tr(A_1^2)\in\Z$. That is $\lambda^2+2\mu^2=16+2\frac{1}{4}\in\Z$, which is a contradiction.
\end{proof}

With the previous lemmas, we have ruled out the possible Jordan normal forms of potential embeddings into \slthreez. The final Jordan normal form is ruled out in the next lemma.

\begin{lemma}\label{lem:finalform}
Let $\Sigma=\{0,1\}$. If there is an injective morphism $\varphi: \Sigma^* \times \Sigma^* \to {\rm SL}(3,\mathbb{Z})$ and the matrices~$A,B,C$ and $D$ correspond to 
$\varphi((0,\varepsilon))$, $\varphi((1,\varepsilon))$, $\varphi((\varepsilon,0))$ and $\varphi((\varepsilon,1))$ respectively, then the Jordan normal form of matrices $A,B,C$ and $D$ is not
$
\begin{psmallmatrix}
\lambda & 1 & 0\\
0 & \lambda & 0\\
0 & 0 & \lambda
\end{psmallmatrix}$.
\end{lemma}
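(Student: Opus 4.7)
The plan is to pin down the eigenvalue $\lambda$ and then parametrise all four matrices as rank-$1$ unipotents, reducing everything to linear-algebra computations in $\C^3$. First, since $A \in \slthreez$ has characteristic polynomial $(x-\lambda)^3 \in \Z[x]$, we have $\tr(A) = 3\lambda \in \Z$, so $\lambda$ is rational; combined with $\det(A) = \lambda^3 = 1$ this forces $\lambda = 1$. The same holds for $B,C,D$. Hence, after conjugation by some $X \in \C^{3\times 3}$ that brings $A$ into Jordan form, each of the four matrices has the shape $I + N$ with $N$ a rank-$1$ nilpotent, and I would parametrise $N_X = v_X w_X^\mathsf{T}$ with $v_X, w_X \in \C^3$ nonzero and $w_X^\mathsf{T} v_X = 0$.

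Next, I express commutation through this factorisation. Since $N_X N_Y = (w_X^\mathsf{T} v_Y)\, v_X w_Y^\mathsf{T}$, the identity $XY = YX$ is equivalent to
\[
(w_X^\mathsf{T} v_Y)\, v_X w_Y^\mathsf{T} \;=\; (w_Y^\mathsf{T} v_X)\, v_Y w_X^\mathsf{T}.
\]
Comparing the column and row spans of these rank-at-most-$1$ matrices splits the relation into two exhaustive possibilities: either (I) $w_X^\mathsf{T} v_Y = 0$ and $w_Y^\mathsf{T} v_X = 0$, or (II) $N_X$ and $N_Y$ are nonzero scalar multiples of each other (equivalently, $v_X \parallel v_Y$ and $w_X \parallel w_Y$).

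I then claim that possibility (II) is forbidden for all four commutations $AC=CA$, $AD=DA$, $BC=CB$, $BD=DB$. Suppose, for instance, $N_C \parallel N_A$, so $v_C \parallel v_A$ and $w_C \parallel w_A$. Whichever way $BC=CB$ holds, $A$ and $B$ must then commute: in case (I), substituting these parallelisms into $w_B^\mathsf{T} v_C = w_C^\mathsf{T} v_B = 0$ gives $w_B^\mathsf{T} v_A = w_A^\mathsf{T} v_B = 0$, so (I) applies to $(A,B)$; in case (II), chaining $N_B \parallel N_C \parallel N_A$ makes (II) apply to $(A,B)$. Both alternatives yield $AB=BA$, contradicting $ab\neq ba$. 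Symmetric arguments exclude type (II) for the other three commutations. Therefore all four relations hold via (I), translating to
\[
v_C, v_D \;\in\; w_A^\perp \cap w_B^\perp \qquad\text{and}\qquad w_C, w_D \;\in\; v_A^\perp \cap v_B^\perp,
\]
where $\perp$ is taken with respect to the bilinear form $x^\mathsf{T} y$ on $\C^3$.

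Finally, a three-way case split on whether $\{w_A, w_B\}$ and $\{v_A, v_B\}$ are linearly independent finishes the proof. If $w_A, w_B$ are linearly independent, $w_A^\perp \cap w_B^\perp$ is one-dimensional, so $v_C \parallel v_D$; writing $v_D = \delta v_C$ and using $w_D^\mathsf{T} v_D = 0$ yields $w_D^\mathsf{T} v_C = 0$, and similarly $w_C^\mathsf{T} v_D = 0$, so by (I) $CD = DC$, contradicting $cd \neq dc$. A symmetric argument handles $v_A, v_B$ linearly independent. The remaining case $v_A \parallel v_B$ and $w_A \parallel w_B$ gives $N_A \parallel N_B$, so by (II) $AB = BA$, the final contradiction. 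The main obstacle I expect is the bookkeeping in the third step that systematically excludes possibility (II); once only (I) is in force, the dimension-counting finish is immediate.
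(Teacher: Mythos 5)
Your proof is correct, and it takes a genuinely different route from the paper. The paper works coordinate-by-coordinate in the Jordan basis of $A$: it derives the explicit form of $C$ and $D$ from the relations $AC=CA$ and $AD=DA$, uses the previously established single-eigenvalue property to conclude $a_C=\ell_C$ and $a_D=\ell_D$, and then runs a case analysis on whether the entries $c_C$ and $h_C$ vanish, each time using $BC=CB$ and $BD=DB$ to force $CD=DC$. You instead work coordinate-free: writing each matrix as $I+N$ with $N=vw^\mathsf{T}$ a rank-one nilpotent, translating commutation into conditions on the bilinear pairings $w_X^\mathsf{T} v_Y$, and closing by dimension-counting on the orthogonal complements $w_A^\perp \cap w_B^\perp$ and $v_A^\perp \cap v_B^\perp$. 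That is a cleaner structural argument, and identifying $\lambda=1$ via integrality up front (the paper never makes this explicit, though it is implied by unimodularity) tightens the setup nicely.

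One small imprecision worth flagging: your cases (I) and (II) are not an exclusive dichotomy; in fact (II) is a subcase of (I). If $N_X \parallel N_Y$, say $v_Y=\mu v_X$ and $w_Y=\nu w_X$, then $w_X^\mathsf{T} v_Y = \mu\, w_X^\mathsf{T} v_X = 0$ by nilpotency, and symmetrically $w_Y^\mathsf{T} v_X = 0$, so (I) already holds. Conversely, the ``both sides nonzero'' scenario you implicitly pair with (II) can never occur: if $w_X^\mathsf{T} v_Y \neq 0$, then comparing column spans of $(w_X^\mathsf{T} v_Y)\,v_X w_Y^\mathsf{T}$ and $(w_Y^\mathsf{T} v_X)\,v_Y w_X^\mathsf{T}$ forces $v_X\parallel v_Y$, which makes $w_X^\mathsf{T} v_Y = 0$, a contradiction. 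So (I) is the unique consequence of commutation, and the step where you exclude (II) for the four commuting pairs is actually unnecessary to reach the orthogonality relations $v_C, v_D \in w_A^\perp \cap w_B^\perp$ and $w_C, w_D \in v_A^\perp \cap v_B^\perp$. Your exclusion argument is harmless extra work, and the final three-way split (on whether $\{w_A,w_B\}$ or $\{v_A,v_B\}$ is independent) is sound, so the proof goes through.
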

\begin{proof}
Assume to the contrary that exists an injective morphism $\varphi$ from $\Sigma^* \times \Sigma^*$ into \slthreez.	Since the conjugation by an invertible matrix does not influence the injectivity, we suppose that the image of $a$ is in the Jordan normal form $\begin{psmallmatrix}
\lambda & 1 & 0\\
0 & \lambda & 0\\
0 & 0 & \lambda
\end{psmallmatrix}$ as the other form have been ruled out in the previous lemmas. %Let us denote the images of the generators $a$, $b$, $c$ and $d$,  as $A$, $B$, $C$ and $D$, respectively. 
By $A$, $B$, $C$ and $D$ we denote the images of the generators, $a$, $b$, $c$ and $d$, conjugated by the matrix transforming $A$ into the Jordan normal form. Then we have the following matrices corresponding to the generators~$a$, $b$, $c$ and $d$ as follows: 
{\small\begin{align*}
\!\!\!\!\!A &=\begin{pmatrix}
\lambda & 1 & 0\\
0 & \lambda & 0\\
0 & 0 & \lambda
\end{pmatrix},   &         B &= \begin{pmatrix}
a_B & b_B & c_B\\
d_B & e_B & f_B\\
g_B & h_B & \ell_B
\end{pmatrix}, &
C&=\begin{pmatrix}
a_C & b_C & c_C\\
d_C & e_C & f_C\\
g_C & h_C & \ell_C
\end{pmatrix},    &        D &=\begin{pmatrix}
a_D & b_D & c_D\\
d_D & e_D & f_D \\
g_D & h_D & \ell_D
\end{pmatrix}.
\end{align*}}
Note again that $B,C,D\in\slthreec$.

Since $A$ and $C$ commute with each other by one of 
the given relations in~\eqref{relation}, 
we have
{\small\begin{align*}
AC
=
\begin{pmatrix}
\lambda a_C + d_C & \lambda b_C + e_C & \lambda c_C + f_C\\
\lambda d_C & \lambda e_C & \lambda f_C\\
\lambda g_C & \lambda h_C & \lambda\ell_C
\end{pmatrix} = 
\begin{pmatrix}
\lambda a_C & a_C + \lambda b_C & \lambda c_C\\
\lambda d_C & d_C + \lambda e_C & \lambda f_C\\
\lambda g_C & g_C + \lambda h_C & \lambda \ell_C
\end{pmatrix} 
= CA.
\end{align*}}
It is easy to see 
that $d_C = g_C = f_C = 0$ and $a_C = e_C$. 
Therefore, 
{\small\begin{align*}
C = 
\begin{pmatrix}
a_C & b_C & c_C\\
0 & a_C & 0\\
0 & h_C &\ell_C
\end{pmatrix} \text{ and }
D = 
\begin{pmatrix}
a_D & b_D & c_D\\
0 & a_D & 0\\
0 & h_D & \ell_D
\end{pmatrix}.
\end{align*}}
Since $\varphi(c)$ and $\varphi(d)$ are in \slthreec, the determinants of $C$ and $D$ 
are 1. Now, the determinant of $C$ is 
$a_C^2 \ell_C$ and the eigenvalues are $a_C$ and $\ell_C$. As $C$ is similar to $\varphi(c)$, the matrices have the same eigenvalues. Now, $a_C = \ell_C$ as other Jordan normal forms have been ruled out previously. Analogously, we can also 
see that $a_D = \ell_D$.
Next, we observe that the matrices $C$ and $D$ commute if and only if $c_C h_D = c_D h_C$. Indeed, 
\begin{align*}
\!\!\!\!\!CD = 
\begin{pmatrix}
 a_C & b_C & c_C\\
0 &  a_C & 0\\
0 & h_C & a_C
\end{pmatrix}
\begin{pmatrix}
 a_D & b_D & c_D\\
0 &  a_D & 0\\
0 & h_D & a_D
\end{pmatrix} 
&= 
\begin{pmatrix}
 a_Ca_D & b_Ca_D + a_Cb_D + c_C h_D & c_Ca_D + a_Cc_D\\
0 &  a_Ca_D & 0\\
0 & h_Ca_D + a_Ch_D &a_Ca_D
\end{pmatrix} \text{ and} \\
DC = 
\begin{pmatrix}
 a_D & b_D & c_D\\
0 &  a_D & 0\\
0 & h_D &a_D
\end{pmatrix}
\begin{pmatrix}
 a_C & b_C & c_C\\
0 &  a_C & 0\\
0 & h_C &a_C
\end{pmatrix} 
&= 
\begin{pmatrix}
 a_Da_C & b_Da_C + a_Db_C + c_D h_C & c_Da_C + a_Dc_C\\
0 &  a_Da_C & 0\\
0 & h_Da_C + a_Dh_C &a_Da_C
\end{pmatrix}.
\end{align*}
By relations~\eqref{relation}, $C$ and $D$ do not commute and hence there are three cases to be considered: 
\begin{enumerate}
\item $c_C = 0$  and $h_C \ne 0$; 
\item $c_C \ne 0$  and  $h_C = 0$;
\item $c_C \ne 0$  and  $h_C \ne 0$. 
\end{enumerate}

We prove that each case leads to a contradiction, i.e., that $C$ and $D$ commute. 
Let us examine the three cases in more details.

First, let us consider the case where $c_C = 0$ and $h_C \ne 0$.
We know that $c_D$ is also non-zero because 
otherwise $C$ and $D$ commute with each other since 
$c_C h_D = c_D h_C = 0$. We have the following calculations:
\begin{align*}
BC &= 
\begin{pmatrix}
a_B & b_B & c_B\\
d_B & e_B & f_B\\
g_B & h_B & \ell_B
\end{pmatrix}
\begin{pmatrix}
 a_C & b_C & 0\\
0 &  a_C & 0\\
0 & h_C &a_C
\end{pmatrix}
=
\begin{pmatrix}
a_Ba_C & a_B b_C + b_B a_C + c_B h_C & c_Ba_C\\
d_Ba_C &  d_B b_C + e_Ba_C + f_B h_C & f_Ba_C\\
g_Ba_C & g_B b_C + h_Ba_C + \ell_B h_C & \ell_Ba_C
\end{pmatrix} \text{ and} \\
CB &= 
\begin{pmatrix}
 a_C & b_C & 0\\
0 &  a_C & 0\\
0 & h_C &a_C
\end{pmatrix}
\begin{pmatrix}
a_B & b_B & c_B\\
d_B & e_B & f_B\\
g_B & h_B & \ell_B
\end{pmatrix}
=
\begin{pmatrix}
a_Ba_C + d_B b_C & b_Ba_C + e_B b_C & c_Ba_C + f_B b_C\\
d_Ba_C &  e_Ba_C & f_Ba_C\\
d_B h_C + g_Ba_C & e_B h_C + h_Ba_C & f_B h_C + \ell_Ba_C
\end{pmatrix}.
\end{align*}

Since $BC = CB$, we have $d_B b_C = 0$, $d_B h_C = 0$, $f_B b_C = 0$, and 
$f_B h_C = 0$. By the supposition $h_C \ne 0$, we further deduce that 
$d_B = f_B = 0$. Then 
$
B = 
\begin{psmallmatrix}
a_B & b_B & c_B\\
0 & e_B & 0\\
g_B & h_B & \ell_B
\end{psmallmatrix}.
$
Note that we also have 
\begin{align}\label{eq:fromBC}
a_Bb_C + c_Bh_C = e_B b_C \mbox{ and }g_Bb_C + \ell_Bh_C = e_Bh_C
\end{align}
by the equality~$BC = CB$.

The characteristic polynomial of $B$ is
\begin{align*}
P(x) = -x^3 + \tr(B) x^2 - (a_Be_B + a_B\ell_B + e_B\ell_B - c_Bg_B)x + \det(B)
\end{align*}
which has roots $\lambda = e_B$ and $\lambda = \frac{1}{2}(a_B + \ell_B \pm \sqrt{(a_B -\ell_B)^2 + 4c_Bg_B})$. By the previous considerations, we know that $B$ has only one eigenvalue and therefore, we have $a_B =e_B = \ell_B$ and $c_Bg_B = 0$.

Moreover, it follows from \eqref{eq:fromBC} that $c_B = 0$ and $g_B b_C = 0$.
Note that $g_B \ne 0$ because otherwise the matrix $B$ commutes with $A$. 
Finally, we consider
{\small\begin{align*}
BD &= 
\begin{pmatrix}
a_B & b_B & 0\\
0 & a_B & 0\\
g_B & h_B & a_B
\end{pmatrix}
\begin{pmatrix}
 a_D & b_D & c_D\\
0 &  a_D & 0\\
0 & h_D &a_D
\end{pmatrix}
=
\begin{pmatrix}
a_Ba_D & b_Ba_D + a_Bb_D  & a_Bc_D\\
0 &  a_Ba_D  & 0\\
g_Ba_D &  g_Bb_D + h_Ba_D + a_Bh_D & a_Ba_D g_Bc_D
\end{pmatrix}\\
DB&=\begin{pmatrix}
 a_D & b_D & c_D\\
0 &  a_D & 0\\
0 & h_D &a_D
\end{pmatrix}
\begin{pmatrix}
a_B & b_B & 0\\
0 & a_B & 0\\
g_B & h_B & a_B
\end{pmatrix}=
\begin{pmatrix}
a_Da_B + c_D g_B  & a_Db_B + b_Da_D + c_D h_B & c_Da_B\\
0 &  a_Da_B & 0\\
g_Ba_D & a_Dh_B + h_Da_B & a_Da_B
\end{pmatrix}.
\end{align*}}
It is easy to see that $g_Bb_D = g_Bc_D = 0$ and thus $b_D=c_D=0$, and then $D$ commutes with $C$. Therefore, we have a contradiction.

Let us consider the second case where $c_C \ne 0$ and $h_C = 0$. It is quite similar to the previous case.
Consider the matrix~$B$ which commutes with $C$ as follows:
\begin{align*}
BC &= \begin{pmatrix}
a_B & b_B & c_B\\
d_B & e_B & f_B\\
g_B & h_B & \ell_B
\end{pmatrix}
\begin{pmatrix}
 a_C & b_C & c_C\\
0 &  a_C & 0\\
0 & 0 &a_C
\end{pmatrix} =
\begin{pmatrix}
a_Ba_C & a_B b_C + b_Ba_C  & a_B c_C + c_Ba_C\\
d_Ba_C &  d_B b_C + e_Ba_C  & d_B c_C + f_Ba_C\\
g_Ba_C & g_B b_C + h_Ba_C  &g_B c_C + \ell_Ba_C
\end{pmatrix}\\
&=
\begin{pmatrix}
a_Ba_C + d_B b_C + g_B c_C & b_Ba_C + e_B b_C + h_B c_C & c_Ba_C + f_B b_C + \ell_B c_C\\
d_Ba_C &  e_Ba_C & f_Ba_C\\
 g_Ba_C &  h_Ba_C &  \ell_Ba_C
\end{pmatrix}\\&
=
\begin{pmatrix}
 a_C & b_C & c_C\\
0 &  a_C & 0\\
0 & 0 &a_C
\end{pmatrix}
\begin{pmatrix}
a_B & b_B & c_B\\
d_B & e_B & f_B\\
g_B & h_B & \ell_B
\end{pmatrix}
=
CB.
\end{align*}
By the equivalence, we have $d_B b_C = 0$, $g_B b_C = 0$, $g_B c_C = 0$, and 
$d_B c_C = 0$. By the supposition $c_C \ne 0$, we further deduce that 
$d_B = g_B = 0$. Then $B$ is of the following form:
$
B = 
\begin{psmallmatrix}
a_B & b_B & c_B\\
0 & e_B & f_B\\
0 & h_B & \ell_B
\end{psmallmatrix}.
$
Note that we also have 
\begin{align}\label{eq:fromBC2}
a_B b_C = e_Bb_C + h_Bc_C \mbox{ and }a_Bc_C = f_Bb_C + \ell_Bc_C
\end{align}
by the equality~$BC = CB$.

The characteristic polynomial of $B$ is
$P(x) = -x^3 + \tr(B) x^2 - (a_Be_B + a_B\ell_B + e_B\ell_B - f_Bh_B)x + \det(B)$ which 
has roots $\lambda = e_B$ and $\lambda = \frac{1}{2}(e_B + \ell_B \pm \sqrt{(a_B -\ell_B)^2 + 4f_Bh_B})$. We know that $B$ has only one eigenvalue by the previous considerations and therefore, we have $a_B =e_B = \ell_B$ and $f_Bh_B = 0$.

We can further deduce from \eqref{eq:fromBC2} that $h_B = 0$ and $f_Bb_C = 0$. By a similar argument for the matrices $B$ and $D$ that should commute with each other as in the first case, we have a contradiction.
 
Finally, consider the third case where $c_C \ne 0$ and $h_C \ne 0$.
It is obvious that 
$c_D$ and $h_D$ are also non-zero because otherwise $C$ and $D$ would 
commute.
Now consider the matrix~$B$ which is commuting with $C$ and $D$.
We can deduce from the relation~$BC = CB$ that 
$d_B = g_B = f_B = 0$ and $a_B = e_B = \ell_B$ since they are 
eigenvalues of $B$. Hence, 
$
B = 
\begin{psmallmatrix}
a_B & b_B & c_B\\
0 & a_B & 0\\
0 & h_B & a_B
\end{psmallmatrix}.
$

Now we have $c_C h_B = c_B h_C$ since $B$ and $C$ commute with each other.
Note that $h_B$ and $c_B$ are both non-zero since $A$ and $B$ commute 
if $h_B = c_B = 0$. Let us denote $\frac{c_C}{h_C} = \frac{c_B}{h_B} = x$. 
We also have $c_D h_B = c_B h_D$ from the relation~$BD = DB$ and 
have $\frac{c_D}{h_D} = \frac{c_B}{h_B} = x$. From $x = \frac{c_C}{h_C} 
= \frac{c_D}{h_D}$, we have $c_C h_D = c_D h_C$ which results in 
the relation~$CD = DC$. Therefore, we also have a contradiction.
\end{proof}

\begin{theorem}
\label{NoIntoSL3Z}
There is no injective morphism
$
\varphi: \Sigma^* \times \Sigma^* \to {\rm SL}(3,\mathbb{Z})
$
for any binary alphabet~$\Sigma$.
\end{theorem}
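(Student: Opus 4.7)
The plan is to derive the theorem as an immediate corollary of Lemmas~\ref{lem:3eval}, \ref{lem:2evalNotDiag}, \ref{lem:1eval}, \ref{lem:another2eval} and \ref{lem:finalform} by exhausting the Jordan normal forms of $A=\varphi((0,\varepsilon))$. Suppose for contradiction that an injective morphism $\varphi:\Sigma^*\times\Sigma^*\to\mathrm{SL}(3,\mathbb{Z})$ exists. Since simultaneous conjugation by a fixed invertible complex matrix preserves both injectivity and the partial commutation relations in (\ref{relation}), we may bring $A$ into one of the six Jordan normal forms displayed in (\ref{eq:threeCigen})--(\ref{eq:oneCigen}), conjugating $B,C,D$ accordingly.

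Next I would pair each of the six forms with the lemma that excludes it: (\ref{eq:threeCigen}) by Lemma~\ref{lem:3eval}; the non-diagonal case of (\ref{eq:twoCigen}) by Lemma~\ref{lem:2evalNotDiag}; the diagonal case of (\ref{eq:twoCigen}) by Lemma~\ref{lem:another2eval}; the scalar and full-Jordan-block cases of (\ref{eq:oneCigen}) by Lemma~\ref{lem:1eval}; and the intermediate case of (\ref{eq:oneCigen}) by Lemma~\ref{lem:finalform}. In each case the conclusion forces two generators that must not commute in $\Sigma^*\times\Sigma^*$ to commute, contradicting injectivity of $\varphi$.

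There is no real obstacle at this step; the combinatorial and arithmetic work already lives inside the five preceding lemmas. The one subtle point worth noting is that conjugating over $\mathbb{C}$ takes us outside $\mathrm{SL}(3,\mathbb{Z})$, so the argument must be either field-independent (as in Lemmas~\ref{lem:3eval}, \ref{lem:2evalNotDiag} and \ref{lem:1eval}, whose contradictions come purely from symbolic matrix identities) or use integrality through conjugation-invariants such as the traces $\tr(A^k)$ and determinants of the unconjugated generators (as in Lemmas~\ref{lem:another2eval} and \ref{lem:finalform}). Once this compatibility has been flagged, the theorem is just the bookkeeping statement that glues the case analyses together.
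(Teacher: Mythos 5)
Your proposal is correct and takes essentially the same approach as the paper, which proves the theorem in one line by citing the same five lemmas as an exhaustive case analysis over the six Jordan normal forms. Your version is slightly more explicit in pairing each normal form with its lemma and in flagging the field-compatibility subtlety (which the paper addresses in the paragraph preceding Lemma~\ref{lem:another2eval}), but the argument is identical in substance.
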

\begin{proof}
Since we have examined all possible cases in Lemmas~\ref{lem:3eval}, \ref{lem:2evalNotDiag}, \ref{lem:1eval}, \ref{lem:another2eval} and \ref{lem:finalform} and found contradictions for every case, we can conclude that there is no injective morphism from 
$\Sigma^* \times \Sigma^*$ into the special linear group \slthreez.
\end{proof}

\begin{corollary}
There is no injective morphism
$
\varphi: \FG \times \FG \to \mathbb{Z}^{3\times 3}
$ for any binary group alphabet $\Gamma$.
\end{corollary}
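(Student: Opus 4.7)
The plan is to reduce to Theorem~\ref{NoIntoSL3Z}. Write $\Gamma=\{a,b,\overbar{a},\overbar{b}\}$ and let $\Sigma=\{a,b\}$ be its positive letters, so the natural inclusion $\iota:\Sigma^*\times\Sigma^*\hookrightarrow\FG\times\FG$ is an injective monoid morphism. Suppose for contradiction that there is an injective morphism $\varphi:\FG\times\FG\to\Z^{3\times3}$. Since $\FG\times\FG$ is a group, the image of $\varphi$ must lie in the group of units of $(\Z^{3\times3},\cdot)$, namely $\mathrm{GL}(3,\Z)$: concretely, $\varphi(w)\varphi(w^{-1})=\varphi(1)=\bm{I}_3$ forces $\varphi(w)$ to be invertible over $\Z$, so $\det\varphi(w)\in\{-1,+1\}$ for every $w$. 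Thus $\varphi\circ\iota$ is already an injective morphism $\Sigma^*\times\Sigma^*\to\mathrm{GL}(3,\Z)$, but it need not land in \slthreez.

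To force all determinants to $+1$, I would pre-compose with the doubling morphism $\psi:\Sigma^*\times\Sigma^*\to\Sigma^*\times\Sigma^*$ defined on the four generators by $\psi((a,\varepsilon))=(aa,\varepsilon)$, $\psi((b,\varepsilon))=(bb,\varepsilon)$, $\psi((\varepsilon,a))=(\varepsilon,aa)$ and $\psi((\varepsilon,b))=(\varepsilon,bb)$. On each component $\psi$ simply duplicates every letter of the word, which is injective by collapsing consecutive identical pairs, and so $\psi$ is an injective monoid morphism. Now set $\phi=\varphi\circ\iota\circ\psi$. It is injective as a composition of injective morphisms, and on each generator $g$ of $\Sigma^*\times\Sigma^*$ one has $\phi(g)=(\varphi\circ\iota)(g)^2$, hence $\det\phi(g)=(\pm1)^2=+1$. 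By multiplicativity of the determinant this propagates to every product, so $\phi$ takes values in \slthreez, contradicting Theorem~\ref{NoIntoSL3Z}.

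The essential work is already contained in Theorem~\ref{NoIntoSL3Z}; the only extra subtlety is bridging the gap between $\det\varphi(w)\in\{-1,+1\}$, forced by invertibility of $\FG\times\FG$, and $\det=+1$, required for membership in \slthreez. The squaring/doubling step closes exactly this gap, and I do not foresee any further obstacle. (One could equivalently restrict $\varphi$ to the index-$2$ subgroup $\{w:\det\varphi(w)=+1\}$ of $\FG\times\FG$, but the doubling morphism is cleaner because its image is literally a product of free monoids, which is what Theorem~\ref{NoIntoSL3Z} speaks about.)
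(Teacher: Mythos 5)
Your proof is correct and takes essentially the same route as the paper: both reduce to Theorem~\ref{NoIntoSL3Z} by noting that invertibility over $\Z^{3\times3}$ forces $\det\varphi(w)\in\{\pm1\}$ and then squaring to land in $\slthreez$. Your explicit doubling morphism $\psi:\Sigma^*\times\Sigma^*\to\Sigma^*\times\Sigma^*$ is a cleaner realisation of the squaring step, making precise what the paper phrases as ``a morphism $\psi$ with $\psi(x)=\varphi(x)\varphi(x)$ for each $x$'', which taken literally is not multiplicative.
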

\begin{proof}
We proceed by contradiction. Assume that there exists such an injective morphism~$\varphi$ 
from the set of pairs of words over a group alphabet to the set of matrices in 
$\mathbb{Z}^{3\times 3}$. Suppose that $A = \varphi((a,\varepsilon))$, where $a \in \Gamma$. 
Then the inverse matrix $A^{-1}$ corresponding to $(\overbar{a},\varepsilon)$ must be 
in $\mathbb{Z}^{3\times 3}$. This implies that the determinant of $A$ is $\pm1$ because 
otherwise the determinant of $A^{-1}$ becomes a non-integer. Consider then a morphism~$\psi$ such that $\psi(x)=\varphi(x)\varphi(x)$ for each $x\in \FG\times\FG$. It is clear that also $\psi$ is injective and that the determinant of the image is 1. By Theorem~\ref{NoIntoSL3Z}, such injective morphism $\psi$ does not exist even from semigroup alphabets and hence neither does~$\varphi$.
\end{proof}

\section{Decidability of the identity problem in the Heisenberg group}

The decidability of the identity problem in dimension three is a long standing open problem.
Following our findings on non-existence of embedding into \slthreez, 
 in this section we consider the decidability of an important subgroup of \slthreez ,
the Heisenberg group, which is well-known in the context of quantum mechanical systems \cite{Brylinski93,GU14,Kostant70}.
Recently a few decidability results have been obtained for a knapsack variant of the membership
problem in dimension three (i.e., \heis), where the goal was to solve a single matrix equation with a specific order of matrices \cite{KLZ16}.

In this section, we prove that the identity problem is decidable for the Heisenberg group over rational numbers.
First, we provide more intuitive solution for dimension three, i.e., \heisq, which still requires 
a number of techniques to estimate possible values of elements under permutations in matrix products. 
In the end of the section, we generalize the result for \heisnq case using analogies 
in the solution for dimension three.

%unknown in general, we look for a subclass of \slthreez for which the 
%identity problem could be decidable. The Heisenberg group is an interesting 
%subgroup of \slthreez which is useful in the description of one-dimensional 
%quantum mechanical systems.

Here we prove that the identity problem for matrix semigroups in the Heisenberg group over rationals is decidable by analysing the 
behaviour of multiplications especially in the upper-right coordinate 
of matrices. From Lemma~\ref{lem:commute}, it follows that the matrix multiplication is 
commutative in the Heisenberg group if and only if matrices have 
pairwise parallel superdiagonal vectors. So we analyse two cases of products 
for matrices with pairwise parallel and none pairwise parallel superdiagonal vectors
and then provide algorithms that solve the problem in polynomial time.
The most difficult part is showing that only limited number of conditions must be checked to guarantee the existence of a product that results in the identity.

%Before tackling the  for matrix semigroups in the 
%Heisenberg group, we first consider a special case where matrix 
%multiplication is commutative.

\begin{lemma}\label{lem:single}
Let $G = \{ M_1, M_2, \ldots, M_r \} \subseteq {\rm H}(3,\mathbb{Q})$ be a set of 
matrices from the Heisenberg group such that superdiagonal vectors of matrices are 
pairwise parallel. If there exists a sequence of matrices
$M = M_{i_1} M_{i_2} \cdots M_{i_k},$
where $i_j \in [1,r]$ for all $1 \le j \le k$, such that $\psi(M) = (0,0,c)$ for some $c \in \mathbb{Q}$, 
then, 
\begin{align*}
c = \sum_{j=1}^k (c_{i_j} - \frac{q}{2}a_{i_j}^2 )
\end{align*}
for some $q \in \mathbb{Q}$, dependent only on $G$.
\end{lemma}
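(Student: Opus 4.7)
The plan is to expand the product $M_{i_1}M_{i_2}\cdots M_{i_k}$ using the explicit group law on triples $\otimes$ given in the preliminaries, then exploit pairwise parallelism of the superdiagonal vectors to reduce the cross-term sum to a sum of squares.

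First, by induction on $k$, I would establish the closed formula
\begin{align*}
\psi(M_{i_1}\cdots M_{i_k}) = \Bigl(\sum_{j=1}^k a_{i_j},\ \sum_{j=1}^k b_{i_j},\ \sum_{j=1}^k c_{i_j} + \sum_{1\le j<j'\le k} a_{i_j}b_{i_{j'}}\Bigr).
\end{align*}
The first two coordinates are immediate from the additive structure in the first two arguments of $\otimes$, and the third follows by a straightforward induction on $k$ using that $(a_1+\cdots+a_{k-1})b_{i_k}$ adds exactly the missing mixed terms at each step.

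Next, I would choose the constant $q$. Since all superdiagonal vectors $\vec{v}(M_i)=(a_i,b_i)$ are pairwise parallel, either they all lie in a single line through the origin of the form $b = q a$ for a unique $q\in\Q$ (take any generator with $a_i\neq 0$ and set $q=b_i/a_i$; pairwise parallelism forces $b_j=q a_j$ for every $j$, including the degenerate $a_j=0$ case, which then forces $b_j=0$), or else $a_i=0$ for every $i$, in which case one can pick $q$ arbitrarily (say $q=0$) and observe the formula below still holds because every $a_{i_j}^2$ vanishes. In either case $q$ depends only on $G$.

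Using $b_{i_{j'}} = q\, a_{i_{j'}}$ and the hypothesis $\psi(M)=(0,0,c)$, which gives $\sum_{j=1}^k a_{i_j}=0$, I would compute
\begin{align*}
\sum_{j<j'} a_{i_j}b_{i_{j'}} = q\sum_{j<j'} a_{i_j}a_{i_{j'}} = \frac{q}{2}\Bigl[\Bigl(\sum_{j=1}^k a_{i_j}\Bigr)^{\!2} - \sum_{j=1}^k a_{i_j}^2\Bigr] = -\frac{q}{2}\sum_{j=1}^k a_{i_j}^2.
\end{align*}
Substituting back into the closed formula for the third coordinate yields exactly $c = \sum_{j=1}^k \bigl(c_{i_j} - \tfrac{q}{2}a_{i_j}^2\bigr)$, as claimed. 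The only subtlety is the well-definedness of $q$, and this is handled by the case split above; the rest is a clean algebraic identity, so there is no substantial obstacle.
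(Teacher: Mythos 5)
Your proposal is correct and follows essentially the same route as the paper: expand the upper-right entry of the product as the sum of the $c_{i_j}$ plus the cross-term $\sum_{j<j'}a_{i_j}b_{i_{j'}}$, substitute $b_i=qa_i$ from parallelism, and apply the square-of-a-sum identity together with $\sum a_{i_j}=0$. The only (welcome) refinement is that you explicitly handle the degenerate cases where some or all $a_i$ vanish, which the paper glosses over when it writes $q=b_i/a_i$ without qualification.
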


\begin{proof}
Consider the sequence $M_{i_1} M_{i_2} \cdots M_{i_k}$ and let $M_i = \begin{psmallmatrix}
1 & a_i & c_i\\
0 & 1 & b_i\\
0 & 0 & 1
\end{psmallmatrix}$ for each $i \in [1,r]$. Since the superdiagonal vectors 
are parallel, i.e., $a_ib_j=b_ia_j$ for any $i,j\in[1,r]$, we have $q = \frac{b_i}{a_i} \in \mathbb{Q}$ and thus $a_iq = b_i$ for all $i \in [1,r]$. Let us consider the product of the matrices. Then the value~$c$ is equal to
\begin{align*}
c  &=  \sum_{j=1}^k c_{i_j}   +  \sum_{\ell=1}^{k-1} \Bigg( \sum_{j=1}^{\ell}a_{i_j} \Bigg) a_{i_{\ell+1}} q 
=  \sum_{j=1}^k c_{i_j}   + \frac{1}{2} \Bigg( \sum_{\ell=1}^{k}\sum_{j=1}^{k} a_{i_\ell} a_{i_j} q - \sum_{j=1}^k a_{i_j}^2 q \Bigg) \\
 &=  \sum_{j=1}^k (c_{i_j} - \frac{q}{2}a_{i_j}^2).
\end{align*}
The first equality follows from a direct computation as
{\small\begin{align}\label{eq:sum}
\!\!\!\sum_{\ell=1}^k\sum_{j=1}^k a_{i_\ell}a_{i_j} &=\sum_{\ell=1}^{k-1} \Bigg( \sum_{j=1}^{\ell}a_{i_j} \Bigg) a_{i_{\ell+1}}+a_{i_1}(a_{i_1}+\ldots+a_{i_k})+a_{i_2}(a_{i_2}+\ldots+a_{i_k})+\ldots+a_{i_k}a_{i_k} \nonumber \\ 
&=\sum_{\ell=1}^{k-1} \Bigg( \sum_{j=1}^{\ell}a_{i_j} \Bigg) a_{i_{\ell+1}}+a_{i_1}a_{i_1}+a_{i_2}(a_{i_1}+a_{i_2})+\ldots+a_{i_k}(a_{i_1}+\ldots+a_{i_k}) \\
&=\sum_{\ell=1}^{k-1} \Bigg( \sum_{j=1}^{\ell}a_{i_j} \Bigg) a_{i_{\ell+1}}+\sum_{\ell=1}^{k-1} \Bigg( \sum_{j=1}^{\ell}a_{i_j} \Bigg) a_{i_{\ell+1}}+\sum_{j=1}^k a_{i_j}^2. \nonumber
\end{align}}
Note that $\sum_{j=1}^k a_{i_j}=0$ by our choice of the sequence of matrices.
The value $c$ is preserved in case of reordering of matrices due to their commutativity.
%Due to the commutativity of matrices, the value of $c$ does not change if the order of the matrices is changed.
\end{proof}

Note that the previous lemma also holds for \heisr.

It is worth mentioning that the identity problem in the Heisenberg group 
is 
decidable if any two matrices have pairwise parallel superdiagonal vectors since 
now the problem reduces to solving a system of two linear homogeneous Diophantine equations.
Hence, it remains to consider the case when there exist two matrices with 
non-parallel superdiagonal vectors in the sequence generating the identity 
matrix. In the following, we prove that the identity matrix is always 
constructible if we can construct any matrix with the zero superdiagonal vector by using matrices with non-parallel superdiagonal vectors.

\begin{lemma}\label{lem:nonparallel}
Let $S = \langle M_1, \ldots, M_r \rangle \subseteq {\rm H}(3,\mathbb{Q})$ be a finitely generated matrix semigroup. Then the identity matrix exists in $S$ if there exists a sequence of matrices
$M_{i_1} M_{i_2} \cdots M_{i_k},$
where $i_j \in [1,r]$ for all $1 \le j \le k$, satisfying the following properties:
\begin{enumerate}[(i)]
\item $\psi(M_{i_1} M_{i_2} \cdots M_{i_k}) = (0,0,c)$ for some $c \in \mathbb{Q}$, and
\item $\vec{v}(M_{i_{j_1}})$ and $\vec{v}(M_{i_{j_2}})$ are not parallel for some $j_1, j_2 \in [1,k]$.
\end{enumerate}
\end{lemma}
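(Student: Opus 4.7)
If $c=0$ then $W$ itself represents the identity and we are done, so assume $c\neq 0$. Write $M_p=M_{i_{j_1}}$, $M_q=M_{i_{j_2}}$ and $e=\vec{v}(M_p)\times\vec{v}(M_q)$; by hypothesis~(ii) we have $e\neq 0$. The plan is to construct the identity as a product in $S$ by taking a sufficiently high power of $W$ and suitably rearranging the resulting multiset of matrices.

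Two observations underpin the construction. First, for any permutation $\sigma$ of the matrices occurring in $W^n$, the corresponding product lies in $S$ and has superdiagonal $(0,0)$, since the superdiagonal of a product depends only on the multiset of its factors; hence any reordering can affect only the upper-right entry~$c$. Second, an elementary transposition of two adjacent factors $M_s$ and $M_t$ shifts the $c$-coordinate of the whole product by $-(\vec{v}(M_s)\times\vec{v}(M_t))$: in particular, interchanging adjacent copies of $M_p$ and $M_q$ modifies $c$ by $\pm e$, whereas interchanging two copies of the same generator leaves $c$ unchanged.

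The construction then proceeds in three steps. First, choose $n\in\N$ so that $nc/e\in\Z$; since $c,e\in\Q$ with $e\neq 0$, writing $c/e=r/s$ in lowest terms shows that any positive multiple of $s$ works. Second, starting from the canonical product $W^n$, whose $c$-coordinate equals $nc$, rearrange the multiset so that all copies of $M_p$ and $M_q$ occupy a single contiguous block, keeping the relative order of all other generators intact; let $c_0$ denote the $c$-coordinate of this intermediate rearrangement. Third, inside this block, perform adjacent transpositions between $M_p$'s and $M_q$'s: each such swap modifies $c$ by exactly $\pm e$ with no interference from matrices outside the block (since nothing other than copies of $M_p$ and $M_q$ lies between the two factors being swapped), so the $c$-values reachable at this stage form an arithmetic progression $\{c_0+ke : k\in\Z,\ |k|\le (nm_p)(nm_q)\}$, where $m_p,m_q\ge 1$ are the multiplicities of $M_p$ and $M_q$ in $W$. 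Choosing $n$ large and divisible enough that $c_0\in e\Z$ and the required $k=-c_0/e$ lies in the admissible range produces a rearrangement of $W^n$ whose product is the identity.

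The main obstacle will be controlling the intermediate value $c_0$: the rearrangement that groups $M_p$'s and $M_q$'s together also introduces cross-product contributions $\vec{v}(M_p)\times\vec{v}(M_s)$ and $\vec{v}(M_q)\times\vec{v}(M_s)$ for the other generators $M_s$, and these need not be multiples of $e$. This is resolved by observing that, for a fixed choice of how the block is formed, $c_0$ depends polynomially on $n$ with rational coefficients, so $n$ may be restricted to an arithmetic progression on which both the divisibility condition $c_0\in e\Z$ and the admissibility bound $|c_0/e|\le(nm_p)(nm_q)$ hold simultaneously, completing the construction.
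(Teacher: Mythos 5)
The approach you take is genuinely different from the paper's, but as written it has a real gap at the step you yourself flag as ``the main obstacle,'' and the claimed resolution does not actually resolve it.

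Your construction is: confine all copies of $M_p$ and $M_q$ to a contiguous block inside a fixed rearrangement $\sigma_n$ of $W^n$, note that intra-block $p$-$q$ swaps change the upper-right entry by $\pm e$, and then argue that for suitable $n$ the target value $0$ lies in the resulting arithmetic progression. This is correct up to the last point. The reachable set of upper-right entries for a fixed $\sigma_n$ is exactly $\{c_{\min}(n), c_{\min}(n)+|e|, \ldots, c_{\max}(n)\}$, where $c_{\min}$ (all $q$'s before $p$'s in the block) and $c_{\max}$ (all $p$'s before $q$'s) differ by $(nm_p)(nm_q)|e|$. Achieving the identity requires $c_{\min}(n) \le 0 \le c_{\max}(n)$ together with the divisibility condition. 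Both $c_{\min}(n)$ and $c_{\max}(n)$ are quadratic polynomials in $n$ with rational coefficients and no constant term, with leading coefficients differing by exactly $m_p m_q |e|$. For large $n$ the straddling condition is therefore equivalent to the \emph{leading coefficients} $\alpha_{\min},\alpha_{\max}$ satisfying $\alpha_{\min}\le 0\le \alpha_{\max}$. Your ``resolution'' — that since $c_0(n)$ is polynomial in $n$, one can pick $n$ in an arithmetic progression to enforce $|c_0/e|\le (nm_p)(nm_q)$ — does not deliver this: if $\alpha_{\min}>0$, then $c_{\min}(n)>0$ for all large $n$ regardless of which residue class $n$ lies in, and no swap sequence can reach $0$. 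Polynomial dependence handles the divisibility constraint but says nothing about the sign of the leading coefficient.

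Concretely, take generators with $\psi(M_p)=(1,0,0)$, $\psi(M_q)=(0,1,0)$, $\psi(M_3)=(-1,0,10)$, $\psi(M_4)=(0,-1,0)$ and $W=M_p M_q M_3 M_4$ (so $c=11$, $e=1$). If the block is placed at the end (order $M_3^n M_4^n \,[\text{block}]$), one computes $c_{\min}(n)=10n$ and $c_{\max}(n)=10n+n^2$, which never straddle $0$. Placing the block between $M_3^n$ and $M_4^n$ does work ($c_{\min}(n)=10n-n^2$), but your proof neither argues that a good block position always exists nor explains how to choose one. This is precisely the non-trivial content that the paper supplies: instead of isolating $M_p,M_q$, the paper partitions \emph{all} factors into the four sign classes $(\pm,\pm)$ of the superdiagonal, orders the $m$-fold repeated product as $M_{(+,+)}^m M_{(+,-)}^m M_{(-,-)}^m M_{(-,+)}^m$, computes the $m^2$-coefficient of the resulting upper-right entry explicitly (Table~\ref{tab:z1234} and \eqref{eq:negative}), and proves it is negative via a case analysis on which sign classes are empty — this is where condition~(ii) is actually used. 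That yields a product with $\psi = (0,0,c')$, $c'<0$, which is then combined with the given $c>0$ product to cancel. To repair your argument you would need an analogue of that sign computation: a proof that some admissible block position and some ordering of the non-$p$, non-$q$ factors make $\alpha_{\min}\le 0\le\alpha_{\max}$, and that proof is not present.

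Two smaller points: the reachable progression is not symmetric about $c_0$ (it is one-sided once the block's internal order is fixed), so the bound should be stated in terms of $c_{\min},c_{\max}$ rather than $|c_0/e|\le (nm_p)(nm_q)$; and your transposition lemma (swap of adjacent $M_s,M_t$ changes $c$ by $-(\vec v(M_s)\times\vec v(M_t))$) is correct and is a clean way to see the arithmetic-progression structure, which the paper obtains less explicitly.
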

\begin{proof}
Let $M = M_{i_1} M_{i_2} \cdots M_{i_k}$ and $\psi(M) = (0,0,c)$ for 
some $c \in \mathbb{Q}$.
%It is obvious that the identity matrix is in $S$ if $c = 0$. 
If $c=0$, then $M$ is the identity matrix, 
hence we assume that $c > 0$ as the case of $c < 0$ is symmetric.

Given that $M_i$ is the $i$th generator and $\psi(M_i) = (a_i,b_i,c_i)$, 
we have %the following equation:
%\[
$\sum_{j=1}^k a_{i_j} = 0$ and $ \sum_{j=1}^k b_{i_j} = 0$.
%\]
Since $c > 0$, the following also holds: %\setcounter{cvalue}{\value{equation}}
\begin{align}\label{eq:cvalue}
c = \sum_{\ell=1}^{k-1}\sum_{j=1}^\ell a_{i_j} b_{i_{\ell +1}}  +  \sum_{j=1}^k c_{i_j} >0.
\end{align}

If the matrix semigroup~$S\subseteq \heisq$ has two different matrices $N_1$ and $N_2$ such that $\psi(N_1) = (0,0,c_1)$ and $\psi(N_2) = (0,0,c_2)$ and 
$c_1 c_2 < 0$, then the identity matrix exists in $S$. 
Let $\psi(N_1) = (0,0,\frac{p_1}{q_1})$ and $\psi(N_2) = (0,0,\frac{p_2}{q_2})$, where $p_1,q_1, q_2 \in \mathbb{Z}$ are positive and 
$p_2 \in \mathbb{Z}$ is negative. Then it is easy to see that 
the matrix $N_1^{-q_1p_2}N_2^{q_2p_1}$ exists in $S$ and that 
$\psi(N_1^{-q_1p_2}N_2^{q_2p_1}) = (0,0,0).$

Now we will prove that if $S$ contains a matrix $M$ such that 
$\psi(M) = (0,0,c)$, where $c > 0$, then there also exists a matrix 
$M'$ such that $\psi(M') = (0,0,c')$, where $c' < 0$.

First, we classify the matrices into four types as follows.
A matrix with a superdiagonal vector~$(a,b)$ is classified as
\begin{enumerate}[1)]
\item the $({\scriptstyle +,+})$-type if $a,b >0$,
\item the $({\scriptstyle +,-})$-type if $a\ge 0$ and $b\le0$,
\item the $({\scriptstyle -,-})$-type if $a,b < 0$, and
\item the $({\scriptstyle -,+})$-type if $a<0$ and $b>0$.
\end{enumerate}
Let $G = \{M_1, M_2, \ldots, M_r\}$ be the generating set of the 
matrix semigroup~$S$. Then 
$G = G_{({\scriptscriptstyle +,+})} \sqcup  G_{({\scriptscriptstyle +,-})} \sqcup G_{({\scriptscriptstyle -,-})} \sqcup G_{({\scriptscriptstyle -,+})}$ such that $
G_{(\xi_1,\xi_2)}$ is the set of matrices of the $(\xi_1,\xi_2)$-type, where $\xi_1, \xi_2 \in \{+,-\}$.

Recall that we assume $M = M_{i_1}  \cdots M_{i_k}$ and $\psi(M) = (0,0,c)$ for some $c>0$.
The main idea of the proof is to generate a matrix $M'$ such that $\psi(M') = (0,0,c')$ for some $c' < 0$ by duplicating the matrices in the sequence $M = M_{i_1} \cdots M_{i_k}$ multiple times and reshuffling. Note that any permutation of the sequence 
generating the matrix $M$ such that $\psi(M) = (0,0,c)$ still generates 
matrices $M'$ such that $\psi(M') = (0,0,c')$ since the multiplication of matrices exchanges the first two coordinates in a commutative way. Moreover, we can still 
obtain matrices $M''$ such that $\psi(M'') = (0,0,c'')$ for some 
$c'' \in \mathbb{Q}$ if we shuffle two different permutations 
of the sequence $M_{i_1}\cdots M_{i_k}$ by the same reason.

\begin{figure}[htb]
\centering\begin{tikzpicture}[xscale=0.7,yscale=0.5,every node/.style={scale=1}]
\draw (0,10.5) -- (0,0);
\draw (0,0) -- (17,0);
\draw[pattern=dots, pattern color=blue]  (2,0) node (v13) {} rectangle (3,3);
\draw[pattern=dots, pattern color=blue]  (3,0) node (v15) {} rectangle (4,5.5);
\draw [dashed] (4,0) -- (4,10.5);
\draw [pattern=north east lines, pattern color=red] (4,0) rectangle (5,7.5);
\draw [pattern=north east lines, pattern color=red]  (5,0)  rectangle (6.5,9);
\draw [pattern=north east lines, pattern color=red]  (6.5,0)  rectangle (8,10);
\draw [dashed] (12,10.5) -- (12,0);
\draw (16,0) -- (16,10.5);
\draw [dashed] (8,10.5) -- (8,0);
\draw[pattern=north east lines, pattern color=red]  (8,9) rectangle (10.5,0);
\draw[pattern=north east lines, pattern color=red]  (10.5,5) rectangle (12,0);
\draw [pattern=dots, pattern color=blue] (12,4) rectangle (13,0);
\draw [pattern=dots, pattern color=blue] (13,3) rectangle (14,0);
\draw [pattern=dots, pattern color=blue] (14,2) rectangle (15,0);
\draw [pattern=dots, pattern color=blue] (15,1) rectangle (16,0);
\draw (1,0) -- (1,-1);
\draw (2,0) -- (2,-1);
\draw (3,0) -- (3,-1);
\draw (4,0) -- (4,-1);
\draw (5,0) -- (5,-1);
\draw (6.5,0) -- (6.5,-1);
\draw (8,0) -- (8,-1);
\draw (10.5,0) -- (10.5,-1);
\draw (12,0) -- (12,-1);
\draw (13,0) -- (13,-1);
\draw (14,0) -- (14,-1);
\draw (15,0) -- (15,-1);
\draw (16,0) -- (16,-1);
\draw [<->] (1,-0.5) -- (2,-0.5);
\draw [<->] (2,-0.5) -- (3,-0.5);
\draw [<->] (3,-0.5) -- (4,-0.5);
\draw [<->] (4,-0.5) -- (5,-0.5);
\draw [<->] (5,-0.5) -- (6.5,-0.5);
\draw [<->] (6.5,-0.5) -- (8,-0.5);
\draw [<->] (8,-0.5) -- (10.5,-0.5);
\draw [<->] (10.5,-0.5) -- (12,-0.5);
\draw [<->] (12,-0.5) -- (13,-0.5);
\draw [<->] (13,-0.5) -- (14,-0.5);
\draw [<->] (14,-0.5) -- (15,-0.5);
\draw [<->] (15,-0.5) -- (16,-0.5);
\node at (0.5,-1) {$b_{1}$};
\node at (1.5,-1) {$b_{2}$};
\node at (2.5,-1) {$b_{3}$};
\node at (3.5,-1) {$b_{4}$};
\node at (4.5,-1) {$|b_{5}|$};
\node at (5.75,-1) {$|b_{6}|$};
\node at (7.25,-1) {$|b_{7}|$};
\node at (9.25,-1) {$|b_{8}|$};
\node at (11.25,-1) {$|b_{9}|$};
\node at (12.5,-1) {$b_{{10}}$};
\node at (13.5,-1) {$b_{{11}}$};
\node at (14.5,-1) {$b_{{12}}$};
\node at (15.5,-1) {$b_{{13}}$};
\draw  [pattern=dots, pattern color=blue]  (2,2) rectangle (1,0);
\draw (-1,0) -- (1,0);
\draw (-1,2) -- (0,2);
\draw (-1,3) -- (0,3);
\draw (-1,5.5) -- (0,5.5);
\draw (-1,7.5) -- (0,7.5);
\draw (-1,9) -- (0,9);
\draw (-1,10) -- (0,10);
\draw (16,1) -- (17,1);
\draw (16,2) -- (17,2);
\draw (16,3) -- (17,3);
\draw (16,4) -- (17,4);
\draw (16,5) -- (17,5);
\draw (16,9) -- (17,9);
\draw (16,10) -- (17,10);
\draw [<->]  (-0.5,10) -- (-0.5,9);
\draw [<->]  (-0.5,9) -- (-0.5,7.5);
\draw [<->]  (-0.5,7.5) -- (-0.5,5.5);
\draw [<->] (-0.5,5.5) -- (-0.5,3);
\draw [<->] (-0.5,3) -- (-0.5,2);
\draw[<->]  (-0.5,2) -- (-0.5,0);
\node at (-1,9.5) {$a_6$};
\node at (-1,8.25) {$a_5$};
\node at (-1,6.5) {$a_4$};
\node at (-1,4.25) {$a_3$};
\node at (-1,2.5) { $a_2$};
\node at (-1,1) {$a_1$};
\draw [<->](16.5,10) -- (16.5,9);
\draw[<->] (16.5,9) -- (16.5,5);
\draw [<->](16.5,5) -- (16.5,4);
\draw [<->] (16.5,4) -- (16.5,3);
\draw [<->] (16.5,3) -- (16.5,2);
\draw [<->] (16.5,2) -- (16.5,1);
\draw [<->] (16.5,1) -- (16.5,0);
\node at (17.3,9.5) {$|a_7|$};
\node at (17.3,7) {$|a_8|$};
\node at (17.3,4.5) {$|a_9|$};
\node at (17.3,3.5) {$|a_{10}|$};
\node at (17.3,2.5) {$|a_{11}|$};
\node at (17.3,1.5) {$|a_{12}|$};
\node at (17.3,0.5) {$|a_{13}|$};
\draw [draw=none,pattern=dots, pattern color=blue] (0.4,10.5) rectangle (1.15,9.75);
\draw [draw=none,pattern=north east lines, pattern color=red] (0.4,9.5) rectangle (1.15,8.75);
\node [align=left] at (2.4,10.1) {: positive};
\node [align=left] at (2.49,9.1) {: negative};
\draw (0,0) -- (0,-1);
\draw [<->] (0,-0.5) -- (1,-0.5);
\end{tikzpicture}
\caption{The histogram describes how the upper-right corner of $M_{1}\cdots M_{13}$ is computed by multiplications. The blue dotted (red lined) area implies the value which will be added to (subtracted from) the upper-right corner of the final matrix after multiplications of matrices in the sequence.}
\label{fig:example}
\end{figure}
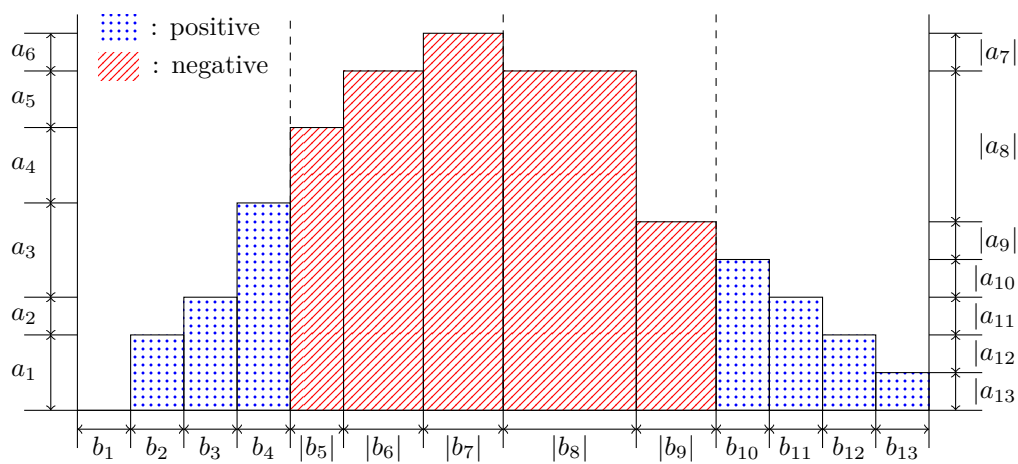
\begin{figure}[htb]
\centering
\begin{tikzpicture}[xscale=0.6,yscale=0.5,every node/.style={scale=1.0}]
\draw (0,11) -- (0,0);
\draw (0,0) -- (16,0);
\draw [dashed] (4,0) -- (4,11);
\draw [dashed] (12,0) -- (12,11);
\draw (16,0)  -- (16,11);
\draw [dashed] (8,0) -- (8,11);
\draw [draw=none,pattern=dots, pattern color=blue] (0.25,10.5) rectangle (1,9.75);
\draw [draw=none,pattern=north east lines, pattern color=red] (0.25,9.5) rectangle (1,8.75);
\node [align=left] at (2.25,10.1) {: positive};
\node [align=left] at (2.35,9.1) {: negative};
\draw  [pattern=dots, pattern color=blue] (0,1) rectangle (0.5,0);
\draw  [pattern=dots, pattern color=blue]  (0.5,0) rectangle (1,2);
\draw   [pattern=dots, pattern color=blue] (1,0) rectangle (1.5,3);
\draw (0,0) -- (4,8) -- (8,10);
\draw (8,10) -- (12,4) -- (12,4) -- (16,0);
\draw  [pattern=dots, pattern color=blue]  (1.5,4) rectangle (2,0);
\draw  [pattern=dots, pattern color=blue]  (2,5) rectangle (2.5,0);
\draw  [pattern=dots, pattern color=blue]  (2.5,6) rectangle (3,0);
\draw  [pattern=dots, pattern color=blue]  (3,7) rectangle (3.5,0);
\draw  [pattern=dots, pattern color=blue]  (3.5,8) rectangle (4,0);
\draw [pattern=north east lines, pattern color=red] (4,8.0) rectangle (4.5,0);
\draw [pattern=north east lines, pattern color=red]  (4.5,8.25) rectangle (5,0);
\draw [pattern=north east lines, pattern color=red]  (5,8.5) rectangle (5.5,0);
\draw [pattern=north east lines, pattern color=red]  (5.5,8.75) rectangle (6,0);
\draw [pattern=north east lines, pattern color=red]  (6,9) rectangle (6.5,0);
\draw [pattern=north east lines, pattern color=red]  (6.5,9.25) rectangle (7,0);
\draw [pattern=north east lines, pattern color=red]  (7,9.5) rectangle (7.5,0);
\draw [pattern=north east lines, pattern color=red]  (7.5,9.75) rectangle (8,0) ;
\draw [pattern=north east lines, pattern color=red]  (8,9.25) rectangle (8.5,0);
\draw [pattern=north east lines, pattern color=red]  (8.5,8.5) rectangle (9,0);
\draw [pattern=north east lines, pattern color=red]  (9,7.75) rectangle (9.5,0);
\draw [pattern=north east lines, pattern color=red]  (9.5,7) rectangle (10,0);
\draw [pattern=north east lines, pattern color=red]  (10,6.25) rectangle (10.5,0);
\draw [pattern=north east lines, pattern color=red]  (10.5,5.5) rectangle (11,0);
\draw [pattern=north east lines, pattern color=red]  (11,4.75) rectangle (11.5,0);
\draw [pattern=north east lines, pattern color=red]  (11.5,4) rectangle (12,0);
\draw [pattern=dots, pattern color=blue]   (12,4) rectangle (12.5,0);
\draw [pattern=dots, pattern color=blue]   (12.5,3.5) rectangle (13,0);
\draw [pattern=dots, pattern color=blue]   (13,3) rectangle (13.5,0);
\draw [pattern=dots, pattern color=blue]   (13.5,2.5) rectangle (14,0);
\draw [pattern=dots, pattern color=blue]   (14,2) rectangle (14.5,0);
\draw  [pattern=dots, pattern color=blue]  (14.5,1.5) rectangle (15,0);
\draw [pattern=dots, pattern color=blue]   (15,1) rectangle (15.5,0);
\draw [pattern=dots, pattern color=blue]   (15.5,0.5) rectangle (16,0);
\draw  (0,0) -- (0,-1);
\draw  (4,0) -- (4,-1);
\draw (8,0) -- (8,-1);
\draw (12,0) -- (12,-1);
\draw (16,0) -- (16,-1);
\draw [<->] (0,-0.5) -- (4,-0.5);
\draw [<->] (8,-0.5) -- (4,-0.5);
\draw [<->](8,-0.5) -- (12,-0.5);
\draw [<->] (12,-0.5) -- (16,-0.5);
\draw (-1,0) -- (0,0);
\draw (-1,8) -- (0,8);
\draw (-1,10) -- (0,10);
\draw (16,0) -- (17,0);
\draw (16,4) -- (17,4);
\draw [<->](-0.5,10) -- (-0.5,8);
\draw [<->](-0.5,8) -- (-0.5,0);
\draw [<->](16.5,4) -- (16.5,0);
\draw (16,10) -- (17,10);
\draw [<->](16.5,10) -- (16.5,4);
\node at (2,-1) {$b_{({\scriptscriptstyle +,+})}m$};
\node at (6,-1) {$|b_{({\scriptscriptstyle +,-})}|m$};
\node at (10,-1) {$|b_{({\scriptscriptstyle -,-})}|m$};
\node at (14,-1) {$b_{({\scriptscriptstyle -,+})}m$};
\node at (18,2) {$|a_{({\scriptscriptstyle -,+})}|m$};
\node at (18,7) {$|a_{({\scriptscriptstyle -,-})}|m$};
\node at (-1.7,4) {$a_{({\scriptscriptstyle +,+})}m$};
\node at (-1.7,9) {$a_{({\scriptscriptstyle +,-})}m$};
\end{tikzpicture}
\caption{The histogram describes how the value in the upper-right corner of matrix $M_{({\scriptscriptstyle +,+})}^m M_{({\scriptscriptstyle +,-})}^m M_{({\scriptscriptstyle -,-})}^m M_{({\scriptscriptstyle -,+})}^m$ is computed by multiplications. Here $m = 8$.}
\label{fig:repeat}
\end{figure}

Let us illustrate the idea with the following example. See 
Figure~\ref{fig:example} and Figure~\ref{fig:repeat} for pictorial descriptions 
of the idea.
Let $\{M_i \mid 1\le i\le 4\} \subseteq G_{({\scriptscriptstyle +,+})}$, $\{M_i \mid 5\le i\le 7\} \subseteq G_{({\scriptscriptstyle +,-})}$, $\{M_i \mid 8\le i\le 9\} \subseteq G_{({\scriptscriptstyle -,-})}$, and $\{M_i \mid 10\le i\le 13\} \subseteq G_{({\scriptscriptstyle -,+})}$. Then assume that $M_1M_2\cdots M_{13}=\begin{psmallmatrix}1&0&x\\0&1&0\\0&0&1\end{psmallmatrix}$,
where $x$ is computed by \eqref{eq:cvalue}. As we mentioned above, 
$x$ changes if we change the order of multiplicands. 
In this example, we first multiply 
$({\scriptstyle +,+})$-type matrices and accumulate the values in the superdiagonal 
coordinates since these matrices have positive values in the coordinates. 
Indeed, the blue dotted area implies the value we add to the upper-right 
corner by multiplying such matrices. Then we multiply $({\scriptstyle +,-})$-type 
matrices and still increase the `$a$'-value. The `$b$'-values in 
$({\scriptstyle +,-})$-type matrices are negative thus, the red lined area is 
subtracted from the upper-right corner. We still subtract by multiplying 
$({\scriptstyle -,-})$-type matrices since the accumulated `$a$'-value is still positive 
and `$b$'-values are negative. Then we finish the multiplication by adding 
exactly the last blue dotted area to the upper-right corner. It is easy to see that the total subtracted value is larger than the total added value.

However, we cannot guarantee that $x$ is negative since $\sum_{i=1}^{13} c_i$ could be larger than the contribution from the superdiagonal coordinates.
This is why we need to copy the sequence of matrices generating the matrix 
corresponding to the triple~$(0,0,c)$ for some $c \in \mathbb{Q}$. 
In Figure~\ref{fig:repeat}, we describe an example where we duplicate the sequence eight times and shuffle and permute them in order to minimize the 
value in the upper-right corner. Now the lengths of both axes are $m$ ($m=8$ in this example) times larger than before and it follows that the area also grows quadratically in $m$. Since the summation 
$m \cdot \sum_{i=1}^{13} c_i$ grows linearly in $m$, we have $x <0$ when $m$ is large enough.

For each $\xi_1,\xi_2\in\{+,-\}$, let us define multisets $S_{(\xi_1, \xi_2)}$  that are obtained from the sequence~$M_{i_1} \cdots M_{i_k}$ by partitioning the product according to the matrix types. %Recall that a multiset is a set that can contain multiple instances of each element.
That is, $S_{(\xi_1, \xi_2)}$ contains exactly the matrices of $(\xi_1,\xi_2)$-type in the product (possibly with several copies of each matrix).

For each $\xi_1, \xi_2 \in \{+,-\}$, let us define $a_{(\xi_1,\xi_2)},b_{(\xi_1,\xi_2)},c_{(\xi_1,\xi_2)}$ such that
\begin{align*}
(a_{(\xi_1, \xi_2)} ,b_{(\xi_1, \xi_2)}, c_{(\xi_1, \xi_2)})=\sum_{M \in S_{(\xi_1, \xi_2)}} \psi(M).
\end{align*}
In other words, $a_{(\xi_1, \xi_2)}$ ($b_{(\xi_1, \xi_2)}$ and $c_{(\xi_1, \xi_2)}$, respectively) is 
the sum of the values in the `$a$' (`$b$' and `$c$', respectively) coordinate from the matrices in the multiset $S_{(\xi_1, \xi_2)}$.

Now consider a permutation of the sequence $M_{i_1}\cdots M_{i_k}$, where the first part of the sequence only consists of the $({\scriptstyle +,+})$-type matrices, the second 
part only consists of the $({\scriptstyle +,-})$-type matrices, the third part 
only consists of the $({\scriptstyle -,-})$-type, and finally the last part 
only consists of the $({\scriptstyle -,+})$-type. 

Let us denote by 
$M_{({\scriptscriptstyle +,+})}$ the matrix which results from the multiplication of the first part, namely, $M_{({\scriptscriptstyle +,+})} = \prod_{M \in S_{({\scriptscriptstyle +,+})}} M.$
Then $\psi(M_{({\scriptscriptstyle +,+})}) =  (a_{({\scriptscriptstyle +,+})},  b_{(+, +)}, x_{({\scriptscriptstyle +,+})})$ holds,
where $x_{({\scriptscriptstyle +,+})} < c_{(+, +)} + a_{({\scriptscriptstyle +,+})} b_{({\scriptscriptstyle +,+})}.$
Let us define $M_{({\scriptscriptstyle +,-})}$, $M_{({\scriptscriptstyle -,-})}$ and $M_{({\scriptscriptstyle -,+})}$ in a similar fashion. Note that for $M_{({\scriptscriptstyle +,-})}$ and $M_{({\scriptscriptstyle -,+})}$, the term $x$ is bounded from below.

Now we claim that there exists an integer~$m > 0$ such that $M_{({\scriptscriptstyle +,+})}^m M_{({\scriptscriptstyle +,-})}^m M_{({\scriptscriptstyle -,-})}^m M_{({\scriptscriptstyle -,+})}^m$ corresponds to the triple~$(0,0,c')$ for some $c' < 0$. Let $N$ be a matrix in $ {\rm H}(3,\mathbb{Q})$
and $\psi(N) = (a,b,c)$. Then the upper-triangular coordinates of the 
$m$th power of $N$ are calculated as follows:
%\begin{align}\label{eq:power}
$\psi(N^m) = (am, bm, cm  + ab \cdot \frac{1}{2}m(m-1))$.
%\end{align}

Next, we consider how the upper-triangular coordinates are affected by multiplication of matrices $M_{({\scriptscriptstyle +,+})}^m$, $M_{({\scriptscriptstyle +,-})}^m$, $M_{({\scriptscriptstyle -,-})}^m$ and $M_{({\scriptscriptstyle -,+})}^m$. Let us consider the first part of the product, $M_{({\scriptscriptstyle +,+})}^m$, that is,
%\begin{align*}
$\psi(M_{({\scriptscriptstyle +,+})}^m) = (a_{({\scriptscriptstyle +,+})}m, b_{({\scriptscriptstyle +,+})}m, x_{({\scriptscriptstyle +,+})}m + z_1)$,
%\end{align*}
where 
$z_1$ can be found in Table~\ref{tab:z1234}.
Now we multiply $M_{({\scriptscriptstyle +,+})}^m$ by the second part $M_{({\scriptscriptstyle +,-})}^m$. Then the resulting matrix $M_{({\scriptscriptstyle +,+})}^mM_{({\scriptscriptstyle +,-})}^m$ corresponds to 
\begin{align*}
\psi(M_{({\scriptscriptstyle +,+})}^mM_{({\scriptscriptstyle +,-})}^m)=((a_{({\scriptscriptstyle +,+})}+a_{({\scriptscriptstyle +,-})})m, (b_{({\scriptscriptstyle +,+})}+b_{({\scriptscriptstyle +,-})})m, (x_{({\scriptscriptstyle +,+})}+x_{({\scriptscriptstyle +,-})})m + z_1 - z_2), 
\end{align*}
where
$z_2$ can be found in Table~\ref{tab:z1234}.
Similarly, we compute $z_3$ and $z_4$ that will be added to the upper-right corner as a result of multiplying $M_{({\scriptscriptstyle -,-})}^m$ and $M_{({\scriptscriptstyle -,+})}^m$ and present them in Table~\ref{tab:z1234}.

\begin{table}[htb]
\begin{align*}
z_1 &= | a_{({\scriptscriptstyle +,+})}|| b_{({\scriptscriptstyle +,+})}| \cdot \frac{1}{2}m(m-1), \\
z_2 &= m^2 |a_{({\scriptscriptstyle +,+})}| |b_{({\scriptscriptstyle +,-})}| + |a_{({\scriptscriptstyle +,-})} ||b_{({\scriptscriptstyle +,-})}| \cdot \frac{1}{2}m(m-1), \\
z_3 &=  |a_{({\scriptscriptstyle -,+})}|| b_{({\scriptscriptstyle -,-})}|  m^2 + |a_{({\scriptscriptstyle -,-})} || b_{({\scriptscriptstyle -,-})}| \cdot \frac{1}{2}m(m-1) \ \text{and} \\
z_4 &= |a_{({\scriptscriptstyle -,+})}| |b_{({\scriptscriptstyle -,+})}| \cdot  \frac{1}{2}m(m-1).
\end{align*}
\caption{Values $z_1$, $z_2$, $z_3$ and $z_4$ in the product $M_{({\scriptscriptstyle +,+})}^m M_{({\scriptscriptstyle +,-})}^m M_{({\scriptscriptstyle -,-})}^m M_{({\scriptscriptstyle -,+})}^m$ \label{tab:z1234}}
\end{table}

After the multiplying all four parts, we have
\begin{multline*}
\psi(M_{({\scriptscriptstyle +,+})}^m M_{({\scriptscriptstyle +,-})}^m M_{({\scriptscriptstyle -,-})}^m M_{({\scriptscriptstyle -,+})}^m) = \\ 
(0,0, (x_{({\scriptscriptstyle +,+})}+x_{({\scriptscriptstyle +,-})} + x_{({\scriptscriptstyle -,-})} + x_{({\scriptscriptstyle -,+})})m + z_1 - z_2 - z_3 + z_4).
\end{multline*}

Denote $z = z_1 - z_2 - z_3 + z_4$. From the above equations, we can see that $z$ can be represented as a quadratic equation of $m$ and that the coefficient of $m^2$ is always negative if $S_{(\xi_1, \xi_2)} \ne \emptyset$ for all $\xi_1, \xi_2 \in \{ +,-\}$. 
That is, the coefficient of $m^2$ is
\begin{multline*}
\frac{1}{2}(|a_{({\scriptscriptstyle +,+})}|| b_{({\scriptscriptstyle +,+})}| + |a_{({\scriptscriptstyle -,+})}|| b_{({\scriptscriptstyle -,+})}|) - \frac{1}{2}(|a_{({\scriptscriptstyle +,-})}|| b_{({\scriptscriptstyle +,-})}| + |a_{({\scriptscriptstyle -,-})}|| b_{({\scriptscriptstyle -,-})}|) \\ 
{} + |a_{({\scriptscriptstyle +,+})}|| b_{({\scriptscriptstyle +,-})}| + |a_{({\scriptscriptstyle -,+})}|| b_{({\scriptscriptstyle -,-})}|. 
\end{multline*}

Let us simplify the equation by denoting $|a_{({\scriptscriptstyle +,+})}| + |a_{({\scriptscriptstyle +,-})}| = |a_{({\scriptscriptstyle -,+})}| + |a_{({\scriptscriptstyle -,-})}| = a'$ and $|b_{({\scriptscriptstyle +,+})}| + |b_{({\scriptscriptstyle -,+})}| = |b_{({\scriptscriptstyle +,-})}| + |b_{({\scriptscriptstyle -,-})}| = b'$. Note, that the equations hold as we are considering the product, where `a' and `b' elements add up to zero.  Then 
\begin{align*}
a'b' &= a' (|b_{({\scriptscriptstyle +,-})}| + |b_{({\scriptscriptstyle -,-})}|) = a' |b_{({\scriptscriptstyle +,-})}| + a'|b_{({\scriptscriptstyle -,-})}| \\
&= (|a_{({\scriptscriptstyle +,+})}| + |a_{({\scriptscriptstyle +,-})}|)|b_{({\scriptscriptstyle +,-})}| + (|a_{({\scriptscriptstyle -,+})}| + |a_{({\scriptscriptstyle -,-})}|)|b_{({\scriptscriptstyle -,-})}|.
\end{align*}
Now the coefficient of $m^2$ in $z$ can be written as
\begin{align}\label{eq:negative}
-a'b' + \frac{1}{2}(|a_{({\scriptscriptstyle +,+})}|| b_{({\scriptscriptstyle +,+})}| + |a_{({\scriptscriptstyle -,+})}|| b_{({\scriptscriptstyle -,+})}| +  |a_{({\scriptscriptstyle -,-})}|| b_{({\scriptscriptstyle -,-})}| + |a_{({\scriptscriptstyle +,-})}|| b_{({\scriptscriptstyle +,-})}|  ).
\end{align}

Without loss of generality, suppose that $|a_{({\scriptscriptstyle +,+})}| \ge |a_{({\scriptscriptstyle -,+})}|$. Then we have
\begin{align*}
\!\!\!\!\!|a_{({\scriptscriptstyle +,+})}|| b_{({\scriptscriptstyle +,+})}| + |a_{({\scriptscriptstyle -,+})}|| b_{({\scriptscriptstyle -,+})}| \le |a_{({\scriptscriptstyle +,+})}|b'
\mbox{ and }
|a_{({\scriptscriptstyle -,-})}|| b_{({\scriptscriptstyle -,-})}| + |a_{({\scriptscriptstyle +,-})}|| b_{({\scriptscriptstyle +,-})}| \le |a_{({\scriptscriptstyle -,-})}|b'.
\end{align*}
From
$
(|a_{({\scriptscriptstyle +,+})}|+ |a_{({\scriptscriptstyle -,-})}|)b' \le 2a'b',
$
we can see that the coefficient of the highest 
power of the variable is negative in $z$ if $|a_{({\scriptscriptstyle +,+})}|+ |a_{({\scriptscriptstyle -,-})}| < 2a'$.
By comparing two terms in \eqref{eq:negative}, we can see that the coefficient is negative if all subsets $S_{({\scriptscriptstyle -,+})}$, $S_{({\scriptscriptstyle +,-})}$, $S_{({\scriptscriptstyle +,+})}$ and $S_{({\scriptscriptstyle -,-})}$ are not empty.
Since the coefficient of the highest power of the variable is negative, $z$ becomes negative when $m$ is large enough. Therefore, we have a matrix 
corresponding to the triple~$(0,0,c')$ for some $c'<0$ as a product of 
multiplying matrices in the generating set and the identity matrix is 
also reachable.

%Moreover, we can show that if $S_{(\xi_1, \xi_2)} \ne \emptyset$ for any $\xi_1, \xi_2 \in \{ +,-\}$, 
%then every vector in the sequence has a parallel diagonal vector. Therefore, we prove that 
%there exists the identity matrix in $S$ if there exists a sequence leading to a matrix corresponding 
%to the triple~$(0,0,c)$ such that there are two matrices with non-parallel diagonal vectors in the 
%sequence.

It should be noted that there are some subcases where some of 
subsets from $S_{({\scriptscriptstyle +,+})}$, $S_{({\scriptscriptstyle -,+})}$, $S_{({\scriptscriptstyle +,-})}$, and 
$S_{({\scriptscriptstyle -,-})}$ are empty. We examine all possible cases and prove that 
the coefficient of $m^2$ is negative in every case and the matrix 
with a negative number in the corner is constructible.
First, we prove that the coefficient of $m^2$ in $z$ is negative 
when only one of the subsets from  
$S_{({\scriptscriptstyle +,+})}$.  $S_{({\scriptscriptstyle +,-})}$,  $S_{({\scriptscriptstyle -,-})}$, and  $S_{({\scriptscriptstyle -,+})}$ is empty as 
follows:

Assume that only $S_{({\scriptscriptstyle +,+})} = \emptyset$. In this case, note that $|a_{({\scriptscriptstyle +,-})}| = a'$ and $|b_{({\scriptscriptstyle -,+})}| = b'$ since $|a_{({\scriptscriptstyle +,+})}| = |b_{({\scriptscriptstyle +,+})}| = 0$ by $S_{({\scriptscriptstyle +,+})} = \emptyset$ being empty. Then the coefficient of $m^2$ becomes 
\begin{align*}
-a'b' + \frac{ |a_{({\scriptscriptstyle -,+})}| b' +  |a_{({\scriptscriptstyle -,-})}|| b_{({\scriptscriptstyle -,-})}| + a' | b_{({\scriptscriptstyle +,-})}|}{2}.
\end{align*}
We can see that the coefficient can be at most 
0 since $|a_{({\scriptscriptstyle -,+})}| b' $ and $|a_{({\scriptscriptstyle -,-})}|| b_{({\scriptscriptstyle -,-})}| + a' | b_{({\scriptscriptstyle +,-})}|$ can be maximized to $a'b'$. If we maximize 
$|a_{({\scriptscriptstyle -,+})}| b' $ by setting $|a_{({\scriptscriptstyle -,+})}| = a'$, then $|a_{({\scriptscriptstyle -,-})}|$ is 0 since $|a_{({\scriptscriptstyle +,+})}| + |a_{({\scriptscriptstyle -,+})}| = a'$. 
Then $|a_{({\scriptscriptstyle -,-})}|| b_{({\scriptscriptstyle -,-})}| + a' | b_{({\scriptscriptstyle +,-})}|$ can be $a'b'$ only when $| b_{({\scriptscriptstyle +,-})}| = b'$. This 
leads to the set $S_{({\scriptscriptstyle -,-})}$ being empty since we have $| a_{({\scriptscriptstyle -,-})}| = 0$ and $| b_{({\scriptscriptstyle -,-})}| = 0$ and therefore, 
we have a contradiction.

The remaining cases, $S_{({\scriptscriptstyle +,-})} = \emptyset$, or $S_{({\scriptscriptstyle -,-})} = \emptyset$, or  $S_{({\scriptscriptstyle -,+})} = \emptyset$ are proven analogously.% We can prove the remaining cases by the similar argument as above.

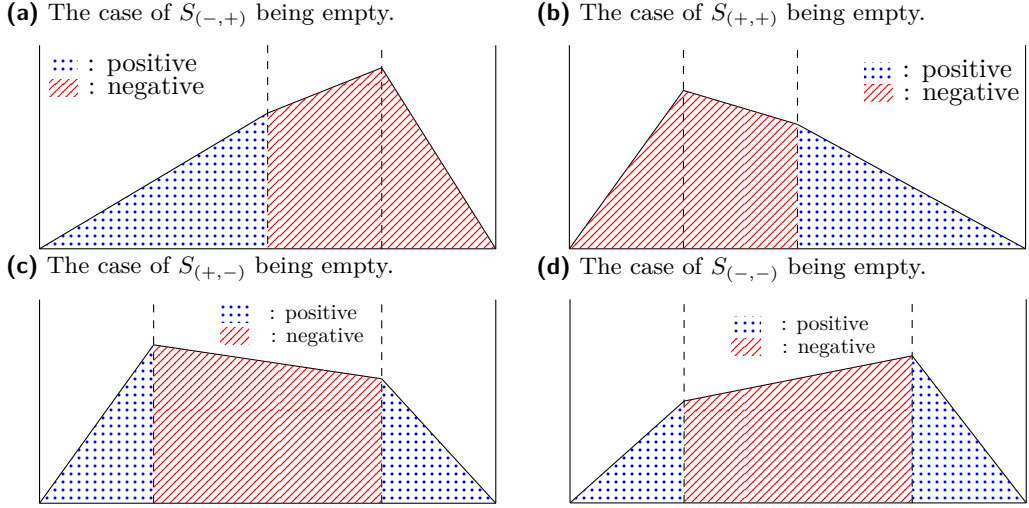
\begin{figure}[htb]
     \begin{subfigure}{0.49\textwidth}     
     \centering
\caption{The case of $S_{(-,+)}$ being empty.}
     \begin{tikzpicture}[xscale=0.5,yscale=0.3,every node/.style={scale=1.0}]
\tikzset{>=latex}
\draw (0,9) -- (0,0);
\draw (0,0) -- (12,0);
\draw [dashed] (6,9) -- (6,0);
\draw (12,0) -- (12,9);
\draw [dashed] (9,0) -- (9,9);
\draw [draw=none,pattern=dots, pattern color=blue] (0.25,8.5) rectangle (1,7.75);
\draw [draw=none,pattern=north east lines, pattern color=red] (0.25,7.5) rectangle (1,6.75);
\node at (2.7,8.1) {: positive};
\node  at (2.8,7.1) {: negative};
\draw (0,0) -- (6,6) -- (9,8);
\draw (9,8) --  (12,0);

\fill[pattern=dots, pattern color=blue] (0,0)--(6,6)--(6,0)--cycle;
\fill[pattern=north east lines, pattern color=red] (6,0)--(6,6)--(9,8)--(9,0)--cycle;
\fill[pattern=north east lines, pattern color=red] (9,0)--(9,8)--(12,0)--cycle;
\end{tikzpicture}
\end{subfigure}
\begin{subfigure}{0.49\textwidth}
\centering
\caption{The case of $S_{(+,+)}$ being empty.}
     \begin{tikzpicture}[xscale=0.5,yscale=0.3,every node/.style={scale=1.0}]
\tikzset{>=latex}
\draw (0,9) -- (0,0);
\draw (0,0) -- (12,0);
\draw [dashed] (3,0) -- (3,9);
\draw (12,0) -- (12,9);
\draw [dashed] (6,0) -- (6,9);
\draw [draw=none,pattern=dots, pattern color=blue] (7.75,8.25) rectangle (8.5,7.5);
\draw [draw=none,pattern=north east lines, pattern color=red] (7.75,7.25) rectangle (8.5,6.5);
\node  at (10.2,7.85) {: positive};
\node  at (10.3,6.85) {: negative};
\draw (0,0) -- (3,7) -- (6,5.5);
\draw (6,5.5) --  (12,0);
%\draw  (v3) -- (0,-1);
%\draw  (3,0) -- (3,-1);
%\draw (6,0) -- (6,-1);
%\draw (12,0) -- (12,-1);
%\draw [<->] (0,-0.5) -- (3,-0.5);
%\draw [<->] (6,-0.5) -- (3,-0.5);
%\draw [<->](6,-0.5) -- (12,-0.5);
%\draw (-1,0) -- (v3);
%\draw (-1,7) -- (0,7);
%
%\draw (v11) -- (13,0);
%
%\draw [<->](-0.5,7) -- (-0.5,0);
%\draw [<->](12.5,5.5) node (v1) {} -- (12.5,0);
%\draw (12,5.5) -- (13,5.5);
%\node at (1.5,-1) {$m|b_{(+,-)}|$};
%\node at (4.5,-1) {$m|b_{(-,-)}|$};
%\node at (9,-1) {$mb_{(-,+)}$};
%\node at (14,3) {$m|a_{(-,+)}|$};
%\node at (-2,3.5) {$ma_{(+,-)}$};
%\node at (14,6) {$m|a_{(-,-)}|$};
\fill[pattern=north east lines, pattern color=red] (0,0)--(3,7)--(3,0)--cycle;
\fill[pattern=north east lines, pattern color=red] (3,0)--(3,7)--(6,5.5)--(6,0)--cycle;
\fill[pattern=dots, pattern color=blue] (6,0)--(6,5.5)--(12,0)--cycle;
%\draw (12,7) -- (13,7);
%\draw [<->] (12.5,7) -- (12.5,5.5);
\end{tikzpicture}
\end{subfigure}
\\
\begin{subfigure}{0.49\textwidth}
\centering
\caption{The case of $S_{(+,-)}$ being empty.}
      \begin{tikzpicture}[xscale=0.5,yscale=0.3,every node/.style={scale=0.8}]
\tikzset{>=latex}
\draw (0,9) -- (0,0);
\draw (0,0) -- (12,0);
\draw [dashed] (3,0) -- (3,9);
\draw (12,0) -- (12,9);
\draw [dashed] (9,0) -- (9,9);
\draw [draw=none,pattern=dots, pattern color=blue] (4.75,8.75) rectangle (5.5,8);
\draw [draw=none,pattern=north east lines, pattern color=red] (4.75,7.75) rectangle (5.5,7);
\node at (7.2,8.35) {: positive};
\node at (7.3,7.35) {: negative};
\draw (0,0) -- (3,7) -- (9,5.5);
\draw (9,5.5) --  (12,0);
%\draw  (v3) -- (0,-1);
%\draw  (3,0) -- (3,-1);
%\draw (9,0) -- (9,-1);
%\draw (12,0) -- (12,-1);
%\draw [<->] (0,-0.5) -- (3,-0.5);
%\draw [<->] (9,-0.5) -- (3,-0.5);
%\draw [<->](9,-0.5) -- (12,-0.5);
%\draw (-1,0) -- (v3);
%\draw (-1,7) -- (0,7);
%
%\draw (v11) -- (13,0);
%
%\draw [<->](-0.5,7) -- (-0.5,0);
%\draw [<->](12.5,5.5) node (v1) {} -- (12.5,0);
%\draw (12,5.5) -- (13,5.5);
%\node at (1.5,-1) {$mb_{(+,+)}$};
%\node at (6,-1) {$m|b_{(-,-)}|$};
%\node at (10.5,-1) {$mb_{(-,+)}$};
%\node at (14,3) {$m|a_{(-,+)}|$};
%\node at (-2,3.5) {$ma_{(+,+)}$};
%\node at (14,6) {$m|a_{(-,-)}|$};

\fill[pattern=dots, pattern color=blue] (0,0)--(3,7)--(3,0)--cycle;
\fill[pattern=north east lines, pattern color=red] (3,0)--(3,7)--(9,5.5)--(9,0)--cycle;
\fill[pattern=dots, pattern color=blue] (9,0)--(9,5.5)--(12,0)--cycle;
%\draw (12,7) -- (13,7);
%\draw [<->] (12.5,7) -- (12.5,5.5);
\end{tikzpicture}
\end{subfigure}
\begin{subfigure}{0.49\textwidth}
\centering
\caption{The case of $S_{(-,-)}$ being empty.}
\begin{tikzpicture}[xscale=0.5,yscale=0.3,every node/.style={scale=0.8}]
\tikzset{>=latex}
\draw (0,9) -- (0,0);
\draw (0,0) -- (12,0);
\draw [dashed] (3,0) -- (3,9);
\draw (12,0)  -- (12,9);
\draw [dashed] (9,0) -- (9,9);
\draw [draw=none,pattern=dots, pattern color=blue] (4.25,8.25) rectangle (5,7.5);
\draw [draw=none,pattern=north east lines, pattern color=red] (4.25,7.25) rectangle (5,6.5);
\node [align=left] at (6.7,7.85) {: positive};
\node [align=left] at (6.8,6.85) {: negative};
\draw (0,0) -- (3,4.5) -- (9,6.5);
\draw (9,6.5) --  (12,0);
%\draw  (v3) -- (0,-1);
%\draw  (3,0) -- (3,-1);
%\draw (9,0) -- (9,-1);
%\draw (12,0) -- (12,-1);
%\draw [<->] (0,-0.5) -- (3,-0.5);
%\draw [<->] (9,-0.5) -- (3,-0.5);
%\draw [<->](9,-0.5) -- (12,-0.5);
%\draw (-1,0) -- (v3);
%\draw (-1,4.5) -- (0,4.5);
%
%\draw (v11) -- (13,0);
%
%\draw [<->](-0.5,4.5) -- (-0.5,0);
%\draw [<->](12.5,4.5) node (v1) {} -- (12.5,0);
%\draw (12,4.5) -- (13,4.5);
%\node at (1.5,-1) {$mb_{(+,+)}$};
%\node at (6,-1) {$m|b_{(+,-)}|$};
%\node at (10.5,-1) {$mb_{(-,+)}$};
%\node at (14,2.5) {$m|a_{(-,+)}|$};
%\node at (-2,2.5) {$ma_{(+,+)}$};
%\node at (-2,5.5) {$ma_{(+,-)}$};
\fill[pattern=dots, pattern color=blue] (0,0)--(3,4.5)--(3,0)--cycle;
\fill[pattern=north east lines, pattern color=red] (3,0)--(3,4.5)--(9,6.5)--(9,0)--cycle;
\fill[pattern=dots, pattern color=blue] (9,0)--(9,6.5)--(12,0)--cycle;
%\draw (-1,6.5) -- (0,6.5);
%\draw [<->] (-0.5,6.5) -- (-0.5,4.5);
\end{tikzpicture}
\end{subfigure}
\caption{Subcases where one of the subsets from $S_{(+,+)}$, $S_{(-,+)}$, $S_{(+,-)}$, and $S_{(-,-)}$ is empty.}
\label{fig:oneempty}
\end{figure}

Figure~\ref{fig:oneempty} shows the cases when one of subsets from $S_{(+,+)}$, $S_{(-,+)}$, $S_{(+,-)}$, and 
$S_{(-,-)}$ is empty.
Lastly, it remains to consider the cases where two of the subsets are empty. Note that we 
do not consider the cases where three of the subsets are empty because the sum of $a$'s and 
$b$'s cannot be both zero in such cases. 
Here we assume one of $S_{({\scriptscriptstyle +,+})}$ and $S_{({\scriptscriptstyle -,-})}$ contains two matrices whose superdiagonal vectors are not parallel by the statement of this lemma. Then we 
can always make the negative contribution larger by using matrices with 
different superdiagonal vectors. See Figure~\ref{fig:twoempty} for an example.
More formally, we consider the two cases as follows:

Assume first that $S_{({\scriptscriptstyle +,+})} = \emptyset$ and $S_{({\scriptscriptstyle -,-})} = \emptyset$. Without loss of 
generality, assume that $S_{({\scriptscriptstyle -,+})}$ contains two matrices~$M_1$ and $M_2$ with non-parallel superdiagonal vectors. 
Let $\vec{v}(M_1) = (a_1,b_1)$ and $\vec{v}(M_2) = (a_2,b_2)$ be superdiagonal vectors for $M_1$ and $M_2$, respectively, such that $|\frac{a_1}{b_1}| > |\frac{a_2}{b_2}|$. To simplify the proof, we assume 
the set $S_{({\scriptscriptstyle +,-})}$ only uses one matrix~$M_3$, where $\vec{v}(M_3) = (a_3,b_3)$, to generate 
a matrix with a zero superdiagonal vector. This implies that $a_1 x + a_2 y + a_3 = 0$ and 
$b_1 x + b_2 y + b_3 = 0$ for some $x,y \in \mathbb{Q}$. Here the idea is that we first multiply 
the matrix~$M_1$ and then multiply $M_2$ later. For instance, we first multiply $M_1^m$ and 
then $M_2^m$. Then the coefficient of the highest power 
in $z$ becomes $\frac{- a'b' + 2|a_2||b_1| + |a_1||b_1| + |a_2||b_2|}{2}$. Since $a' = |a_1| + |a_2|$ and 
$b' = |b_1| + |b_2|$, the coefficient of $m^2$ is now $\frac{|a_2||b_1| - |a_1||b_2|}{2}$. 
By the supposition $|\frac{a_1}{b_1}| > |\frac{a_2}{b_2}|$, we prove that the coefficient of the 
highest power in $z$ is always negative.

The second case, where $S_{({\scriptscriptstyle +,-})} = \emptyset$ and $S_{({\scriptscriptstyle -,+})} = \emptyset$, is proven analogously.

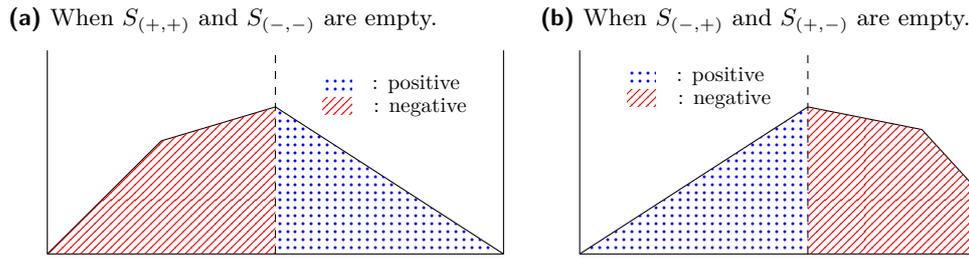
\begin{figure}[htb]
\centering
\begin{subfigure}{0.50\textwidth}
\centering
\caption{When $S_{(+,+)}$ and $S_{(-,-)}$ are empty.}
 \begin{tikzpicture}[xscale=0.5,yscale=0.3,every node/.style={scale=0.8}]
\tikzset{>=latex}
\draw (0,9) -- (0,0);
\draw (0,0) -- (12,0);
%\draw [dashed] (3,0) -- (3,9);
\draw (12,0) -- (12,9);
\draw [dashed] (6,0) -- (6,9);
\draw [draw=none,pattern=dots, pattern color=blue] (7.25,7.9) rectangle (8,7.15);
\draw [draw=none,pattern=north east lines, pattern color=red] (7.25,6.9) rectangle (8,6.15);
\node [align=left] at (9.7,7.5) {: positive};
\node [align=left] at (9.8,6.5) {: negative};
\draw (0,0) -- (3,5) -- (6,6.5);
\draw (6,6.5) --  (12,0);
%\draw  (v3) -- (0,-1);
%\draw  (3,0) -- (3,-1);
%\draw (6,0) -- (6,-1);
%\draw (12,0) -- (12,-1);
%\draw [<->] (0,-0.5) -- (3,-0.5);
%\draw [<->] (6,-0.5) -- (3,-0.5);
%\draw [<->](6,-0.5) -- (12,-0.5);
%\draw (-1,0) -- (v3);
%\draw (-1,5) -- (0,5);
%
%\draw (v11) -- (13,0);
%
%\draw [<->](-0.5,5) -- (-0.5,0);
%\draw [<->](12.5,6.5) node (v1) {} -- (12.5,0);
%\draw (12,6.5) -- (13,6.5);
%%\node at (1.5,-1) {$m|b_{i_{j_2}}|$};
%%\node at (4.5,-1) {$m|b_{i_{j_1}}|$};
%%\node at (9,-1) {$mb_{(-,+)}$};
%%\node at (14,3.5) {$m|a_{(-,+)}|$};
%%\node at (-1.5,2.5) {$ma_{i_{j_2}}$};
%%\node at (-1.5,5.5) {$ma_{i_{j_1}}$};
\fill[pattern=north east lines, pattern color=red] (0,0)--(3,5)--(3,0)--cycle;
\fill[pattern=north east lines, pattern color=red] (3,0)--(3,5)--(6,6.5)--(6,0)--cycle;
\fill[pattern=dots, pattern color=blue] (6,0)--(6,6.5)--(12,0)--cycle;
%\draw (-1,6.5) -- (0,6.5);
%\draw [<->] (-0.5,6.5) -- (-0.5,5);
\end{tikzpicture}
\end{subfigure}\begin{subfigure}{0.50\textwidth}
\centering
\caption{When $S_{(-,+)}$ and $S_{(+,-)}$ are empty.}
  \begin{tikzpicture}[xscale=0.5,yscale=0.3,every node/.style={scale=0.8}]
\tikzset{>=latex}
\draw (0,9) -- (0,0);
\draw (0,0) -- (12,0);
%\draw [dashed] (9,0) -- (9,9);
\draw (12,0) -- (12,9);
\draw [dashed] (6,0) -- (6,9);
\draw [draw=none,pattern=dots, pattern color=blue] (1.25,8.25) rectangle (2,7.5);
\draw [draw=none,pattern=north east lines, pattern color=red] (1.25,7.25) rectangle (2,6.5);
\node [align=left] at (3.7,7.85) {: positive};
\node [align=left] at (3.8,6.85) {: negative};
\draw (0,0) --  (6,6.5);
\draw (6,6.5) -- (9,5.5) -- (12,0);
%\draw  (v3) -- (0,-1);
%\draw  (9,0) -- (9,-1);
%\draw (6,0) -- (6,-1);
%\draw (12,0) -- (12,-1);
%\draw [<->] (0,-0.5) -- (6,-0.5);
%\draw [<->] (6,-0.5) -- (9,-0.5);
%\draw [<->](9,-0.5) -- (12,-0.5);
%\draw (-1,0) -- (v3);
%
%
%\draw (v11) -- (13,0);
%
%\draw [<->](-0.5,6.5) node (v1) {} -- (-0.5,0);
%\draw (12,6.5) -- (13,6.5);
%\node at (10.5,-1) {$m|b_{i_{j_2}}|$};
%\node at (7.5,-1) {$m|b_{i_{j_1}}|$};
%\node at (3,-1) {$mb_{(+,+)}$};
%\node at (-2,3.5) {$ma_{(+,+)}$};
%\node at (14,2.5) {$m|a_{i_{j_2}}|$};
%\node at (14,6) {$m|a_{i_{j_1}}|$};

\fill[pattern=dots, pattern color=blue] (0,0)--(6,0)--(6,6.5)--cycle;
\fill[pattern=north east lines, pattern color=red] (6,0)--(6,6.5)--(9,5.5)--(12,0)--cycle;
%\draw (-1,6.5) -- (0,6.5);
%\draw (12,5.5) -- (13,5.5);
%\draw [<->] (12.5,5.5)  -- (12.5,0);
%\draw [<->] (12.5,6.5) -- (12.5,5.5);
\end{tikzpicture}
\end{subfigure}
\caption{Subcases where two of the subsets from $S_{(+,+)}$, $S_{(-,+)}$, $S_{(+,-)}$, and 
$S_{(-,-)}$ are empty.}
\label{fig:twoempty}
\end{figure}

As we have proven that it is always possible to construct a matrix~$M'$ such that $\psi(M') = (0,0,c')$ for some $c' < 0$, 
we complete the proof.
\end{proof}

Note that in the above proof, we do not give optimal bounds on number of repetitions of a sequence.

We illustrate Lemma~\ref{lem:nonparallel} in the next example.

\begin{example}
Consider a semigroup $S$ generated by matrices
\begin{align*}
&\begin{pmatrix}
1&-4&20\\0&1&-6\\0&0&1
\end{pmatrix}, &&\begin{pmatrix}
1&3&20\\0&1&-2\\0&0&1
\end{pmatrix}, &&\begin{pmatrix}
1&-1&20\\0&1&1\\0&0&1
\end{pmatrix}, &&\begin{pmatrix}
1&2&20\\0&1&7\\0&0&1
\end{pmatrix}.
\end{align*}
A simple calculation shows that a product of the four matrices (in any order) is a matrix $M$ such that $\psi(M)=(0,0,80+x)$ for some $x\in\mathbb{Z}$. Our goal, is to minimize $x$ by multiplying the matrices in a different order. Denote the given matrices by $M_{({\scriptscriptstyle +,+})}=\begin{psmallmatrix}
1&2&20\\0&1&7\\0&0&1
\end{psmallmatrix}$, $M_{({\scriptscriptstyle +,-})}=\begin{psmallmatrix}
1&3&20\\0&1&-2\\0&0&1
\end{psmallmatrix}$, $M_{({\scriptscriptstyle -,-})}=\begin{psmallmatrix}
1&-4&20\\0&1&-6\\0&0&1
\end{psmallmatrix}$ and $M_{({\scriptscriptstyle -,+})}=\begin{psmallmatrix}
1&-1&20\\0&1&1\\0&0&1
\end{psmallmatrix}$, and 
\begin{align*}
N_1=M_{({\scriptscriptstyle +,+})}M_{({\scriptscriptstyle +,-})}M_{({\scriptscriptstyle -,-})}M_{({\scriptscriptstyle -,+})}=\begin{psmallmatrix}
1&0&47\\0&1&0\\0&0&1
\end{psmallmatrix}.
\end{align*}
That is, $x=-33$. By considering several copies of the product, we can have a negative value in the top right corner. Indeed, consider the product of $16$ matrices 
\begin{align*}
N_2=M_{({\scriptscriptstyle +,+})}^4M_{({\scriptscriptstyle +,-})}^4M_{({\scriptscriptstyle -,-})}^4M_{({\scriptscriptstyle -,+})}^4=\begin{psmallmatrix}
1&0&-22\\0&1&0\\0&0&1
\end{psmallmatrix}.
\end{align*}
Since, we have a matrix with negative value in the top corner, the identity matrix can be generated for example by the product $N_1^{22}N_2^{47}$.
\end{example}

\begin{theorem}\label{thm:ptime}
The identity problem for a semigroup generated by matrices from \heisq is in polynomial time.
\end{theorem}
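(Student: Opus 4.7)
The plan is to reduce the identity problem to polynomially many linear Diophantine feasibility checks, invoking Lemma~\ref{lem:DiophantineP} (and a mild LP-based refinement of it accommodating non-negativity constraints), by exploiting the dichotomy provided by Lemmas~\ref{lem:single} and~\ref{lem:nonparallel}. Given generators $M_1,\ldots,M_r$ with $\psi(M_i)=(a_i,b_i,c_i)$, any sequence whose product equals $\bm{I}_3$ yields multiplicities $(x_1,\ldots,x_r)\in\N^r$ satisfying $\sum_i x_i a_i = 0$ and $\sum_i x_i b_i = 0$. I split on the support of $(x_i)$: either all generators with $x_i>0$ have pairwise parallel superdiagonal vectors (Case~1), or the support contains two generators with non-parallel superdiagonal vectors (Case~2). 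This dichotomy is exhaustive and, together with the two lemmas, reduces identity-reachability to testing feasibility in each case.

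For Case~1, observe that the non-zero superdiagonal vectors of the generators partition into at most $r$ distinct directions; each \emph{parallelism class} $C$ consists of those generators whose superdiagonal vector is zero or lies in one specified direction, and the collection of such classes is computable in polynomial time by pairwise cross-product tests. Within a fixed class $C$, Lemma~\ref{lem:single} shows that any product of generators from $C$ whose superdiagonal sums to $(0,0)$ has upper-right value $\sum_{i\in C} x_i\bigl(c_i - \tfrac{q}{2}a_i^2\bigr)$, where $q$ is a rational constant determined by $C$. Since $c_i-\tfrac{q}{2}a_i^2$ is a fixed rational for each $i$, the requirement $\psi(M)=(0,0,0)$ becomes the linear homogeneous Diophantine system
\begin{align*}
\sum_{i\in C} x_i a_i = 0,\qquad \sum_{i\in C} x_i b_i = 0,\qquad \sum_{i\in C} x_i\bigl(c_i - \tfrac{q}{2}a_i^2\bigr) = 0,
\end{align*}
for which non-trivial non-negative integer solvability lies in $\P$ by Lemma~\ref{lem:DiophantineP} (adding $x_i\geq 0$ rows to the LP of its proof, then scaling as in the original argument, preserves both non-negativity and non-triviality).

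For Case~2, Lemma~\ref{lem:nonparallel} guarantees that it suffices to exhibit \emph{any} non-empty sequence of generators including a non-parallel pair whose superdiagonal coordinates sum to zero; the upper-right coordinate is automatically handled. For each of the $O(r^2)$ non-parallel pairs $(M_i,M_j)$, I test via linear programming whether $\sum_k x_k a_k = 0$, $\sum_k x_k b_k = 0$ admits a non-negative rational solution with $x_i\geq 1$ and $x_j\geq 1$; scaling by the product of the coordinate denominators preserves non-negativity together with the strict positivity of $x_i$ and $x_j$, producing the required integer multiplicities. If any Case~1 test or any Case~2 test succeeds, then $\bm{I}_3\in S$; conversely, any witness sequence for the identity falls into exactly one case and is therefore detected.

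The main obstacle is Case~1: one must recognise that although Lemma~\ref{lem:single} features the quadratic expression $a_i^2$, its coefficient $\tfrac{q}{2}$ is a \emph{constant} depending only on the class and not on the multiplicities, so the constraint $\sum x_i(c_i-\tfrac{q}{2}a_i^2)=0$ is genuinely linear in the unknowns $x_i$. Without this observation, one might suspect a truly non-linear Diophantine obstruction; with it, Lemma~\ref{lem:DiophantineP} applies and the overall algorithm runs in polynomial time.
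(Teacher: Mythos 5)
Your proposal follows essentially the same route as the paper's own proof: the same dichotomy (all generators used have pairwise parallel superdiagonal vectors versus a non-parallel pair appears), the same reduction of Case~1 via Lemma~\ref{lem:single} to a linear homogeneous Diophantine system over each parallelism class, the same pairwise enumeration with Lemma~\ref{lem:nonparallel} in Case~2, and the same final appeal to Lemma~\ref{lem:DiophantineP}. Your explicit remark that the coefficient $\tfrac{q}{2}$ in Lemma~\ref{lem:single} is a class-dependent constant, so the apparent quadratic $c_i - \tfrac{q}{2}a_i^2$ yields a genuinely \emph{linear} constraint in the unknown multiplicities, makes the key observation the paper leaves implicit; and your handling of the $x_i\geq 1$ constraints in Case~2 by a direct LP formulation and denominator-clearing is a slightly more careful spelling-out of the paper's ``with an additional constraint that the numbers in the solution corresponding to $M_{i_1},M_{i_2}$ are non-zero.''
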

\begin{proof}
Let $S$ be the matrix semigroup in \heisq generated by the set $G = \{M_1, \ldots, M_r\}$.
There are two possible cases of having the identity matrix in the matrix semigroup 
in \heisq. Either the identity matrix is generated by a product of matrices with pairwise parallel superdiagonal vectors or there are at least two matrices with non-parallel superdiagonal vectors.

Consider the first case. %The first case is that the matrices generating the identity matrix have parallel superdiagonal vectors.  
Lemma~\ref{lem:single} provides a formula to compute the value in the top corner regardless of the order of the multiplications. That is, we need to solve a system of linear homogeneous Diophantine equations with solutions over non-negative integers.
%The first case is decidable as essentially, we need to solve %by reducing to the problem of solving
%a system of linear Diophantine equations by \autoref{lem:single}. 
We partition 
the set $G$ into several disjoint subsets~$G_1, G_2, \ldots,G_s$, where $s$ is at most $r$, and each subset contains matrices with parallel superdiagonal vectors. Since superdiagonal vectors being parallel is a transitive and symmetric property, each matrix needs to be compared to a representative of each subset. If there are no matrices with parallel superdiagonal vectors, then there are $r$ subsets $G_i$ containing exactly one matrix and $O(r^2)$ tests were done.
Let us consider $G_i = \{ M_{k_1}, \ldots, M_{k_{s_i}}\}$, i.e., one of the subsets containing $s_i$ matrices and 
$\psi(M_{k_j}) = (a_{k_j},b_{k_j},c_{k_j})%\begin{psmallmatrix}
%1 & a_{k_j} & c_{k_j}\\
%0 & 1 & b_{k_j}\\
%0 & 0 & 1
%\end{psmallmatrix}
$. %Considering the contribution to the top corner when matrix $M_{k_{j}}$ is multiplied. 
By Lemma~\ref{lem:single}, the value 
%\begin{align*}
$c_{k_{j}} - \frac{q_i}{2} a_{k_{j}}^2$,
%\end{align*}
for a fixed $q_i \in \mathbb{Q}$, is added to the top corner when matrix $M_{k_{j}}$ is multiplied.% Denote $q_i=\frac{p_i}{p_i'}$.

We solve the system of two linear homogeneous Diophantine equations~$A \by = \bm{0}$, where
\begin{align*}
A = \begin{pmatrix}
a_{k_1} & a_{k_2}& \cdots & a_{k_{s_i}}\\
c_{k_{1}} - \frac{q_i}{2} a_{k_1}^2 & c_{k_{2}} - \frac{q_i}{2} a_{k_2}^2 & \cdots &  c_{k_{s_i}} -\frac{q_i}{2} a_{k_{s_i}}^2
\end{pmatrix}
\end{align*}
and $\by^\mathsf{T} \in \N^{s_i}$. 
The first row is the constraint that guarantees that the first component of the superdiagonal is zero
in the matrix product constructed from a solution. Since the superdiagonal vectors are parallel, it also  implies that the whole
vector is zero. The second row guarantees that the upper corner is zero.

It is obvious that the identity matrix is in the semigroup if we have a solution in the system of linear homogeneous Diophantine equations for any subset $G_i$. That is, we need to solve at most $r$ systems of two linear homogeneous Diophantine equations.

Next, we consider the second case, where by Lemma~\ref{lem:nonparallel}, it is enough to check whether there exists a sequence of matrices generating a matrix with zero superdiagonal vector and containing two matrices with non-parallel superdiagonal vectors. 
Let us say that $M_{i_1}, M_{i_2} \in G$, where $1 \le i_1,i_2 \le r$ are the two matrices. Recall that $G = \{M_1, M_2, \ldots, M_r\}$ is a generating set 
of the matrix semigroup and let $\psi(M_i) = (a_i, b_i, c_i)$ for all $1 \le i \le r$. 
We can see that there exists such a product containing the two matrices by solving a system of two linear homogeneous Diophantine equations 
of the form $B \by = \bm{0}$, where 
\begin{align*}
B = \begin{pmatrix}
a_{1} & a_{2}& \cdots & a_{r}\\
b_{1} & b_{2}& \cdots & b_{r}
\end{pmatrix},
\end{align*}
with an additional constraint that the numbers in the solution~$\by$ 
that correspond to $M_{i_1}$ and $M_{i_2}$ are non-zero since we must use these two matrices in 
the product. We repeat this process at most $r(r-1)$ times until we find a solution.
Therefore, 
the problem reduces again to solving at most $O(r^2)$ systems of two linear homogeneous Diophantine equations.

Finally, we conclude the proof by mentioning that the identity problem 
for matrix semigroups in the Heisenberg group over rationals \heisq can be decided in polynomial time as, by Lemma~\ref{lem:DiophantineP}, the problem of existence of a positive integer solution to a system of linear homogeneous Diophantine equations is in polynomial time. Note that if the system is non-homogeneous, then solvability of a system of linear Diophantine equations with solutions over positive integers is an $\NP$-complete problem; see for example \cite{Papadimitriou81}. 
%Indeed, a system of linear homogeneous Diophantine equations with solutions over non-negative integers can be converted to a linear programming problem with a solution over rationals which is known to be solvable in polynomial time; see e.g., \cite{Schrijver98}. It is easy to add additional constraints to the linear programming that ensure that solutions are positive and non-zero. As the system is homogeneous, any solution can be converted to an integer solution by multiplying by the denominators. 
\end{proof}

%\subsection{Decidability of the identity problem in higher dimensions}
Next, we generalize the above algorithm for the identity problem in the Heisenberg 
group~\heisq to the domain of the Heisenberg groups for 
any dimension over the rational numbers. 
Similarly to the case of dimension three, we establish the following 
result for the case of matrices where multiplication is commutative.

\begin{lemma}\label{lem:single2}
Let $G = \{ M_1, M_2, \ldots, M_r \} \subseteq {\rm H}(n,\mathbb{Q})$ be a set of 
matrices from the Heisenberg group such that $\psi(M_i) = (\ba_i,\bb_i,c_i)$ and $\psi(M_j) = (\ba_j,\bb_j,c_j)$ and $\ba_i \cdot \bb_j = \ba_j \cdot \bb_i$ for any $1 \le i \ne j \le r$. 
If there exists a sequence of matrices
$M = M_{i_1} M_{i_2} \cdots M_{i_k},$
where $i_j \in [1,r]$ for all $1 \le j \le k$, such that $\psi(M) = (\bm{0},\bm{0},c)$ for some $c \in \mathbb{Q}$, 
then, 
\begin{align*}
c = \sum_{j=1}^k (c_{i_j} - \frac{1}{2}  \ba_{i_j} \cdot \bb_{i_j}).
\end{align*}
\end{lemma}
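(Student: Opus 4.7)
My plan is to mirror the argument of Lemma~\ref{lem:single}, but to replace the ``parallel superdiagonal'' hypothesis by its higher-dimensional analogue, namely the pairwise commutativity criterion of Lemma~\ref{lem:commute}, i.e., $\ba_i\cdot\bb_j=\ba_j\cdot\bb_i$ for every $i,j$. Because the matrices pairwise commute, the value of the product is independent of the order in which the factors appear, so it suffices to compute $c$ for the given sequence using the iterated Heisenberg group law.

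First, I will apply the group law $(\ba_1,\bb_1,c_1)\otimes(\ba_2,\bb_2,c_2)=(\ba_1+\ba_2,\bb_1+\bb_2,c_1+c_2+\ba_1\cdot\bb_2)$ inductively to obtain the standard expression
\begin{align*}
c \;=\; \sum_{j=1}^k c_{i_j} \;+\; \sum_{\ell=1}^{k-1}\Bigl(\sum_{j=1}^{\ell}\ba_{i_j}\Bigr)\cdot\bb_{i_{\ell+1}}.
\end{align*}
Denote the second sum by $T=\sum_{1\le j<m\le k}\ba_{i_j}\cdot\bb_{i_m}$. Using the commutativity hypothesis we may swap $\ba$ and $\bb$ inside each term, which after a symmetric renaming yields $T=\sum_{1\le m<j\le k}\ba_{i_j}\cdot\bb_{i_m}$, so
\begin{align*}
2T \;=\; \sum_{j\ne m}\ba_{i_j}\cdot\bb_{i_m} \;=\; \Bigl(\sum_{j}\ba_{i_j}\Bigr)\cdot\Bigl(\sum_{m}\bb_{i_m}\Bigr) \;-\; \sum_{j=1}^k \ba_{i_j}\cdot\bb_{i_j}.
\end{align*}

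The key simplification comes from the assumption $\psi(M)=(\bm 0,\bm 0,c)$, which forces $\sum_j\ba_{i_j}=\bm 0$ and $\sum_m\bb_{i_m}=\bm 0$, so the first term above vanishes and $T=-\tfrac12\sum_{j}\ba_{i_j}\cdot\bb_{i_j}$. Substituting this into the expression for $c$ gives exactly the claimed identity
\begin{align*}
c \;=\; \sum_{j=1}^k\Bigl(c_{i_j}-\tfrac{1}{2}\,\ba_{i_j}\cdot\bb_{i_j}\Bigr).
\end{align*}

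There is no substantial obstacle: the one delicate point is that the dot product $\ba\cdot\bb$ does not a priori satisfy $\ba_i\cdot\bb_j=\ba_j\cdot\bb_i$, so the symmetrisation step used to compute $2T$ depends crucially on invoking Lemma~\ref{lem:commute}. Once that is in place the argument is purely a bookkeeping identity, analogous to \eqref{eq:sum} in the three-dimensional case, and the proof reduces to the same telescoping computation.
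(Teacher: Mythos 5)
Your proof is correct and follows the same route as the paper: expand $c$ via the iterated Heisenberg group law, symmetrize the triangular sum $\sum_{j<m}\ba_{i_j}\cdot\bb_{i_m}$ into $\tfrac12\bigl(\sum_{j,m}\ba_{i_j}\cdot\bb_{i_m}-\sum_j\ba_{i_j}\cdot\bb_{i_j}\bigr)$, and then kill the first double sum using $\sum_j\ba_{i_j}=\bm 0$ and $\sum_m\bb_{i_m}=\bm 0$. The only (minor) difference is that you make explicit that the symmetrization step genuinely requires the pairwise-commutativity hypothesis $\ba_i\cdot\bb_j=\ba_j\cdot\bb_i$, whereas the paper just refers back to the scalar computation \eqref{eq:sum} and leaves this dependence implicit.
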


\begin{proof}
Consider the sequence $M_{i_1} M_{i_2} \cdots M_{i_k}$ and let $\psi(M_i) = (\ba_i, \bb_i, c_i)$ for each $i \in [1,r]$. From the multiplication of 
matrices, we have the following equation:
\begin{align*}
c  &=  \sum_{j=1}^k c_{i_j}   +  \sum_{\ell=1}^{k-1} \left( \sum_{j=1}^{\ell}\ba_{i_j} \right) \cdot \bb_{i_{\ell+1}}  
=  \sum_{j=1}^k c_{i_j}   + \frac{1}{2} \left( \sum_{\ell=1}^{k}\sum_{j=1}^{k} \ba_{i_\ell} \cdot \bb_{i_j}  - \sum_{j=1}^k \ba_{i_j} \cdot \bb_{i_j}  \right) \\
 &=  \sum_{j=1}^k (c_{i_j} - \frac{1}{2}  \ba_{i_j} \cdot \bb_{i_j}).
\end{align*}
Note that the first equality follows from a direct computation as in equation~\eqref{eq:sum}. From the above equation, we prove the statement claimed in the lemma. Moreover, due to 
the commutativity of multiplication, the value~$c$ does not change even if we change the 
order of multiplicands.
\end{proof}

Lemma~\ref{lem:nonparallel} does not generalize to \heisnq in the same way as we cannot classify matrices according to types to control the value in upper-right corner, so we use a different technique to prove that the value in the upper corner will be diverging to both positive and negative infinity quadratically as we repeat the same sequence generating any matrix $M$ such that $\psi(M)=(\bm{0},\bm{0},c)$.

\begin{lemma}\label{lem:nonparallel2}
Let $S = \langle M_1, \ldots, M_r \rangle \subseteq {\rm H}(n,\mathbb{Q})$ be a finitely generated matrix semigroup. Then 
the identity matrix exists in $S$ if there exists a sequence of matrices
$M_{i_1} M_{i_2} \cdots M_{i_k},$
where $i_j \in [1,r]$ for all $1 \le j \le k$, satisfying the following properties:
\begin{enumerate}[(i)]
\item $\psi(M_{i_1} M_{i_2} \cdots M_{i_k}) = (\bm{0},\bm{0},c)$ for some $c \in \mathbb{Q}$, and
\item $\ba_{i_{j_1}} \cdot \bb_{i_{j_2}} \ne \ba_{i_{j_2}} \cdot \bb_{i_{j_1}}$ for some $j_1, j_2 \in [1,k]$, where $\psi(M_i) = (\ba_i, \bb_i, c_i)$ for $1 \le i \le r$. 
\end{enumerate}
\end{lemma}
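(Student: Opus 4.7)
The plan is to lift the strategy of Lemma~\ref{lem:nonparallel} to arbitrary dimension by constructing, from the given product, two central elements of $S$ (i.e., matrices with zero superdiagonal vectors, which commute with every element of the semigroup) whose $c$-coordinates will provably have opposite signs for sufficiently large $m$, and then combining them to build the identity via a B\'ezout-style argument.

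If $c=0$ then the given product already equals the identity; otherwise I assume $c>0$ (the case $c<0$ is symmetric). Set
\[ \delta = \ba_{i_{j_1}}\cdot\bb_{i_{j_2}} - \ba_{i_{j_2}}\cdot\bb_{i_{j_1}} \neq 0, \]
and note that $i_{j_1}\neq i_{j_2}$ since otherwise $\delta=0$. For each integer $m\geq 1$, let $\sigma'$ denote the sequence obtained from $M_{i_1}\cdots M_{i_k}$ by deleting positions $j_1$ and $j_2$, and define
\[ \pi_0(m) = (\sigma')^m M_{i_{j_1}}^m M_{i_{j_2}}^m, \qquad \pi_1(m) = (\sigma')^m M_{i_{j_2}}^m M_{i_{j_1}}^m. \]
Using $\sum_p \ba_{i_p} = \bm{0} = \sum_p \bb_{i_p}$, a direct check gives $\psi(\pi_0(m)) = (\bm{0},\bm{0},c_0(m))$ and $\psi(\pi_1(m)) = (\bm{0},\bm{0},c_1(m))$ for some $c_0(m),c_1(m)\in\mathbb{Q}$.

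The core calculation is to track $c_0(m)$ and $c_1(m)$ as polynomials in $m$. By the group law of \heisnq, the value $c_0(m)$ decomposes into internal contributions from each of the three blocks $(\sigma')^m$, $M_{i_{j_1}}^m$, $M_{i_{j_2}}^m$, plus three pairwise cross-block contributions. Only the cross term between $M_{i_{j_1}}^m$ and $M_{i_{j_2}}^m$ changes between $\pi_0$ and $\pi_1$, giving the identity $c_0(m)-c_1(m)=m^2\delta$. Substituting $\sum_{p\neq j_1,j_2}\ba_{i_p} = -(\ba_{i_{j_1}}+\ba_{i_{j_2}})$ and the analogous identity for $\bb$ to eliminate the contribution from $(\sigma')^m$, the remaining $m^2$-contributions collapse, so that the leading coefficient of $c_0(m)$ equals exactly $\delta/2$ and hence that of $c_1(m)$ equals $-\delta/2$. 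Carrying out this bookkeeping of the $m^2$-coefficient is the main technical obstacle: several individual contributions each scale as $m^2$, and the argument hinges on witnessing that all the symmetric combinations cancel, leaving only the antisymmetric quantity $\delta$.

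Once these leading coefficients are established, I choose $m$ large enough that $c_0(m)$ and $c_1(m)$ are nonzero and of opposite signs. Writing $c_0(m)=p_0/q_0$ and $c_1(m)=p_1/q_1$ in lowest terms with $p_0,q_0,q_1>0$ and $p_1<0$ (without loss of generality), the matrices $\pi_0(m)$ and $\pi_1(m)$ lie in the center of \heisnq and therefore commute, so that
\[ \pi_0(m)^{-q_0 p_1}\,\pi_1(m)^{q_1 p_0} \]
is a semigroup element with positive integer exponents, zero superdiagonal, and $c$-coordinate $-q_0 p_1\cdot(p_0/q_0)+q_1 p_0\cdot(p_1/q_1)=0$, which is precisely the identity matrix.
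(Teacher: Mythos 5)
Your proof is correct and follows essentially the same approach as the paper: take the two distinguished matrices $M_{i_{j_1}}, M_{i_{j_2}}$, collapse the rest of the sequence to a single matrix, raise all three to the $m$th power, and observe that the two orderings $M_{i_{j_1}}^m M_{i_{j_2}}^m$ versus $M_{i_{j_2}}^m M_{i_{j_1}}^m$ produce central matrices whose $c$-coordinates are quadratics in $m$ with leading coefficients $\pm\delta/2$, after which a positive-exponent B\'ezout combination gives the identity. The only cosmetic difference is that you place the residual block before the two distinguished blocks while the paper places it after; by the relations $\ba_{i_{j_1}}+\ba_{i_{j_2}}+\ba_x=\bm 0$ and $\bb_{i_{j_1}}+\bb_{i_{j_2}}+\bb_x=\bm 0$ this yields the same $m^2$-coefficient, so the two computations coincide.
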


\begin{proof}
%It is easy to see that if there exists a sequence of matrices $M_{i_1} M_{i_2} \cdots M_{i_k},$ where $i_j \in [1,r]$ for all $1 \le j \le k$ satisfying the conditions claimed in this lemma, we can always construct the identity matrix in a similar way to the proof of \autoref{lem:nonparallel}.

From the first property claimed in the lemma, we know that any permutation 
of the sequence of matrix multiplications of $M_{i_1}  \cdots M_{i_k}$ 
results in matrices $M'$ such that $\psi(M') = (\bm{0}, \bm{0}, y)$ for some 
$y \in \mathbb{Q}$ since the multiplication of matrices in the Heisenberg group 
performs additions of vectors which is commutative in the top row and the 
rightmost column excluding the upper-right corner. From the commutative 
behaviour in the horizontal and vertical vectors of matrices in the Heisenberg group, we also know that if we duplicate the matrices in the sequence $M_{i_1} \cdots M_{i_k}$ and multiply the matrices in any order, then the resulting matrix has a non-zero coordinate in the upper triangular coordinates only in the upper right corner.

Now let $j_1, j_2 \in [1,k]$ be two indices such that $\ba_{i_{j_1}} \cdot \bb_{i_{j_2}} 
\ne \ba_{i_{j_2}} \cdot \bb_{i_{j_1}}$ as claimed in the lemma. Then 
consider the following matrix~$M_d$ that can be obtained by duplicating the 
sequence $M_{i_1} \cdots M_{i_k}$ of matrices into $\ell$ copies 
and shuffling the order as follows: $M_d = M_{i_{j_1}}^\ell M_{i_{j_2}}^\ell M_x^\ell,$
where $M_x$ is a matrix that is obtained by multiplying the matrices in 
$M_{i_1}\cdots M_{i_k}$ except the two matrices $M_{j_1}$ and $M_{j_2}$.
Then it is clear that $\psi(M_d) = (\bm{0}, \bm{0}, z)$ for some $z$.
Let us say that $\psi(M_x) = (\ba_x, \bb_x, c_x)$. Then it is 
easy to see that $\ba_{i_{j_1}} + \ba_{i_{j_2}} + \ba_x = \bm{0}$ and 
$\bb_{i_{j_1}} + \bb_{i_{j_2}} + \bb_x = \bm{0}$. Now we show that 
we can always construct two matrices that have only one non-zero rational number in the upper right corner with different signs. %Then, as in the proof \autoref{lem:nonparallel}, the identity matrix always exists in the semigroup as we can multiply the two matrices correct number of times to have zero in the upper right coordinate as well.

First, let us consider the $\ell$th power of the matrix $M_{i_{j_1}}$ as 
follows:
%\begin{center}
%$
%\psi(M_{i_{j_1}}^\ell) = (\vec{a_{i_{j_1}}} \ell, \vec{b_{i_{j_1}}} \ell, c_{i_{j_1}} \ell   +  \sum_{h=1}^{\ell -1} h (\vec{a_{i_{j_1}}}  \cdot \vec{b_{i_{j_1}}} ) )
% =  (\vec{a_{i_{j_1}}} \ell, \vec{b_{i_{j_1}}} \ell, c_{i_{j_1}} \ell   +  \vec{a_{i_{j_1}}}  \cdot \vec{b_{i_{j_1}}} \frac{(\ell-1) \ell}{2}).
%$
%\end{center}
\begin{align*}
\!\!\!\!\!\psi(M_{i_{j_1}}^\ell) = (\ba_{i_{j_1}} \ell, \bb_{i_{j_1}} \ell, c_{i_{j_1}} \ell   +  \sum_{h=1}^{\ell -1} h (\ba_{i_{j_1}}  \cdot \bb_{i_{j_1}} ) ) 
=  (\ba_{i_{j_1}} \ell, \bb_{i_{j_1}} \ell, c_{i_{j_1}} \ell   +  \ba_{i_{j_1}}  \cdot \bb_{i_{j_1}} \frac{(\ell-1) \ell}{2}).
\end{align*}
It follows that the matrix~$M_d$ satisfies the equation~$\psi(M_d) = (\bm{0},\bm{0}, z)$ 
such that
\begin{align*}
z  &=  y \ell + (\ba_{i_{j_1}}  \cdot \bb_{i_{j_1}}  + \ba_{i_{j_2}}  \cdot \bb_{i_{j_2}} + \ba_x  \cdot \bb_x ) \frac{(\ell-1) \ell}{2} + (\ba_{i_{j_1}}  \cdot \bb_{i_{j_2}} 
+ (\ba_{i_{j_1}} + \ba_{i_{j_2}})  \cdot \bb_x) \ell^2\\
&= \frac{1}{2}((\ba_{i_{j_1}}  \cdot \bb_{i_{j_1}}  + \ba_{i_{j_2}}  \cdot \bb_{i_{j_2}} + \ba_x  \cdot \bb_x )  + 2 (\ba_{i_{j_1}}  \cdot \bb_{i_{j_2}} 
+ (\ba_{i_{j_1}} + \ba_{i_{j_2}})  \cdot \bb_x)) \ell^2 \\
& \quad {}  + \frac{1}{2}(2y -  (\ba_{i_{j_1}}  \cdot \bb_{i_{j_1}}  + \ba_{i_{j_2}}  \cdot \bb_{i_{j_2}} + \ba_x  \cdot \bb_x ) ) \ell.
\end{align*}
%Let us simplify the coefficient of the highest term $\ell^2$ in $z$ in order to show that the product $M_{i_{j_2}}^\ell M_{i_{j_1}}^\ell M_x^\ell$ will have a different sign. The coefficient of $\ell^2$ in the previous equation is
Now the coefficient of the highest term $\ell^2$ in $z$ can be simplified as follows:
\begin{align*}
&\frac{1}{2}((\ba_{i_{j_1}}  \cdot \bb_{i_{j_1}}  + \ba_{i_{j_2}}  \cdot \bb_{i_{j_2}} + \ba_x  \cdot \bb_x )  + 2 (\ba_{i_{j_1}}  \cdot \bb_{i_{j_2}} + (\ba_{i_{j_1}} + \ba_{i_{j_2}})  \cdot \bb_x))\\
&\qquad= \frac{1}{2}((\ba_{i_{j_1}} + \ba_{i_{j_2}}) \cdot (\bb_{i_{j_1}} + \bb_{i_{j_1}}) + \ba_{i_{j_1}}  \cdot \bb_{i_{j_2}} - \ba_{i_{j_2}}  \cdot \bb_{i_{j_1}}    + (\ba_{i_{j_1}} + \ba_{i_{j_2}})  \cdot \bb_x) \\
&\qquad= \frac{1}{2} ((- \ba_x) \cdot (- \bb_x) + \ba_{i_{j_1}}  \cdot \bb_{i_{j_2}} - \ba_{i_{j_2}}  \cdot \bb_{i_{j_1}}   + (- \ba_x) \cdot \bb_x)\\
&\qquad= \frac{1}{2} (\ba_{i_{j_1}}  \cdot \bb_{i_{j_2}} - \ba_{i_{j_2}}  \cdot \bb_{i_{j_1}}).
\end{align*}
By the second property claimed in the lemma, we know that the coefficient of the highest term~$\ell^2$ 
in $z$ cannot be zero. Moreover, the value of $z$ will be diverging to 
negative or positive infinity depending on the sign of $\ba_{i_{j_1}}  \cdot \bb_{i_{j_2}} - \ba_{i_{j_2}}  \cdot \bb_{i_{j_1}}$. Now we consider a different matrix~$M_e$ which is defined to be 
the following product $M_{i_{j_2}}^\ell M_{i_{j_1}}^\ell M_x^\ell$ and say that $\psi(M_e) = (\bm{0}, \bm{0}, e)$ for some $e \in \mathbb{Q}$. Since we have changed the role of two matrices $M_{i_{j_1}}$ and $M_{i_{j_2}}$, the value of $e$ can be represented by a quadratic equation where the coefficient of the highest term is $\ba_{i_{j_2}}  \cdot \bb_{i_{j_1}} - \ba_{i_{j_1}}  \cdot \bb_{i_{j_2}}$.
Therefore, we have proved that it is always possible to construct two matrices that have only one non-zero rational number in the upper right corner with different signs. Then, as in the proof Lemma~\ref{lem:nonparallel}, the identity matrix always exists in the semigroup as we can multiply these two matrices correct number of times to have zero in the upper right coordinate as well. %and further obtain the identity matrix if there exists a product of matrices that satisfies the two conditions claimed in the lemma.
\end{proof}

Next, we prove that the identity problem is decidable for $n$-dimensional Heisenberg matrices. In contrast to Theorem~\ref{thm:ptime}, we do not claim that the problem is decidable in polynomial time since one of the steps of the proof is to partition matrices according to dot products which cannot be extended to higher dimensions than three. For higher dimensions, partitioning matrices according to dot products takes an exponential time in the number of matrices in the generating set. Note that if the size of the generating set is fixed, i.e., only the matrices are part of the input, then the problem remains in $\P$. %Indeed, then there is a constant number of subsets to be considered.

\begin{theorem}\label{thm:heisnq}
The identity problem for finitely generated matrix semigroups in the 
Heisenberg group~${\rm H}(n,\mathbb{Q})$ is decidable.
\end{theorem}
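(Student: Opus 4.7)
The plan is to mirror the structure of the proof of Theorem~\ref{thm:ptime}, relying on Lemma~\ref{lem:single2} and Lemma~\ref{lem:nonparallel2} to reduce the identity problem to the solvability of finitely many systems of linear homogeneous Diophantine equations, which is decidable (in fact, by Lemma~\ref{lem:DiophantineP}, in polynomial time).

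Let $G = \{M_1,\ldots,M_r\} \subseteq {\rm H}(n,\mathbb{Q})$ with $\psi(M_i)=(\ba_i,\bb_i,c_i)$. Any product of generators that equals the identity either (a) uses only matrices whose superdiagonal vectors pairwise satisfy the commutation condition of Lemma~\ref{lem:commute}, namely $\ba_i\cdot\bb_j=\ba_j\cdot\bb_i$; or (b) uses at least two matrices $M_{i_1},M_{i_2}$ with $\ba_{i_1}\cdot\bb_{i_2}\neq\ba_{i_2}\cdot\bb_{i_1}$.

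In case (a) I would enumerate every subset $G'\subseteq G$ such that the matrices in $G'$ pairwise commute, and for each such $G' = \{M_{k_1},\ldots,M_{k_s}\}$ apply Lemma~\ref{lem:single2}: the identity lies in $\langle G'\rangle$ iff there exists a non-trivial vector $\by\in\N^s$ solving the homogeneous Diophantine system
\begin{align*}
\sum_{j=1}^{s} y_j \ba_{k_j} = \bm{0}, \qquad \sum_{j=1}^{s} y_j \bb_{k_j} = \bm{0}, \qquad \sum_{j=1}^{s} y_j\bigl(c_{k_j} - \tfrac{1}{2}\ba_{k_j}\cdot\bb_{k_j}\bigr) = 0.
\end{align*}
The first two block equations force the superdiagonal of the product to vanish, and by Lemma~\ref{lem:single2} the third equation characterises exactly when the upper-right corner also vanishes. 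In case (b), by Lemma~\ref{lem:nonparallel2} it suffices to detect a single product whose upper-triangular superdiagonal is zero and which uses some pair $M_{i_1},M_{i_2}$ with non-commuting superdiagonals. For each ordered pair $(i_1,i_2)\in [1,r]^2$ with $\ba_{i_1}\cdot\bb_{i_2}\neq\ba_{i_2}\cdot\bb_{i_1}$, I would solve
\begin{align*}
\sum_{i=1}^{r} y_i \ba_i = \bm{0}, \qquad \sum_{i=1}^{r} y_i \bb_i = \bm{0},
\end{align*}
over $\by\in\N^r$ with the side constraint $y_{i_1},y_{i_2}\geq 1$, which is easily reduced to a homogeneous Diophantine system by substituting $y_{i_1}=y'_{i_1}+1$, $y_{i_2}=y'_{i_2}+1$ and absorbing the resulting constants into auxiliary equations, or simply by a direct linear-programming formulation as in Lemma~\ref{lem:DiophantineP}. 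A positive solution yields a product as in Lemma~\ref{lem:nonparallel2}, and hence, by that lemma, the identity matrix belongs to the semigroup.

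The main obstacle compared with dimension three is enumerating the candidate subsets in case (a). In ${\rm H}(3,\mathbb{Q})$ the relation ``superdiagonal vectors are parallel'' is transitive, so the generators partition canonically into at most $r$ equivalence classes. For $n\geq 4$, the relation ``$\ba_i\cdot\bb_j=\ba_j\cdot\bb_i$'' is not transitive, so I cannot simply partition $G$; instead I must iterate over all (maximal) pairwise-commuting subsets, of which there can be exponentially many. This is where the polynomial-time bound of Theorem~\ref{thm:ptime} is lost, but since there are only $2^r$ subsets and each one is tested by solving a single linear Diophantine system in polynomial time, the overall procedure terminates and the identity problem in ${\rm H}(n,\mathbb{Q})$ is decidable. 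If $r$ is regarded as fixed, the algorithm remains polynomial in the bit-size of the matrix entries.
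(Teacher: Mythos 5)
Your proposal follows essentially the same route as the paper's own proof: split into the commuting case (a) and the non-commuting case (b), in case (a) enumerate the (exponentially many) pairwise-commuting subsets and test each via Lemma~\ref{lem:single2} by a homogeneous Diophantine system, and in case (b) try each pair $(i_1,i_2)$ with $\ba_{i_1}\cdot\bb_{i_2}\neq\ba_{i_2}\cdot\bb_{i_1}$, solve the $2(n-2)$ homogeneous equations forcing the superdiagonal to vanish subject to $y_{i_1},y_{i_2}>0$, and invoke Lemma~\ref{lem:nonparallel2}. You even make the same observation that the commutation relation fails to be transitive for $n\geq 4$, which is exactly the paper's explanation for the loss of the polynomial bound; your version is, if anything, slightly more explicit in writing out the case (a) system.
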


\begin{proof}
Similarly to the proof of Theorem~\ref{thm:ptime}, there are two ways the identity matrix can be generated. Either all the matrices commute or there are at least two matrices that do not commute. 

Let $S$ be the matrix semigroup in \heisnq generated by the set $G = \{M_1, M_2, \ldots, M_r\}$.
Consider matrices $N_1,N_2$ and $N_3$, such that $\psi(N_1)=(\ba_1,\bb_1,c_1)$, $\psi(N_2)=(\ba_2,\bb_2,c_2)$ and $\psi(N_3)=(\ba_3,\bb_3,c_3)$. If $\ba_1\cdot \bb_2=\ba_2\cdot\bb_1$ and $\ba_2\cdot \bb_3=\ba_3\cdot \bb_2$, it does not imply that $\ba_1\cdot \bb_3=\ba_3\cdot \bb_1$. Therefore, the number of subsets of $G$, where 
each subset contains matrices that commute with other matrices in the same subset, is exponential 
in $r$ as two different subsets are not necessarily disjoint.

Now we examine whether it is possible to generate the identity matrix by multiplying matrices in each 
subset by Lemma~\ref{lem:single2}. If it is not possible, we need to consider the case of having two matrices 
that do not commute with each other in the product with zero values in the upper-triangular coordinates 
except the corner. Let us say that $M_{i_1}, M_{i_2} \in G$, where $1 \le i_1,i_2 \le r$ are the two matrices. 
Recall that $G = \{M_1, M_2, \ldots, M_r\}$ is a generating set 
of the matrix semigroup and let $\psi(M_i) = (\ba_i, \bb_i, c_i)$ for all $1 \le i \le r$. We also denote the $m$th element 
of the vector~$\ba_i$ (respectively, $\bb_i$) by $\ba_i[m]$ (respectively, $\bb_i[m]$) for $1 \le m \le n-2$.

Then we can see that there exists such a product by solving a system of $2(n-2)$ linear homogeneous Diophantine equations 
of the form $B \by = \bm{0}$, where
\begin{align*}
%B = \begin{pmatrix}
%\ba_1[1] & \ba_2[1]& \cdots & \ba_r[1] \\
%\ba_1[2] & \ba_2[2]& \cdots & \ba_r[2] \\
%\vdots & \vdots & \ddots & \vdots \\
%\ba_1[d-2] & \ba_2[d-2]& \cdots & \ba_r[d-2] \\
%\bb_1[1] & \bb_2[1]& \cdots & \bb_r[1] \\
%\bb_1[2] & \bb_2[2]& \cdots & \bb_r[2] \\
%\vdots & \vdots & \ddots & \vdots \\
%\bb_1[d-2] & \bb_2[d-2]& \cdots & \bb_r[d-2] \\
%\end{pmatrix},
B = \begin{pmatrix}
\ba_1[1] & \cdots & \ba_r[1] \\
%\ba_1[2] & \ba_2[2]& \cdots & \ba_r[2] \\
\vdots & \ddots & \vdots \\
\ba_1[d-2] & \cdots & \ba_r[d-2] \\
\bb_1[1] & \cdots & \bb_r[1] \\
%\bb_1[2] & \bb_2[2]& \cdots & \bb_r[2] \\
\vdots &  \ddots & \vdots \\
\bb_1[d-2] & \cdots & \bb_r[d-2] \\
\end{pmatrix},
\end{align*}
with an additional constraint that the numbers in the solution~$\by$ 
that correspond to $M_{i_1}$ and $M_{i_2}$ are non-zero since we must use these two matrices in 
the product. We repeat this process at most $r(r-1)$ times until we find a solution.

Hence, we can view the identity problem in 
${\rm H}(n,\mathbb{Q})$ for $n \ge 3$ as the problem of solving systems of 
$2(n-2)$ linear homogeneous Diophantine equations with some constraints on the solution. 
By Lemma~\ref{lem:DiophantineP}, we can solve systems of linear homogeneous Diophantine equations in polynomial time, thus we conclude that the identity problem 
in ${\rm H}(n,\mathbb{Q})$ is also decidable. 
\end{proof}

\section{The identity problem in matrix semigroups in dimension four}\label{sec:IPfour}
In this section, we prove that the identity problem is undecidable for $4\times4$ matrices, when the generating set has eight matrices, by introducing a new technique exploiting the anti-diagonal entries.

\begin{theorem}\label{thm:identity4}
Given a semigroup $S$ generated by eight $4 \times 4$ 
integer matrices with determinant one, determining whether the identity matrix belongs
to $S$ is undecidable. %This holds for $m = n_p + 4$ (currently, $n_p = 5$).
\end{theorem}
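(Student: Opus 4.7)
The plan is to reduce the Post Correspondence Problem to the identity problem in \slfourz, encoded by exactly $8$ generators. By Neary~\cite{Neary15}, PCP is undecidable for source alphabets of size $5$, so I would begin with an instance $(g,h)$ where $g,h:\Sigma^* \to B^*$ and $|\Sigma|=5$. First I would reformulate this as an identity correspondence problem over a finite group alphabet by appending standard border symbols, so that a product equals $(\varepsilon,\varepsilon)$ precisely when the underlying PCP instance has a solution (the borders prevent spurious free-group cancellations). I would then apply Lemma~\ref{lem:groupEnc} to push the alphabet down to the binary group alphabet $\Gamma_2=\{c,d,\overbar c,\overbar d\}$. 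Up to this point the construction parallels the original $48$-generator reduction of~\cite{BP10}.

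The new ingredient that brings the count down to $8$ is to replace block-diagonal encodings by block-\emph{anti-diagonal} ones. The key algebraic identity is
\[
\begin{pmatrix} 0 & X \\ Y & 0 \end{pmatrix}\begin{pmatrix} 0 & U \\ V & 0 \end{pmatrix} \;=\; \begin{pmatrix} XV & 0 \\ 0 & YU \end{pmatrix},
\]
so that a product of an even number of such anti-diagonal matrices lands back in block-diagonal form, and the two diagonal blocks record products of the ``upper'' and ``lower'' halves of the generators in \emph{interleaved} order. Using the Sanov morphism $f$ embedding $\FG[\Gamma_2]$ into \sltwoz, I would define $8$ generators of the shape
\[
G_i \;=\; \begin{pmatrix} 0 & f(u_i) \\ f(v_i) & 0 \end{pmatrix},
\]
where the pairs $(u_i,v_i)$ are obtained from the Birget--Margolis images of the PCP letters together with the border symbols. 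Because one $4\times 4$ anti-diagonal generator simultaneously carries two word components, it does the work of two separate diagonal generators in the previous reduction, which is exactly where the saving by a factor of (roughly) two comes from; counting the encoded letters plus border symbols gives the bound of $8$.

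The forward implication is routine: any PCP solution produces a product whose two diagonal blocks are $f(\varepsilon)=\bm{I}_2$, and hence the overall product is $\bm{I}_4$. The main obstacle is the converse: I must rule out every spurious product that happens to equal $\bm{I}_4$ without coming from a genuine PCP solution. For this I would maintain an invariant tracking (i) whether the running partial product is in diagonal or anti-diagonal block form, and (ii) the reduced words over $\Gamma_2$ represented by the two $2\times 2$ blocks once in diagonal form. Injectivity of $f$ and of the Birget--Margolis encoding then forces the top and bottom block sequences to read off the same word of $\Sigma^+$ under $g$ and $h$ respectively; the border symbols guarantee that the witnessing word is non-empty. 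Finally I would verify that each $G_i$ can be normalised to have determinant exactly $+1$ (adjusting signs of individual blocks where needed), so that all generators lie in \slfourz, completing the reduction.
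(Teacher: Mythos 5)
Your proposed construction deviates from the paper's in a way that introduces a genuine gap. The paper uses seven \emph{block-diagonal} generators $A_{u,v}=\begin{psmallmatrix}\beta(u)&\bm{0}_2\\\bm{0}_2&\beta(v)\end{psmallmatrix}$ drawn from two carefully designed sets $W_1\cup W_2$ of word pairs over a bordered group alphabet, plus exactly \emph{one} block-anti-diagonal matrix $B$. Your plan is instead to make \emph{all eight} generators anti-diagonal. This is a real difference, and the claimed justification — that an anti-diagonal generator ``simultaneously carries two word components'' while a diagonal one does not, producing a factor-of-two saving — is incorrect: a block-diagonal generator $\begin{psmallmatrix}\beta(u)&\bm{0}_2\\\bm{0}_2&\beta(v)\end{psmallmatrix}$ already carries a pair $(u,v)$, which is exactly how \cite{BP10} encodes pairs. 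The reduction from $48$ to $8$ comes from reducing directly from $5$-letter PCP~\cite{Neary15} rather than through the identity correspondence problem, together with the two small border-word families ($|W_1|=2$, $|W_2|=5$) and a single anti-diagonal closure matrix.

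The all-anti-diagonal route also has a concrete technical obstacle you do not address. An even-length product of generators $G_i=\begin{psmallmatrix}\bm{0}_2&f(u_i)\\f(v_i)&\bm{0}_2\end{psmallmatrix}$ evaluates to
\begin{align*}
\begin{pmatrix} f(u_{i_1}v_{i_2}u_{i_3}v_{i_4}\cdots) & \bm{0}_2 \\ \bm{0}_2 & f(v_{i_1}u_{i_2}v_{i_3}u_{i_4}\cdots) \end{pmatrix},
\end{align*}
so the two diagonal blocks record \emph{interleaved} products alternating between the $u$-components and the $v$-components. For a PCP instance $(g,h)$ one wants to compare $g(a_{i_1})g(a_{i_2})\cdots$ against $h(a_{i_1})h(a_{i_2})\cdots$, which is not an interleaved word; you would have to redesign the $(u_i,v_i)$ specifically so that the interleaving cancels exactly when there is a PCP solution, and you give no such design. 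By contrast, in the paper the entire PCP simulation is done inside the commutative-with-itself block-diagonal regime (so the two blocks accumulate $u$-products and $v$-products \emph{independently}), and the single anti-diagonal $B$ only enters at the very end as a ``swap'' that, applied twice around the accumulated product $A_{q_0\overbar{q_1},\,p_0\overbar{p_1}}$, yields $\bm{I}_4$. That is precisely what lets the paper prove the converse (no spurious identity) by a short case analysis, whereas your invariant argument would have to rule out every interleaved cancellation — a substantially harder problem that the hand-waving about ``injectivity of $f$ and of Birget--Margolis'' does not discharge.

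A minor point: the determinant normalisation you mention at the end is unnecessary in the paper's scheme; the Sanov-type matrices already lie in \sltwoz, so each $4\times4$ block matrix has determinant $+1$ without further adjustment.
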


\begin{proof}
%We provide a new proof technique based on the reduction from a
We prove the claim by reducing from the \PCP. 
We shall use an encoding to embed an instance of the \PCP into 
a set of $4 \times 4$ integer matrices.
%It is well-known that $\{ a^i b \overbar{a}^i \mid i \ge 1 \}$ freely generates a free subgroup of the free group $\langle a,b \rangle$~\cite{BM08} and that the matrices $\begin{psmallmatrix}1 & 2 \\ 0 & 1 \end{psmallmatrix}$ and $\begin{psmallmatrix}1 & 0 \\ 2 &1 \end{psmallmatrix}$ freely generate a free subgroup of \sltwoz~\cite{LS77}.

Let $\alpha$ be the mapping of Lemma~\ref{lem:groupEnc} that maps elements of arbitrary group alphabet into a binary group alphabet $\Gamma_2=\{a,b,\overbar{a},\overbar{b}\}$.
We also define a monomorphism $f : \FG[\Gamma_2] \to \mathbb{Z}^{2\times 2}$ as $f(a) = \begin{psmallmatrix}1 & 2 \\ 0 & 1 \end{psmallmatrix}$, $f(\overbar{a}) = \begin{psmallmatrix}1 & -2 \\ 0 & 1 \end{psmallmatrix}$, $f(b) = \begin{psmallmatrix}1 & 0 \\ 2 &1 \end{psmallmatrix}$ and $f(\overbar{b}) = \begin{psmallmatrix}1 & 0 \\ -2 & 1 \end{psmallmatrix}$.
Recall that the matrices $\begin{psmallmatrix}1 & 2 \\ 0 & 1 \end{psmallmatrix}$ and $\begin{psmallmatrix}1 & 0 \\ 2 &1 \end{psmallmatrix}$ generate a free subgroup of \sltwoz~\cite{LS77}.
The composition of two monomorphisms~$\alpha$ and $f$
gives us the embedding from an arbitrary group alphabet into the 
special linear group~\sltwoz. 
We use the composition of two monomorphisms~$\alpha$ and $f$ to encode a set of pairs of words over an arbitrary group alphabet into a 
set of $4 \times 4$ integer matrices in \slfourz and denote it by $\beta$.

Let $(g,h)$ be an instance of the \PCP, where $g,h:\{a_1,\ldots,a_n\}^*\to \Sigma_2^*$, where $\Sigma_2=\{a,b\}$. Without loss of generality, we can assume that the solution starts with the letter $a_1$. Moreover, we assume that this is the only occurence of $a_1$.
We define the alphabet $\Gamma = \Sigma_2\cup \Sigma_2^{-1} \cup \Sigma_B\cup \Sigma_B^{-1}$, where $\Sigma_B = \{q_0, q_1, p_0,p_1\}$ is the alphabet for the border letters that enforce the form of a solution.
%
%In the following, we will write pairs of words from $(u,v)\in\FG\times\FG$ as $\frac{u}{v}$.

Let us define the following sets of words 
$W_1 \cup W_2 \subseteq \FG \times \FG$, where 
\begin{align*}
W_1 &= \left\{ (q_0 a\overbar{q_0},p_0a\overbar{p_0}),\ (q_0 b\overbar{q_0},p_0b\overbar{p_0}) \mid a,b \in \Sigma_2,\;\; q_0,p_0 \in \Sigma_B \right\} \text{ and} \\
W_2 &= \left\{ (q_0\overbar{g(a_1)}\overbar{q_1}, p_0\overbar{h(a_1)}\overbar{p_1}),\ (q_1\overbar{g(a_i)}\overbar{q_1}, p_1\overbar{h(a_i)}\overbar{p_1})\mid 1 < i \le n, \;\; q_0,q_1,p_0,p_1 \in \Sigma_B \right\}.
%W_1 &= \left\{ \tfrac{q_0}{p_0} \cdot \frac{a}{a} \cdot \frac{\overbar{q_0}}{\overbar{p_0}},\;\; \frac{q_0}{p_0} \cdot \frac{b}{b} \cdot \frac{\overbar{q_0}}{\overbar{p_0}} \mid a,b \in \Sigma_2,\;\; q_0,p_0 \in \Sigma_B \right\} \text{ and} \\
%W_2 &= \left\{ \frac{q_0}{p_0} \cdot  \frac{\overbar{g(a_1)}}{\overbar{h(a_1)}} \cdot\frac{\overbar{q_1}}{\overbar{p_1}},\;\;  \frac{q_1}{p_1} \cdot \frac{\overbar{g(a_i)}}{\overbar{h(a_i)}} \cdot \frac{\overbar{q_1}}{\overbar{p_1}}\mid 1 \le i \le n, \;\; q_0,q_1,p_0,p_1 \in \Sigma_B \right\}.
\end{align*}
Intuitively, the words from set $W_1$ are used to construct words over $\Sigma_2$ and the words from set $W_2$ to cancel them according to the instance of the \PCP.

Let us prove that $(q_0 \overbar{q_1}, p_0 \overbar{p_1}) \in \FG[W_1 \cup W_2]$ if and only 
if the \PCP has a solution. It is easy to see that any pair of non-empty words in $\FG[W_1]$ is 
of the form~$(q_0 w \overbar{q_0}, p_0 w \overbar{p_0})$ for $w \in \Sigma_2^+$. Then there exists a pair of words in $\FG[W_2]$ of the form 
$(q_0 \overbar{w} \overbar{q_1}, p_0 \overbar{w}\overbar{p_1})$ for some word~$w \in \Sigma_2^+$ if and only if the \PCP has a solution. Therefore, the pair 
of words $(q_0 \overbar{q_1}, p_0 \overbar{p_1})$ can be constructed 
by concatenating pairs of words in $W_1$ and $W_2$ if and only if the \PCP has a solution.

For each pair of words~$(u, v) \in \FG[W_1 \cup W_2]$, we define a matrix~$A_{u,v} 
$ to be $\begin{psmallmatrix} \beta(u) & \bm{0}_2 \\ \bm{0}_2 & \beta(v) \end{psmallmatrix} \in {\rm SL}(4,\mathbb{Z}),$
where $\bm{0}_2$ is the zero matrix in $\mathbb{Z}^{2 \times 2}$. Moreover, we 
define the following matrix 
\begin{align*}
B_{q_1 \overbar{q_0}, p_1 \overbar{p_0}} = \begin{pmatrix} \bm{0}_2 & \beta(q_1 \overbar{q_0}) \\ \beta(p_1 \overbar{p_0}) & \bm{0}_2 \end{pmatrix} \in {\rm SL}(4,\mathbb{Z}).
\end{align*}

Let $S$ be a matrix semigroup generated by the set
$
\{ A_{u,v}, B_{q_1 \overbar{q_0}, p_1 \overbar{p_0}} \mid (u,v) \in W_1 \cup W_2 \}.
$
We already know that the pair $(q_0 \overbar{q_1}, p_0 \overbar{p_1})$ 
of words can be generated by concatenating words in $W_1$ and $W_2$ 
if and only if the \PCP has a solution. The matrix semigroup $S$ has the 
corresponding matrix~$A_{q_0 \overbar{q_1}, p_0 \overbar{p_1}}$ and 
thus, 
\begin{align*}
\begin{pmatrix} \beta(q_0 \overbar{q_1}) & \bm{0}_2 \\ \bm{0}_2 & \beta(p_0 \overbar{p_1}) \end{pmatrix} \begin{pmatrix} \bm{0}_2 & \beta(q_1 \overbar{q_0}) \\ \beta(p_1 \overbar{p_0}) & \bm{0}_2 \end{pmatrix} = 
\begin{pmatrix}
\bm{0}_2 & \beta(\varepsilon) \\ \beta(\varepsilon) & \bm{0}_2
\end{pmatrix} \in S.
\end{align*}

Then we see that the identity matrix~$\bm{I}_4$ exists in the semigroup~$S$ as follows:
\begin{align*}
\begin{pmatrix}
\bm{0}_2 & \beta(\varepsilon) \\ \beta(\varepsilon) & \bm{0}_2
\end{pmatrix}
\begin{pmatrix}
\bm{0}_2 & \beta(\varepsilon) \\ \beta(\varepsilon) & \bm{0}_2
\end{pmatrix} = \begin{pmatrix}
\beta(\varepsilon) & \bm{0}_2  \\ \bm{0}_2 & \beta(\varepsilon) 
\end{pmatrix} = \begin{pmatrix}
\bm{I}_2 & \bm{0}_2  \\ \bm{0}_2 & \bm{I}_2
\end{pmatrix} =  \bm{I}_4 \in S.
\end{align*}

Now we prove that the identity matrix does not exist in $S$ if the \PCP 
has no solution. It is easy to see that we cannot obtain the identity matrix only
by multiplying `$A$' matrices since there is no possibility of cancelling 
every border letter. We need to multiply the matrix~$B_{q_1 \overbar{q_0}, p_1 \overbar{p_0}}$
with a product of `$A$' matrices at some point to reach the identity matrix. 
Note that the matrix~$B_{q_1 \overbar{q_0}, p_1 \overbar{p_0}}$ cannot be the first matrix of the product, followed by the `$A$' matrices, because the upper right block of $B_{q_1 \overbar{q_0}, p_1 \overbar{p_0}}$, 
which corresponds to the first word of the pair, should be multiplied with the lower right block of `$A$' 
matrix, which corresponds to the second word of the pair. 

Suppose that the `$A$' matrix is of form $\begin{psmallmatrix} \beta(q_0 u \overbar{q_1}) & \bm{0}_2 \\ \bm{0}_2 & \beta(p_0 v \overbar{p_1}) \end{psmallmatrix}$. 
Since the \PCP instance has no solution, either $u$ or $v$ is not the empty 
word. We multiply $B_{q_1 \overbar{q_0}, p_1 \overbar{p_0}}$ to the matrix 
and then obtain the following matrix:
{\small\begin{align*}
\begin{pmatrix} \beta(q_0 u \overbar{q_1}) & \bm{0}_2 \\ \bm{0}_2 & \beta(p_0 v \overbar{p_1}) \end{pmatrix}
\begin{pmatrix} \bm{0}_2 & \beta(q_1 \overbar{q_0}) \\ \beta(p_1 \overbar{p_0}) & \bm{0}_2 \end{pmatrix}
= \begin{pmatrix}
\bm{0}_2 & \beta(q_0 u \overbar{q_0}) \\ \beta(p_0 v \overbar{p_0}) & \bm{0}_2
\end{pmatrix}.
\end{align*}}
We can see that either the upper right part or the lower left part cannot be 
$\beta(\varepsilon)$, which actually corresponds to the identity matrix in $\mathbb{Z}^{2\times 2}$. 
Now the only possibility of reaching the identity matrix is to multiply matrices which 
have ${\rm SL}(2,\mathbb{Z})$ matrices in the anti-diagonal coordinates like $B_{q_1 \overbar{q_0}, p_1 \overbar{p_0}}$. However, we cannot cancel the parts because the upper right block (the lower left block)
of the left matrix is multiplied with the lower left block (the upper right block) of the right matrix 
as follows:
{\small\begin{align*}
\begin{pmatrix}
\bm{0}_2 & A \\ B & \bm{0}_2
\end{pmatrix}
\begin{pmatrix}
\bm{0}_2 & C \\ D & \bm{0}_2
\end{pmatrix} 
= 
\begin{pmatrix}
AD & \bm{0}_2 \\ \bm{0}_2 & BC
\end{pmatrix},
\end{align*}}
where $A,B,C$ and $D$ are matrices in $\mathbb{Z}^{2 \times 2}$.
As the first word of the pair is encoded in the upper right block of the matrix and 
the second word is encoded in the lower left block, it is not difficult to see that 
we cannot cancel the remaining 
blocks.% by such multiplications.
%\begin{figure}[t]
%\centering
%\begin{tikzpicture}[->,>=stealth',shorten >=1pt,auto,node distance=3.5cm,
%                    semithick]
%
%\node[elliptic state]         (A) {$q_0/p_0$};
%\node[elliptic state]         (B) [right of=A] {$q_1/p_1$};
%%\node[elliptic state]         (C) [right of=B] {$q_2/p_2$};
%
%  \path (A)    edge     [loop above]         node {$(a,a),(b,b)$} (A)
%   edge     []         node {$(\overbar{u_1}, \overbar{v_1})$} (B)
%(B) edge     [loop above]         node {$(\overbar{u_{i}},\overbar{v_{i}}),\;\;1 \le i \le n$} (B);
%\end{tikzpicture}
%\caption{Structure of the automaton encoded by the border letters in $\Gamma_B$}
%\label{fig:structure2}
%\end{figure}

Currently, the undecidability bound for the \PCP is five \cite{Neary15} and thus the semigroup $S$ is generated by eight matrices. Recall that in the beginning of the proof, we assumed that the letter $a_1$ of the \PCP is used exacly once and is the first letter of a solution. This property is in fact present in \cite{Neary15}.
%The least number of pairs for which the PCP is undecidable is five and our construction uses four additional matrices.
\end{proof}

%Note that the proof above also applies to the special case of 
Consider 
the membership problem called the {\em special diagonal membership problem}, 
where the task is to determine whether a scalar multiple of the identity matrix exists in a 
given matrix semigroup. 
The most recent undecidability bound in $\Z^{4\times4}$ is shown to be 14 by 
Halava et al.~\cite{HHH07}. % using the Claus instances of PCP. 
We improve the bound to eight, as
%We further improve 
%the bound to 9 using the same argument as in the proof of 
%\autoref{thm:identity4}. 
the identity matrix is the only diagonal matrix of the semigroup $S$ in the proof of Theorem~\ref{thm:identity4}.
We also prove that the identity problem is undecidable 
in \HQS as well by replacing the composition $f \circ \alpha$ of mappings 
with a mapping from a group alphabet to the set of rational 
quaternions; see \cite{BP08}.

\begin{corollary}
For a given semigroup $S$ generated by eight $4 \times 4$ 
integer matrices, determining whether there exists any diagonal 
matrix in $S$ is undecidable. %This holds even for $n = 9$.
\end{corollary}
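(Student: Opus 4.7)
The plan is to reuse verbatim the eight-generator semigroup $S$ constructed in the proof of Theorem~\ref{thm:identity4}, and to show that $\bm{I}_4$ is the \emph{only} diagonal matrix in $S$. Once this is established, deciding whether $S$ contains a diagonal matrix is equivalent to deciding whether $\bm{I}_4\in S$, which is undecidable by Theorem~\ref{thm:identity4}.

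The generators of $S$ split into block-diagonal matrices $A_{u,v}$ and the single block-anti-diagonal matrix $B_{q_1\overbar{q_0},p_1\overbar{p_0}}$. A parity argument on the number of $B$-factors in a product is immediate: the resulting matrix is block-anti-diagonal precisely when this number is odd. In that case all four diagonal entries of the $4\times 4$ matrix vanish, but any element of $S$ has determinant $1$, so it cannot be diagonal. Hence we may restrict to products with an even number of $B$-factors, which take the block-diagonal form $\begin{psmallmatrix} M_1 & \bm{0}_2 \\ \bm{0}_2 & M_2 \end{psmallmatrix}$ with $M_1,M_2 \in \sltwoz$.

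A short induction on the number of $B$-factors, tracking how each $B$ swaps the two components of the word pairs in the two blocks, shows that $M_1$ and $M_2$ are of the form $\beta(w_1)$ and $\beta(w_2)$ for reduced words $w_1, w_2$ over $\Gamma$. In particular, $M_1, M_2$ lie in the free subgroup $H := \langle f(a), f(b)\rangle \le \sltwoz$. For $\begin{psmallmatrix} M_1 & \bm{0}_2 \\ \bm{0}_2 & M_2 \end{psmallmatrix}$ to be a diagonal $4\times 4$ matrix, both $M_1$ and $M_2$ must be diagonal elements of \sltwoz, that is, belong to $\{\bm{I}_2, -\bm{I}_2\}$. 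Since $H$ is free of rank~$2$, hence torsion-free, and $-\bm{I}_2$ has order~$2$, we conclude $M_1 = M_2 = \bm{I}_2$ and the whole product equals $\bm{I}_4$.

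The main obstacle is the torsion-freeness argument ruling out $-\bm{I}_2$ from the subgroup $H$; the rest is routine block-matrix bookkeeping. Combining the observations, $S$ contains a diagonal matrix if and only if $\bm{I}_4 \in S$ if and only if the underlying \PCP instance has a solution, proving undecidability with the same eight generators as in Theorem~\ref{thm:identity4}.
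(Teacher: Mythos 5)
Your proof is correct and follows the same approach as the paper, which simply asserts that $\bm{I}_4$ is the only diagonal matrix in the semigroup $S$ from Theorem~\ref{thm:identity4} and hence inherits the undecidability. You supply the details the paper leaves implicit (block-parity, the blocks lying in $\beta(\FG)$, and torsion-freeness ruling out $-\bm{I}_2$), all of which are sound.
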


%It is known 
%that there exists a mapping from any alphabet $\Sigma$ to the field \HQ of 
%rational quaternions~\cite{Swierczkowski94}.
%Bell and Potapov~\cite{BellP08a} have shown that several reachability problems such as membership, vector reachability, freeness problems 
%are undecidable in two-dimensional rational quaternion matrix semigroups 
%based on the embeddings.

\begin{corollary}
For a given semigroup $S$ generated by eight $2 \times 2$ 
rational quaternion matrices, determining whether there exists the identity
matrix in $S$ is undecidable.% This holds even for $n = 9$. 
\end{corollary}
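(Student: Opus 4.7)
The plan is to mimic the construction in the proof of Theorem~\ref{thm:identity4} essentially verbatim, but to replace every $2\times 2$ integer block arising from the composition $f \circ \alpha$ by a single rational quaternion. Concretely, the $4\times 4$ integer matrices $A_{u,v}$ and $B_{q_1\overline{q_0},p_1\overline{p_0}}$, which have $2\times 2$ blocks on the diagonal (respectively anti-diagonal), will be replaced by $2\times 2$ matrices over $\mathbb{H}(\mathbb{Q})$ whose entries are rational quaternions encoding the corresponding words. Since block multiplication of the $4\times 4$ matrices proceeds exactly like ordinary multiplication of $2\times 2$ matrices when the blocks commute with scalars, the algebraic manipulations that produce the zero anti-diagonal and the identity blocks will carry over unchanged.

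The first step is to replace $f:\FG[\Gamma_2]\to\Z^{2\times 2}$ by a monomorphism $\gamma:\FG[\Gamma_2]\to\mathbb{H}(\mathbb{Q})^{\times}$ from the free group on a binary group alphabet into the multiplicative group of rational quaternions. Such an injective morphism is standard and is the one used in \cite{BP08}; for instance, one may send the generators of $\FG[\Gamma_2]$ to suitably chosen unit or scaled rational quaternions whose images generate a free subgroup. Composing $\gamma$ with the Birget--Margolis encoding $\alpha$ of Lemma~\ref{lem:groupEnc} gives a monomorphism $\delta=\gamma\circ\alpha$ from an arbitrary group alphabet into $\mathbb{H}(\mathbb{Q})^{\times}$.

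The second step is to redefine, for each pair $(u,v)\in \FG[W_1\cup W_2]$, the matrix
\[
A_{u,v} = \begin{pmatrix}\delta(u) & 0 \\ 0 & \delta(v)\end{pmatrix} \in \mathbb{H}(\mathbb{Q})^{2\times 2},
\]
and to set
\[
B_{q_1\overline{q_0},p_1\overline{p_0}} = \begin{pmatrix}0 & \delta(q_1\overline{q_0}) \\ \delta(p_1\overline{p_0}) & 0\end{pmatrix} \in \mathbb{H}(\mathbb{Q})^{2\times 2},
\]
with $W_1, W_2$ exactly as in the proof of Theorem~\ref{thm:identity4}. Injectivity of $\delta$ guarantees that two products of generators multiply to the identity matrix in $\mathbb{H}(\mathbb{Q})^{2\times 2}$ iff the underlying pairs of words collapse to the empty word on both coordinates. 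The same PCP-based reasoning as before then shows that the identity matrix belongs to the semigroup generated by this eight-element set iff the underlying \PCP instance has a solution.

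The only point that needs a sanity check is that the block-level computations in the impossibility direction still go through when the $2\times 2$ blocks are replaced by single (non-commutative) quaternions: the key identity
\[
\begin{pmatrix}0 & A \\ B & 0\end{pmatrix}\begin{pmatrix}0 & C \\ D & 0\end{pmatrix} = \begin{pmatrix}AD & 0 \\ 0 & BC\end{pmatrix}
\]
holds over any (not necessarily commutative) ring, so the argument ruling out cancellation between successive anti-diagonal matrices is unaffected, and the separation of the ``first coordinate'' into the upper-right entry and the ``second coordinate'' into the lower-left entry is preserved by the injectivity of $\delta$. As the number of generators is dictated only by the size of the \PCP instance (five morphism pairs $(g(a_i),h(a_i))$ together with the constructions in $W_1$, encoded via the border letters as in Theorem~\ref{thm:identity4}), the same bound of eight generators is obtained. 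The main potential obstacle is simply to confirm that the free-group embedding into $\mathbb{H}(\mathbb{Q})$ from \cite{BP08} is indeed a monomorphism on the binary group alphabet $\Gamma_2$, which is exactly what is proved there.
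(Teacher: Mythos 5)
Your proposal matches the paper's own approach: the paper proves this corollary with a single remark, stating that the composition $f\circ\alpha$ in the proof of Theorem~\ref{thm:identity4} can be replaced by the free-group embedding into rational quaternions from \cite{BP08}, which is exactly the substitution $\gamma$ you describe. Your write-up just spells out the details the paper leaves implicit (the block identity over a non-commutative ring, injectivity of $\delta=\gamma\circ\alpha$, and the generator count), but the underlying construction and argument are identical.
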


\section{Concluding remarks}
In this paper, we considered the identity problem in matrix semigroups and provided a better bound on the number of matrices (reducing from 48 to 8) in the generator set for $4\times4$ integer matrices, where the problem is undecidable. More importantly, we showed that there is no embedding of pairs of words into ${\rm SL}(3,\Z)$. While this does not imply that the identity problem is decidable, it does provide strong evidence about decidability of computational problems in \slthreez. % as most of the undecidability proofs for matrix semigroups reduce from the \PCP. 
Then we showed that the identity problem is decidable for Heisenberg group \heis, which is an important subgroup of ${\rm SL}(3,\mathbb{Z})$, and generalized the result to \heisnq for any $n\in\N$. The natural follow-up question is whether other standard matrix problems, such as membership, are decidable in $\heis$ or whether the identity problem is decidable for ${\rm H}(3,\mathbb{C})$.

\bibliographystyle{plainurl}
\bibliography{identity}

\end{document}